\newif\iflong
\newtheorem{theorem}{Theorem}[section]
\newtheorem{lemma}[theorem]{Lemma}
\theoremstyle{definition}
\newtheorem{definition}{Definition}
\newcommand{\Span}{\text{Span}}
\DeclareMathOperator*{\argmax}{arg\,max}
\title{Relative Survivable Network Design}
 \author{Michael Dinitz\thanks{Supported in part by NSF award CCF-1909111.} \\ Johns Hopkins University\\ \texttt{mdinitz@cs.jhu.edu}\and
Ama Koranteng\footnotemark[1] \\ Johns Hopkins University\\ \texttt{akorant1@jhu.edu} \and Guy Kortsarz\\Rutgers University, Camden\\ \texttt{guyk@camden.rutgers.edu}}
\date{}
\begin{document}
\begin{titlepage}
\maketitle

\begin{abstract} 
    One of the most important and well-studied settings for network design is edge-connectivity requirements.  This encompasses uniform demands such as the Minimum $k$-Edge-Connected Spanning Subgraph problem ($k$-ECSS), as well as nonuniform demands such as the Survivable Network Design problem.  A weakness of these formulations, though, is that we are not able to ask for fault-tolerance larger than the connectivity.  Taking inspiration from recent definitions and progress in graph spanners, we introduce and study new variants of these problems under a notion of \emph{relative} fault-tolerance.  Informally, we require not that two nodes are connected if there are a bounded number of faults (as in the classical setting), but that two nodes are connected if there are a bounded number of faults \emph{and the two nodes are connected in the underlying graph post-faults}.  That is, the subgraph we build must ``behave'' identically to the underlying graph with respect to connectivity after bounded faults. 
    
    We define and introduce these problems, and provide the first approximation algorithms: a $(1+4/k)$-approximation for the unweighted relative version of $k$-ECSS, a $2$-approximation for the weighted relative version of $k$-ECSS, and a $27/4$-approximation for the special case of Relative Survivable Network Design with only a single demand with a connectivity requirement of $3$.  To obtain these results, we introduce a number of technical ideas that may of independent interest.  First, we give a generalization of Jain's iterative rounding analysis that works even when the cut-requirement function is not weakly supermodular, but instead satisfies a weaker definition we introduce and term \emph{local} weak supermodularity.  Second, we prove a structure theorem and design an approximation algorithm utilizing a new decomposition based on \emph{important separators}, which are structures commonly used in fixed-parameter algorithms that have not commonly been used in approximation algorithms.
\end{abstract}
\thispagestyle{empty}
\end{titlepage}

\renewcommand{\arraystretch}{1.5} 

\section{Introduction}
Fault-tolerance has been a central object of study in approximation algorithms, particularly for network design problems where the graphs that we study represent some physical objects which might fail (communication links, transportation links, etc.).  In these settings it is natural to ask for whatever object we build to be fault-tolerant.  The precise definition of ``fault-tolerance" is different in different settings, but a common formulation is edge fault-tolerance, which typically takes the form of edge connectivity.  Informally, these look like guarantees of the form ``if up to $k$ edges fail, then the nodes I want to be connected are still connected.''  For example, consider the following two classical problems.
\begin{itemize}
    \item The Minimum $k$-Edge Connected Subgraph problem ($k$-ECSS), where we are given a graph $G$ and a value $k$ and are asked to find the $k$-edge connected subgraph of $G$ of minimum size (or weight).  In other words, if fewer than $k$ edges fail, the graph should still be connected.
    \item The more general Survivable Network Design problem (SND, sometimes referred to as Generalized Steiner Network), where we are given a graph $G$ and demands $\{(s_i, t_i, k_i)\}_{i \in [\ell]}$, and we are supposed to find the minimum-weight subgraph $H$ of $G$ so that there are at least $k_i$ edge-disjoint paths between $s_i$ and $t_i$ for every $i \in [\ell]$.  In other words, for every $i \in [\ell]$, if fewer than $k_i$ edges fail then $s_i$ and $t_i$ will still be connected in $H$ even after failures.
\end{itemize}

Both of these problems have been studied extensively (for a small sample, see~\cite{WGMV95,Jain01,CT00,GGTW09}), and are paradigmatic examples of network design problems.  But there is a different notion of fault-tolerance which is stronger, and in some ways more natural: \emph{relative} fault-tolerance.  Relative fault-tolerance makes guarantees that rather than being absolute (``if at most $k$ edges fail the network still functions") are relative to an underlying graph or system (``if at most $k$ edges fail, the subgraph functions just as well as the original graph post-failures").  This allows us to generalize the traditional definition: if the underlying graph has strong enough connectivity properties then the two definitions are the same, but the relative version allows us to make interesting and nontrivial guarantees even when the underlying graph does not have strong connectivity properties.  

For example, the definition of Survivable Network Design has an important limitation: if $G$ itself can only support a small number of edge disjoint $s_i - t_i$ paths (e.g., $3$), then of course we cannot ask for a subgraph with more edge-disjoint paths.  There simply would be no feasible solution.  But this is somewhat unsatisfactory.  For example, while we cannot guarantee that $s_i$ and $t_i$ would be connected after \emph{any} set of $5$ faults (since those faults may include an $(s_i, t_i)$ cut of size $3$), clearly there could be \emph{some} set of $5$ faults  which do not in fact disconnect $s_i$ from $t_i$ in $G$.  And if these faults occur, it is natural to want $s_i$ and $t_i$ to still be connected in (what remains) of $H$.  In other words: just because there exists a small cut, why should we give up on being tolerant to a larger number of faults which do not contain that cut?

\subsection{Our Results and Techniques}
In this paper we initiate the study of relative fault-tolerance in network design, by defining relative versions of Survivable Network Design and $k$-ECSS.

\begin{definition} \label{def:RSND}
In the Relative Survivable Network Design problem (RSND), we are given a graph $G = (V, E)$ with edge weights $w : E \rightarrow \mathbb{R}_{\geq 0}$ and demands $\{(s_i, t_i, k_i)\}_{i \in [\ell]}$.  A feasible solution is a subgraph $H$ of $G$ where for all $i \in [\ell]$ and $F \subseteq E$ with $|F| < k_i$, if there is a path in $G \setminus F$ from $s_i$ to $t_i$ then there is also a path in $H \setminus F$ from $s_i$ to $t_i$.  Our goal is to find the minimum weight feasible solution.  
\end{definition}

\begin{definition} \label{def:kEFTS}
The $k$-Edge Fault-Tolerant Subgraph problem ($k$-EFTS) is the special case of RSND where there is a demand between all pairs and every $k_i$ is equal to $k$.  In other words, we are given a graph $G = (V, E)$ with edge weights $w : E \rightarrow \mathbb{R}_{\geq 0}$.  A feasible solution is a subgraph $H$ of $G$ where for all $F \subseteq E$ with $|F| < k$, any two nodes which have a path between them in $G \setminus F$ also have a path between them in $H \setminus F$ (the connected components of $H \setminus F$ are identical to the connected components of $G \setminus F$).  Our goal is to find the minimum weight feasible solution. 
\end{definition}

For both of these problems, we say that they are \emph{unweighted} if all edges have the same weight (or equivalently $w(e) = 1$ for all $e \in E$).  Note that if $s_i$ and $t_i$ are $k_i$-connected in $G$ for every $i \in [\ell]$, then RSND is exactly the same as SND, and if $G$ is $k$-connected then $k$-EFTS is exactly the same as $k$-ECSS.  Hence we have strictly generalized these classical problems.

We note that the fault-tolerance we achieve is really ``one less'' than the given number (there are strict inequalities in the definitions).  This is ``off-by-one'' from the related relative fault-tolerance literature~\cite{CLPR10,BDR21,BDR22}, but makes the connection to SND and $k$-ECSS cleaner.  

\paragraph{Difficulties.} Before discussing our results or techniques, we briefly discuss what makes these problems difficult.  The non-relative versions are classical and have been studied extensively: why can't we just re-use the ideas and techniques developed for them?  Particularly since there is only a difference in the setting when there are small cuts in the graph, in which case we already know that the edges of those cuts must be included in any feasible solution?  

Unfortunately, it turns out that this seemingly minor change has a dramatic impact on the structure of the problem.  Most importantly, the \emph{cut requirement function} has dramatically different properties.  In $k$-ECSS, Menger's theorem implies that $H$ is a valid solution if and only if for all $S \subset V$ with $S \neq \emptyset$, there are at least $k$ edges between $S$ and $\bar S$.  Hence we can rephrase $k$-ECSS as the problem of finding a minimum cost subgraph such that that there are at least $f(S)$ edges across the cut $(S, \bar S)$ for all $S \subset V$ with $S \neq \emptyset$, where $f(S) = k$.  Similarly, we can rephrase SND as the same problem but where $f(S) = \max_{i \in[\ell] : s_i \in S, t_i \not\in S} k_{i}$ (as was shown in~\cite{WGMV95}).  Thus both problems can be thought of as choosing a minimum cost subgraph subject to satisfying some cut-requirement covering function $f : 2^V \rightarrow \mathbb{R}$.  So a natural starting point for any approximation algorithm is to write the natural covering LP relaxation which has a covering constraint of $f(S)$ for every cut $S$.  And indeed, the covering LP using the cut-requirement function was the starting point for both the primal-dual $O(\max_{i \in \ell} k_i)$-approximation for SND of~\cite{WGMV95} and the seminal $2$-approximation for SND using iterative rounding due to Jain~\cite{Jain01}.  It has also been used for $k$-ECSS~\cite{GGTW09}, although (unlike SND) there are also purely combinatorial approximations~\cite{CT00}.

Hence the natural starting point for us to study RSND and $k$-EFTS would be to formulate them in terms of cut-requirement functions and try the same approaches as were used in SND and $k$-ECSS.  But this is easier said than done.  The functions are a little more complicated, but it is not too hard to construct a cut requirement function that characterizes feasible solutions.  However, in order to use the iterative rounding technique of Jain~\cite{Jain01} (or any of the weaker techniques which it superceded), the cut requirement function needs to have a structural property known as \emph{weak} (or \emph{skew}) \emph{supermodularity}~\cite{Jain01}.  This turns out to be crucial, and there are still (to the best of our knowledge) no successful uses of iterative rounding in settings without weak supermodularity. And unfortunately, it turns out that our cut requirement functions \emph{are not} weakly supermodular.  So while we can phrase our problems as satisfying a cut requirement function, we cannot actually use iterative rounding, uncrossing, or any other part of the extensive toolkit that has grown around~\cite{Jain01}.  

\paragraph{Our approaches.} We get around this difficulty in two ways.  For $k$-EFTS, we define a new property of cut requirement functions which we call \emph{local} weak supermodularity, and prove that our cut requirement function has this property and that it is sufficient for iterative rounding.  This is, to the best of our knowledge, \emph{the first} use of iterative rounding without weak supermodularity.  For RSND with a single demand, we use an entirely different combinatorial approach based on decomposing the graph into a chain of connected components using \emph{important separators}~\cite{M11}, an important tool from fixed-parameter tractability that, to the best of our knowledge, has not been used before in approximation algorithms.  

\subsubsection{$k$-Edge Fault-Tolerant Subgraph} \label{intro:kEFTS}
We begin in Section~\ref{sec:kEFTS} with $k$-EFTS, where we prove the following two theorems.

\begin{theorem} \label{thm:kEFTS-weighted}
There is a polynomial-time $2$-approximation for the $k$-EFTS problem.
\end{theorem}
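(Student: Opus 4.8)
The plan is to rephrase $k$-EFTS as a cut-covering problem and then solve it by iterative LP rounding in the style of Jain~\cite{Jain01}, the twist being that the relevant cut-requirement function is not weakly supermodular, so the rounding analysis has to be extended. First I would show that a subgraph $H \subseteq G$ is feasible if and only if $|\delta_H(S)| \geq f(S) := \min\{k,\,|\delta_G(S)|\}$ for every $\emptyset \neq S \subsetneq V$, where $\delta_G(S)$ (resp.\ $\delta_H(S)$) denotes the edges of $G$ (resp.\ $H$) with exactly one endpoint in $S$. For necessity: if $|\delta_H(S)| < f(S)$, then $F := \delta_H(S)$ has $|F| < k$ and leaves $S,\bar S$ connected in $G \setminus F$ (as $|\delta_G(S)| \ge f(S) > |F|$) but not in $H \setminus F$. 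For sufficiency: if some $F$ with $|F| < k$ disconnects a pair that is connected in $G \setminus F$, let $S$ be a component of $H \setminus F$ separating that pair; then $\delta_H(S) \subseteq F$, so $|\delta_H(S)| < k$, forcing $f(S) = |\delta_G(S)|$ and hence $\delta_G(S) = \delta_H(S) \subseteq F$, a contradiction. This yields the natural relaxation: minimize $\sum_e w(e)\,x_e$ subject to $\sum_{e \in \delta_G(S)} x_e \geq f(S)$ for all $\emptyset \neq S \subsetneq V$ and $0 \le x_e \le 1$. As an immediate corollary, every edge lying in a cut of $G$ of size at most $k$ is in every feasible solution; these ``forced'' edges can be added for free (their cost is at most $\mathrm{OPT}$) and removed in a preprocessing step, which kills off the most troublesome small cuts and ensures every remaining constraint is on a cut of $G$-size $> k$.

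\textbf{Local weak supermodularity.} Since $f = \min\{k,\,|\delta_G(\cdot)|\}$ is not weakly supermodular, Jain's uncrossing argument does not apply verbatim; the core of the proof is to isolate a weaker property — \emph{local} weak supermodularity — that $f$ does satisfy and that still suffices. I would pinpoint where weak supermodularity enters Jain's analysis, namely in uncrossing a crossing pair of tight sets $A,B$ into a non-crossing pair while keeping the relevant family of tight constraints spanning. The plan is to prove that for our $f$, whenever \emph{both} standard weak-supermodular inequalities fail for a crossing pair, we must have $f(A) = f(B) = k$, and in that ``fully capped'' regime one can still uncross in a controlled way; the non-capped cases are dispatched by the standard inequalities, using that the cut function $|\delta_G(\cdot)|$ is both submodular and posimodular (both hold unconditionally). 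One also checks that the residual function $f(\cdot) - |\delta_D(\cdot)|$ obtained after committing to an edge set $D$ retains local weak supermodularity, for the same reason residuals retain ordinary weak supermodularity: subtracting a subgraph cut function preserves the governing inequalities.

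\textbf{Iterative rounding and costing.} With local weak supermodularity in hand, run the usual loop: solve the LP for the current residual requirement, pick an edge with $x_e \ge \tfrac12$, add it to the solution, update $f$ to its residual, and repeat until the requirement is met. The charging is standard — each rounded edge is paid for at twice its LP value — and the usual induction comparing the residual optimum to the current LP optimum gives total cost at most $2 \cdot \mathrm{OPT}_{\mathrm{LP}} \le 2 \cdot \mathrm{OPT}$; the running time is polynomial since each iteration removes at least one edge from the support. Everything therefore reduces to the $\tfrac12$-integrality statement: any basic feasible solution $x$ with $0 < x_e < 1$ for all $e$ has some $e$ with $x_e \ge \tfrac12$.

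\textbf{The crux and the main obstacle.} For that statement I would follow Jain: the tight constraints span the support space; uncross to a laminar family $\mathcal{L}$ of tight sets whose characteristic vectors are independent and span the tight-constraint space, so $|\mathcal{L}|$ equals the support size; assign each support edge one unit of fractional token, distributed to the minimal laminar sets containing its endpoints; then, assuming $x_e < \tfrac12$ for every $e$, derive a contradiction by a discharging argument over the laminar forest. The hard part — and the place the argument could break — is the uncrossing: without full weak supermodularity one cannot always replace $A,B$ by $\{A\cap B, A\cup B\}$ or $\{A\setminus B, B\setminus A\}$ and stay tight and spanning, so I would need to show that local weak supermodularity (together with the forced-edge preprocessing) still guarantees at least one legal uncrossing move for every residual crossing pair, and that the extra fully-capped case contributes nothing problematic to the token count. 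Making this bookkeeping go through so that Jain's counting lemma survives under the weakened hypothesis is the main technical content behind Theorem~\ref{thm:kEFTS-weighted}.
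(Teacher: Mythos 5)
Your overall strategy---phrase $k$-EFTS as a cut-covering LP, observe that the naive requirement function $\min(k, |\delta_G(\cdot)|)$ is not weakly supermodular, introduce a ``local'' variant that survives Jain's uncrossing, and run iterative rounding---is essentially what the paper does, and both your feasibility characterization and the forced-edge observation are correct. But you have mischaracterized exactly when weak supermodularity can fail, and this leaves a genuine gap in the uncrossing step.

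The claim that ``whenever both weak-supermodular inequalities fail for a crossing pair, $f(A)=f(B)=k$'' is false for $f = \min(k,|\delta_G(\cdot)|)$. Take $k=10$ and a crossing pair $A,B$ with boundary counts $|\delta_G(A\setminus B, V\setminus(A\cup B))|=3$, $|\delta_G(A\setminus B, B\setminus A)|=0$, $|\delta_G(A\setminus B, A\cap B)|=1$, $|\delta_G(B\setminus A, V\setminus(A\cup B))|=99$, $|\delta_G(B\setminus A, A\cap B)|=2$, $|\delta_G(A\cap B, V\setminus(A\cup B))|=0$. Then $f(A)=5$, $f(B)=10$, $f(A\cap B)+f(A\cup B)=3+10=13$, and $f(A\setminus B)+f(B\setminus A)=4+10=14$, both strictly below $f(A)+f(B)=15$ even though $f(A)<k$. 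So submodularity and posimodularity of $|\delta_G(\cdot)|$ do \emph{not} dispatch the non-capped case: the $\min$ misbehaves whenever an uncrossed piece has small boundary, regardless of whether $A$ and $B$ themselves are capped. The dichotomy your argument hangs on is not the right one.

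The correct dichotomy is ``empty cut'' versus ``nonempty cut'' with respect to the forced set $F' \supseteq F$, and it requires working with the \emph{shifted} requirement $f_{F'}(S) = \min(k,|\delta_G(S)|) - |\delta_G(S)\cap F'|$ over variables $E\setminus F'$, not the bare $\min$. Because every small cut is entirely forced, every nonempty cut is automatically large, so $\min(k,|\delta_G(\cdot)|)=k$ wherever the constraint is nontrivial and the LWS inequality for $f_{F'}$ becomes a linear bookkeeping on the six boundary counts. The step your outline takes on faith---that uncrossing never produces an empty cut, whose constraint row is the zero vector and cannot participate in a spanning laminar family---is precisely the paper's Lemma~\ref{lem:big_sets}: if $A,B$ are nonempty then either $A\setminus B, B\setminus A$ are both nonempty or $A\cap B, A\cup B$ are both nonempty. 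You flag this as ``the place the argument could break,'' which is correct, but you neither state nor prove it, and the LWS characterization you do propose is incorrect; so the uncrossing as written does not go through. (Note also that ``large'' does not imply ``nonempty'': a large cut can have all its edges belonging to small cuts, so the distinction must be drawn via $F'$ rather than via $|\delta_G(\cdot)|$.) With Lemma~\ref{lem:big_sets} and Theorem~\ref{thm:LWS} supplied, the rest of your plan (laminar uncrossing, Jain's counting, residual preservation, the factor-$2$ charging) is sound and matches the paper.
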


\begin{theorem} \label{thm:kEFTS-unweighted}
There is a polynomial-time $(1+4/k)$-approximation for the unweighted $k$-EFTS problem.
\end{theorem}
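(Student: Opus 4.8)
The plan is to follow the matching‑based template for unweighted $k$-ECSS (in the spirit of the combinatorial approach of~\cite{CT00}), carefully adapted to the fact that the host graph $G$ need not itself be $k$-edge-connected, so that iterative rounding (which would only give factor $2$) is avoided entirely.

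\textbf{Step 1 (forcing and the lower bound).} First I would prove a forcing lemma: every feasible $H$ satisfies $d_H(v)\ge\min(d_G(v),k)$ for all $v$, and more generally contains every edge of $G$ lying in an edge‑cut of $G$ of size at most $k$. For the degree bound, if $d_H(v)<\min(d_G(v),k)$ then setting $F=\delta_H(v)$ gives $|F|<k$, isolates $v$ in $H\setminus F$ but not in $G\setminus F$ (since $d_G(v)>|F|$), so the component structures differ; the same isolation argument on an arbitrary cut proves the forced‑edge claim and also shows that any subgraph $H$ with $|\delta_H(S)|\ge\min(|\delta_G(S)|,k)$ for all $\emptyset\ne S\subsetneq V$ is feasible. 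Writing $D$ for the set of forced edges and $n':=|\{v:d_G(v)\ge k\}|$, we get $\mathrm{OPT}\ge\max\bigl(|D|,\tfrac12\sum_v\min(d_G(v),k)\bigr)\ge\tfrac{k}{2}n'$, and it suffices for the algorithm to meet the covering target $\hat f(S):=\min(|\delta_G(S)|,k)$.

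\textbf{Step 2 (a cheap backbone).} Compute a minimum‑size $H_0\subseteq E(G)$ subject to: $H_0$ connected on each component of $G$; $d_{H_0}(v)\ge\min(d_G(v),k)$ for all $v$; and $D\subseteq H_0$. This is a lower‑degree‑constrained connected‑subgraph problem with forced edges, solvable exactly in polynomial time by matching/$b$-matching. Since $\mathrm{OPT}$ is connected (feasibility with $F=\emptyset$), meets the degree bound, and contains $D$, it is a feasible point, so $|H_0|\le\mathrm{OPT}$. Because $D\subseteq H_0$, the subgraph $H_0$ already satisfies $\hat f$ on every cut of $G$ of size at most $k$, and a short degree count (after setting aside the low‑degree vertices, all of whose incident edges lie in $D\subseteq H_0$) shows that any cut $S$ still violating $\hat f$ is large on both sides, $|S|,|\bar S|\ge k+1$.

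\textbf{Step 3 (augmentation --- the main obstacle).} The crux is a structural lemma: $H_0$ can be extended to a subgraph satisfying $\hat f$ everywhere (hence feasible, by Step 1) by adding at most $2n'$ edges of $G$, and such an augmenting set is computable in polynomial time. Such an augmentation exists since $G$ itself satisfies $\hat f$; the content is the bound $2n'$. The intuition is that the remaining deficient cuts are one‑sided‑large and confined to the subgraph induced by the $n'$ high‑degree vertices, so their inclusion‑minimal members form a laminar (chain‑like) family each leaf of which is repairable by a bounded number of $G$-edges, charged to distinct high‑degree vertices. The difficulty --- and the reason we cannot simply run the iterative‑rounding machinery of~\cite{Jain01} on the residual instance --- is that $\hat f$ is not weakly supermodular, so deficient cuts cannot be uncrossed in the usual way; I would either invoke the local weak supermodularity developed for Theorem~\ref{thm:kEFTS-weighted} to extract the laminar structure of a maximal deficient family, or argue directly from the structure of the edge‑cuts of $G$ of size between $k+1$ and $2k$ via a connectivity‑augmentation (Frank‑type) min--max bound. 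I expect essentially all the real work of the proof to be here.

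\textbf{Step 4 (wrap-up).} Output $\mathrm{ALG}:=H_0\cup(\text{augmenting edges})$; it is feasible and $|\mathrm{ALG}|\le|H_0|+2n'\le\mathrm{OPT}+2n'$. With $\mathrm{OPT}\ge\tfrac{k}{2}n'$ we get $\mathrm{ALG}/\mathrm{OPT}\le 1+2n'/\mathrm{OPT}\le 1+4/k$; and if $n'=0$ then $G$ has maximum degree $<k$, every edge is forced, and $\mathrm{ALG}=\mathrm{OPT}=E(G)$, giving ratio $1$.
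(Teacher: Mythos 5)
Your high-level arithmetic matches the paper's: both arguments establish $\mathrm{OPT}\ge\tfrac{k}{2}n'$ (with $n'$ the number of vertices of degree $\ge k$) and then output a solution of size $\le\mathrm{OPT}+2n'$. But the mechanism by which you get the $+2n'$ is entirely different from the paper's, and your version of it is a genuine gap, not a proof.

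The paper never constructs a backbone or runs a combinatorial augmentation. It solves the LP $\mathrm{LP}(F)$ and takes a basic feasible solution. The $2n'$ bound then falls out of linear algebra: a basic feasible solution is defined by $|E\setminus F|$ linearly independent tight constraints, the local weak supermodularity of $f_F$ lets the tight \emph{set} constraints be uncrossed into a laminar family (the paper reuses Jain's Lemmas 4.1--4.2 verbatim), and the paper's Lemma~\ref{lem:tight-dimension} shows that the restriction of this laminar family to high-degree vertices is still laminar --- because any set differing from a member only by low-degree vertices yields an identical constraint row --- so the span of tight set constraints has dimension $\le 2n_h-1$. Hence at most $2n_h-1$ variables are fractional, and rounding them to $1$ costs $\le 2n_h$. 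Your Step~3 asks for something structurally different: a combinatorial guarantee that a greedily built $H_0$ can be completed using $\le 2n'$ further edges of $G$. You explicitly acknowledge ``essentially all the real work of the proof'' lives there, and you propose two possible tools (local weak supermodularity on deficient cuts, or a Frank-type augmentation min-max), but you carry out neither. Local weak supermodularity in the paper is a statement about the function $f_{F'}$ that enables uncrossing of \emph{LP-tight} constraints; it does not directly give a laminar structure on the deficient cuts of an arbitrary integral $H_0$, and a Frank-type augmentation bound in the form you want is not available here because you may only add edges already present in $G$, whereas classical edge-connectivity augmentation min--max results allow arbitrary new edges. So Step~3 is an open claim, not a lemma.

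Step~2 is also shakier than you claim. A minimum-cardinality subgraph that simultaneously (i) contains a forced set $D$, (ii) meets degree lower bounds $b(v)=\min(d_G(v),k)$, and (iii) is connected on each component of $G$ is not a plain $b$-matching / $b$-edge-cover problem; the connectivity side constraint is exactly what makes such problems hard in general, and you do not argue polynomiality. (The combinatorial $k$-ECSS algorithms you are emulating compute the edge cover \emph{without} a connectivity constraint and handle connectivity separately during augmentation.) The paper sidesteps all of this by never leaving LP-land.

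In short: your outline identifies the correct target inequality $|\mathrm{ALG}|\le\mathrm{OPT}+2n'$, but the paper proves it via a dimension count on the LP polytope using local weak supermodularity, whereas you defer the analogous combinatorial claim to an unproved Step~3 and rely on a Step~2 subroutine whose tractability is not established.
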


Both of these theorems are consequences of a structural property we prove about the cut-requirement function for $k$-EFTS: while it is not weakly supermodular, it does have a weaker property which we term \emph{local} weak supermodularity.  We define this property formally in Section~\ref{sec:LWS}, but at a high level it boils down to proving that while the inequalities required for weak supermodularity do not hold everywhere (as would be required for weak supermodularity), they hold for \emph{particular} sets (i.e., they hold \emph{locally}) which are the sets where the inequalities are actually applied by Jain's analysis.  In other words, we prove that the places in the function where weak supermodularity are violated are precisely the places where we do not care if weak supermodularity holds.  After overcoming a few more technical complications (we actually need local weak supermodularity even in the ``residual'' problem to use iterative rounding), this means that we can apply Jain's algorithm to prove Theorem~\ref{thm:kEFTS-weighted}.  

To prove Theorem~\ref{thm:kEFTS-unweighted}, it was observed for unweighted $k$-ECSS by~\cite{GGTW09} (with later improvements by~\cite{GG12}) that one of the main pieces of Jain's approach, the fact that the tight constraints can be ``uncrossed" to get a laminar family with the same span, implies a $(1+4/k)$-approximation via a trivial threshold rounding.  They pointed out that the fact that the linearly independent tight constraints form a laminar family implies that there are only $2n$ linearly independent tight constraints, while there are $m$ variables, and hence at any basic feasible solution the remaining $m-2n$ tight constraints defining the point must be the bounding constraints.  These bounding constraints being tight means that the associated variables are in $\{0,1\}$, and hence there are only $2n$ fractional variables in any basic feasible solution.  Rounding all of these variables to $1$ increases the cost by $2n$, but since $OPT \geq kn/2$ (since the input graph $G$ must be $k$-connected) this results in a $(1+4/k)$-approximation.

Thanks to our local weak supermodularity characterization the laminar family result is still true even for $k$-EFTS, so it is still true that there are at most $2n$ nonzero variables at any extreme point.  But since we are not guaranteed that $G$ is $k$-connected we are not guaranteed that $OPT \geq kn/2$, and so this does not imply the desired approximation.  Instead, we prove that the number of fractional variables at any basic feasible solution is at most $2 n_h$, where $n_h$ is the number of ``high-degree'' nodes.  It is then easy to argue that $OPT \geq n_h k / 2$, which gives Theorem~\ref{thm:kEFTS-unweighted}.

\subsubsection{Relative Survivable Network Design With a Single Demand} \label{intro:RSND}

For $k$-EFTS, we strongly used the property that all pairs have the same demand.  This is not true for RSND, which makes the problem vastly more difficult.  We still do not know whether there exists a cut requirement function which characterizes the problem and is locally weakly supermodular.  In this paper, we study the simplest case where not all demands are the same: when there is a single nonzero demand $(s,t,k)$, and $k$ is either $2$ or $3$ (the case of $k=1$ is simply the shortest-path problem). It turns out to be relatively straightforward to prove a $2$-approximation for $k=2$ even when there are many demands (see Section~\ref{sec:reduction}), but the $k=3$ case is surprisingly difficult.  We prove the following theorem in Section~\ref{sec:RSND}.

\begin{theorem} \label{thm:3RSND}
In any RSND instance with a single demand $(s,t,3)$, there is a polynomial-time $7-\frac14 = \frac{27}{4}$-approximation.
\end{theorem}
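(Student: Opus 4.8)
The plan is to exploit the structure theorem the paper promises: when there is a single demand $(s,t,3)$, the graph $G$ decomposes (via important separators) into a chain of "components" $C_0, C_1, \dots, C_p$ with $s \in C_0$ and $t \in C_p$, where consecutive components are joined by a small separator (of size $1$ or $2$, since no $(s,t)$-cut of size $\le 2$ can be "skipped" and any larger separation is irrelevant to $2$-fault tolerance). The point of the important-separator decomposition is that the "closest" minimal $(s,t)$-cut of size $\le 2$ on the $s$-side is essentially forced, and peeling it off leaves a smaller instance; iterating gives the chain. Within this chain, a feasible solution must (i) include enough edges across each of the small separators so that the relative $2$-fault-tolerance is preserved across the "bottleneck," and (ii) inside each component $C_j$, solve what is essentially a genuinely $3$-connected (or appropriately high-connectivity) sub-instance between the two "portals" where the separators attach — so on each component the classical SND machinery (Jain's $2$-approximation) applies, because there the relative and absolute notions coincide.

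Concretely, I would proceed in the following steps. First, prove the structure theorem: characterize feasible solutions for a single $(s,t,3)$ demand in terms of the chain-of-components decomposition, showing that $H$ is feasible if and only if for every "link" in the chain $H$ carries the full set of $\le 2$ separator edges (these are forced and free to include — they cost $\le 2$ per link, or rather they are already in $OPT$), and within each component $H$ restricted to that component is a feasible SND solution for the demand "$3$ edge-disjoint paths between the two portal vertices," with the caveat that when a separator has size $1$ or $2$ the in-component requirement drops to $3$ minus that deficiency on the relevant cuts. Second, observe that $OPT$ decomposes (up to the forced separator edges) as a sum of the optimal costs of the per-component SND instances, so a $2$-approximation per component already gives a $2$-approximation overall — but that only yields a $2$-approximation, not $27/4$, so this must be too pessimistic and the real theorem must be about the case that resists this clean decomposition. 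The $27/4 = 7 - \frac14$ form strongly suggests the bottleneck is a single "hard" component where the two portals are only $1$- or $2$-connected to the rest but we still need relative $3$-fault-tolerance inside, forcing a more delicate argument: there one bounds the solution by combining a shortest-path-type backbone with an augmentation, and the $\frac14$ savings comes from an averaging/LP-rounding refinement over the (at most three) "parallel" pieces one needs.

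So the real content, and the step I expect to be the main obstacle, is handling a single component in which the demand is $(s', t', 3)$ but $s', t'$ are \emph{not} $3$-connected in the component's underlying graph — i.e., the genuinely relative case that does not reduce to SND. Here I would build the solution in three conceptual layers: a cheapest $s'$–$t'$ path $P_1$ (cost $\le OPT$); then, in $G \setminus (\text{some cut edge of } P_1)$, a cheapest path $P_2$ re-routing around it; then a third layer $P_3$ handling the remaining fault. Naively this is a $3$-approximation on that component (each layer costs $\le OPT$ of that sub-instance, roughly), but one can do better: by choosing the three layers jointly via an LP or via a clever charging that reuses edges shared between layers, one saves a constant fraction — the $27/4$ bound says the total over the whole instance is $7 - \frac14$ times $OPT$, consistent with something like "a $3$-layer construction costing $\le 7\,OPT$ naively, improved by $\frac14\,OPT$ via reuse." Pinning down exactly where the factor $7$ enters (presumably $3$ layers plus forced separator edges plus slack from the recursion across links, $3 + 2 + 2 = 7$) and then extracting the $\frac14$ improvement by a careful case analysis on the structure of the minimal cuts inside the hard component is the delicate part; the chain decomposition and the per-component SND reduction, while technical, I expect to be comparatively routine once the important-separator machinery of~\cite{M11} is set up.
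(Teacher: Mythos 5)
Your high-level skeleton---peel off important $(s,t)$-separators of size $\le 2$ to form a chain, include the forced separator edges, and then argue per component---matches the paper's strategy at the top level (Section~\ref{sec:decomposition}). But the heart of the proof, and the point where your proposal breaks, is what needs to be satisfied \emph{inside} each component $R_i$, and you have misidentified it.

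You treat each component as a clean SND instance of the form ``find $3$ edge-disjoint paths between the two portal super-nodes,'' and when that does not immediately work you fall back to a three-layer path-augmentation heuristic with a speculative $\frac14$ savings from edge reuse. Neither is what happens, and the path-layering argument will not certify feasibility. The reason is that a fault set of size $2$ can hit one or two of the separator edges $S_{i-1}, S_i$ adjacent to a component, and depending on exactly which separator endpoints survive, the component $H_i$ must still route from a \emph{specific surviving vertex} on one side to a \emph{specific surviving vertex} on the other. The paper's Structure Lemma (Lemma~\ref{lem:partition_characterize}) makes this precise: besides (1) carrying a $3$-flow between the contracted portal sets, $H_i$ must also (2) be a feasible solution to a $2$-RSND instance with demands $(V_{(i,\ell)}, v_r, 2)$ and $(V_{(i,r)}, v_\ell, 2)$ for each individual portal vertex, and (3) satisfy Steiner-forest-type $1$-connectivity demands between every pair of portal vertices that are connected in $G_i$. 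Condition (2) is itself a genuinely \emph{relative} (not absolute) requirement---it cannot be replaced by ``$3$ edge-disjoint paths between super-nodes,'' and it is exactly what ``three layers of shortest paths'' fails to deliver: your $P_1, P_2, P_3$ construction gives $3$ edge-disjoint paths between the contracted portals but gives no control over which individual portal vertex each path attaches to, so a fault set that kills one separator edge plus one edge in $R_i$ can disconnect $s$ from $t$ in $H$ even though $G$ survives. The ``only if'' direction of the Structure Lemma exhibits exactly these adversarial fault sets, and your construction has nothing to answer them with.

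Consequently your accounting for $\frac{27}{4}$ is also off: you reverse-engineer it as ``$3+2+2$ minus a $\frac14$ from edge reuse,'' but the actual arithmetic is $1 + 2 + 2 + \frac74 = \frac{27}{4}$, arising from four disjoint subroutine calls per component: an \emph{exact} min-cost-flow (factor $1$) for Property~1, two calls to the $2$-RSND $2$-approximation (Theorem~\ref{thm:2RSND}) for the two directions of Property~2, and a Steiner Forest $(2-\frac1k)$-approximation with $k \le 4$ terminal pairs (hence factor $\frac74$) for Property~3. There is no ``saving via reuse''; the $-\frac14$ is simply $\frac74$ appearing instead of $2$ because the Steiner Forest instance has at most four terminal pairs. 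So the missing ideas in your proposal are (a) the full characterization of per-component requirements, in particular the two-sided $2$-RSND demands that encode what happens when faults hit separator edges, and (b) the union-of-subroutines construction that satisfies them, rather than a single layered-paths construction.
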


To prove this, we start with the observation that if the minimum $s-t$ cut is at least $3$ then this is actually just the traditional SND problem (and in fact, the even simpler problem of finding $3$ edge-disjoint paths of minimum weight between $s$ and $t$, which can be solved efficiently via min-cost flow).  So the only difficulty is when there are cuts of size $1$ or $2$.  Cuts of size $1$ can be dealt with easily (see Section~\ref{sec:reduction}), but cuts of size $2$ are more difficult.  To get rid of them, we construct a ``chain'' of $2$-separators (cuts of size $2$ that are also \emph{important separators}~\cite{M11}).  Inside each component of the chain there are no $2$-cuts between the incoming separator and the outgoing separator, which allows us to characterize the connectivity requirement of any feasible solution restricted to that component.  These connectivity requirements turn out to be quite complex even though we started with only a single demand, as fault sets with different structure can force complicated connectivity requirements in intermediate components.  The vast majority of the technical work is proving a structure lemma which characterizes them.  With this lemma in hand, though, we can simply approximate the optimal solution in each component.

Interestingly, to the best of our knowledge this is the first use of important separators in approximation algorithms, despite their usefulness in fixed-parameter algorithms~\cite{M11}.  

\subsection{Related Work}

The most directly related work is the $2$-approximation of Jain for Survivable Network Design~\cite{Jain01}, which introduced iterative rounding (see~\cite{LRS11} for a detailed treatment of iterative rounding in combinatorial optimization).  This built off of an earlier line of work on survivable network design beginning over 50 years ago with~\cite{SWK69}.  Since the success of Jain's approach for SND, there has been a significant amount of work on vertex-connectivity versions rather than edge-connectivity, which is a significantly more difficult setting.  This has culminated in the state of the art approximation of~\cite{CK12}.  There is also a long line of work on $k$-ECSS, most notably including~\cite{CT00,GGTW09}.

While not technically related, the basic problems in this paper are heavily inspired by recent work on relative notions of fault-tolerance in graph spanners and other non-optimization network design settings.  A relative definition of fault-tolerance for graph spanners which is very similar to ours (but which takes distances into account due to the spanner setting) was introduced by~\cite{CLPR10}, who gave bounds on the size of $f$-fault-tolerant $t$-spanners for both edge and vertex notions of fault-tolerance.  This spawned a line of work which improved these bounds for both vertex and edge fault-tolerance~\cite{DK11,BDPW18,BP19,DR20,BDR21,BDR22}, culminating in~\cite{BDR21} for vertex faults and~\cite{BDR22} for edge faults.  The basic spanner definition also inspired work on relative fault-tolerant versions of related problems, including emulators~\cite{BDN22}, distance sensitivity oracles for multiple faults~\cite{CLPRoracles}, and single-source reachability subgraphs~\cite{BCR16,LMSZ19}.  What all of these results shared, though, was that they were not doing optimization: they were looking for existential bounds (and algorithms to achieve them) for these objects.  In this paper, by contrast, we take the point of view of optimization and approximation algorithms and compare to the instance-specific optimal solution.

\section{$k$-Edge Fault-Tolerant Subgraph} \label{sec:kEFTS}
Both Theorems~\ref{thm:kEFTS-weighted} and~\ref{thm:kEFTS-unweighted} depend on the same LP relaxation, which is based on a modification of the ``obvious'' cut-requirement function.  So we begin by discussing this relaxation, and then use it to prove the two main theorems.  \iflong \else All missing proofs can be found in Appendix~\ref{app:kEFTS}. \fi

\subsection{LP Relaxation}
\subsubsection{Basics}
The natural place to start is the LP used by Jain~\cite{Jain01}, but with a cut requirement function $f(S) = \min(|\delta_G(S)|, k)$.  Unfortunately, while this results in a valid LP relaxation, it is not weakly supermodular (see Section~\ref{sec:LWS} for the definition, and Appendix~\ref{app:counterexamples} for a counterexample).  So instead we modify this cut requirement function by removing edges which are ``forced''.  For every subset $S$ of $V$, let $\delta_{G}(S)$ be the set of edges with exactly one endpoint in $S$.  Let $F = \{e \in E \mid \exists S \text{ where } e \in \delta_{G}(S) \text{ and } |\delta_{G}(S)| \leq k  \}$. In other words, $F$ is the set of all edges that are in some cut of size at most $k$.  Clearly we can compute $F$ in polynomial time by simply checking for every edge $(u,v)$ whether the minimum $u-v$ cut in $G$ has size at most $k$. For every set $S \subset V$ with $S \neq\emptyset$, we define the cut requirement function $f_F(S) = \min(k, |\delta_{G}(S)|) - |\delta_{G}(S) \cap F|$.  Note that every edge in $F$ must be in any feasible solution, since if any edge is missing then a fault set consisting of the rest of the cut (at most $k-1$ edges) would disconnect the endpoints of the missing edge in the solution but not in $G$, giving a contradiction.  Then $f_F(S)$ is essentially the ``remaining requirement'' after $F$ has been removed.  

Since iterative rounding will add other edges and remove them from the residual problem, we will want to define a similar cut requirement function for supersets: formally, for any $F' \supseteq F$, let $f_{F'}(S) = \min(k, |\delta_{G}(S)|) - |\delta_{G}(S) \cap F'|$.  For any $F' \supseteq F$, consider the following linear program which we call LP($F'$), which has a variable $x_e$ for every edge $e \in E \setminus F'$:

\begin{equation} \label{LP:main} \tag{LP($F'$)}
\fbox{$
\begin{array}{lll}
\min & \displaystyle \sum_{e \in E \setminus F'} w(e) x_e \\
\mathrm{s.t.} & \displaystyle \sum_{e \in \delta_G(S) \setminus F'} x_e \geq f_{F'}(S) \qquad & \forall S \subseteq V \\
& \displaystyle 0 \leq x_e \leq 1 & \forall e \in E \setminus F'
\end{array}
$
}
\end{equation}

It is not hard to see that this is a valid LP relaxation (when combined with $F'$), but we prove this for completeness.  

\begin{lemma} \label{lem:relaxation}
Let $H$ be a valid $k$-EFTS and let $F' \supseteq F$.  For every edge $e \in E \setminus F'$, let $x_e = 1$ if $e \in H$, and let $x_e = 0$ otherwise.  Then $x$ is a feasible integral solution to LP($F'$).
\end{lemma}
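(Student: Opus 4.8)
The plan is to reduce the statement to a single Menger-type fact about how a feasible solution crosses cuts, and then verify the constraints of LP($F'$) by a short counting argument. First I would establish the key structural fact: for any valid $k$-EFTS $H$ and any $S \subseteq V$, the number of edges of $H$ crossing the cut $(S,\bar S)$, namely $|\delta_G(S) \cap H|$, is at least $\min(k, |\delta_G(S)|)$. This is the only place the definition of $k$-EFTS gets used, and I would prove it by contradiction. Suppose $|\delta_G(S) \cap H| < \min(k, |\delta_G(S)|)$, and take $F := \delta_G(S) \cap H$ as the fault set. Then $|F| < k$, so $F$ is an admissible fault set. Since $H \subseteq G$ we have $F \subseteq \delta_G(S)$, and since $|F| < |\delta_G(S)|$ the set $\delta_G(S) \setminus F$ is nonempty; pick an edge $(u,v) \in \delta_G(S) \setminus F$ with $u \in S$ and $v \in \bar S$. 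On one hand, $u$ and $v$ are connected in $G \setminus F$ (by the edge $(u,v)$ itself). On the other hand, $H \setminus F$ contains no edge crossing $(S,\bar S)$, since we deleted exactly $\delta_G(S) \cap H$, so $u$ and $v$ lie in different components of $H \setminus F$. This contradicts feasibility of $H$, proving the claim.

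With this in hand, I would fix an arbitrary $S$ and check the corresponding covering constraint of LP($F'$). Its left-hand side equals $\sum_{e \in \delta_G(S) \setminus F'} x_e = |(\delta_G(S) \cap H) \setminus F'|$, since $x_e = 1$ exactly for the edges of $H$ among the variables $e \in E \setminus F'$. Because $\delta_G(S) \cap H \subseteq \delta_G(S)$, we have $|(\delta_G(S) \cap H) \cap F'| \le |\delta_G(S) \cap F'|$, and hence
\[
|(\delta_G(S) \cap H) \setminus F'| \;=\; |\delta_G(S) \cap H| - |(\delta_G(S) \cap H) \cap F'| \;\ge\; |\delta_G(S) \cap H| - |\delta_G(S) \cap F'| \;\ge\; \min(k, |\delta_G(S)|) - |\delta_G(S) \cap F'| \;=\; f_{F'}(S),
\]
where the final inequality is the structural fact from the first step. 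The box constraints $0 \le x_e \le 1$ hold trivially because $x$ is $\{0,1\}$-valued, and the degenerate cases $S = \emptyset$ and $S = V$ have $f_{F'}(S) = 0$. This gives feasibility.

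I do not anticipate a genuine obstacle here: essentially all the content is the cut-crossing bound of the first step, which is immediate from the definition of $k$-EFTS. The one thing to be careful about in the second step is the bookkeeping — the cut requirement subtracts $|\delta_G(S) \cap F'|$, not $|\delta_G(S) \cap H \cap F'|$, so the chain of inequalities must be arranged so that the larger quantity is what gets subtracted, which is exactly what the inclusion $\delta_G(S) \cap H \subseteq \delta_G(S)$ buys us. (As an aside, the first step could instead be derived from Menger's theorem applied to a separated pair, but the direct cut argument above avoids having to reason about which vertex pairs a given small cut disconnects.)
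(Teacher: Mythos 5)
Your proof is correct and follows essentially the same route as the paper's: both establish the cut-crossing bound $|H \cap \delta_G(S)| \geq \min(k, |\delta_G(S)|)$ by taking $F = H \cap \delta_G(S)$ as a contradicting fault set, then plug this into the same chain of inequalities. Your write-up just unpacks the contradiction a bit more explicitly (exhibiting the witness edge $(u,v)$) than the paper's one-sentence parenthetical.
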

\iflong 
\begin{proof}
Clearly $0 \leq x_e \leq 1$ for all $e \in E \setminus F'$.  Consider some $S \subseteq V$.  Since $H$ is a valid $k$-EFTS, the number of edges in $H \cap \delta_G(S)$ is at least $\min(k, |\delta_G(S)|)$ (or else the edges in $H \cap \delta_G(S)$ would be a fault set of size less than $k$ such that the connected components of $H$ post-faults are different from the connected components of $G$ post-faults).  Hence 
\begin{align*}
    \sum_{e \in \delta_G(S) \setminus F'} x_e &=|(H \cap \delta_G(S)) \setminus F'| = |H \cap \delta_G(S)| - |H \cap \delta_G(S) \cap F'| \geq |H \cap \delta_G(S)| - |\delta_G(S) \cap F'| \\
    &\geq \min(k, |\delta_G(S)|) - |\delta_G(S) \cap F'| = f_{F'}(S),
\end{align*}
as required.
\end{proof}
\fi

\begin{lemma} \label{lem:relaxation2}
Let $F' \supseteq F$ and let $x$ be an integral solution to LP($F'$).  Let $E' = \{e : x_e = 1\}$.  Then $H = E' \cup F'$ is a valid $k$-EFTS.
\end{lemma}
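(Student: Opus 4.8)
The plan is to prove the contrapositive, essentially reversing the chain of inequalities in the proof of Lemma~\ref{lem:relaxation}. Suppose $H = E' \cup F'$ is not a valid $k$-EFTS. Then by Definition~\ref{def:kEFTS} there is a fault set $D \subseteq E$ with $|D| < k$ and vertices $u, v$ that are connected in $G \setminus D$ but not in $H \setminus D$. Let $S$ be the connected component of $u$ in $H \setminus D$; then $v \notin S$, so $S \notin \{\emptyset, V\}$. Since $S$ is a connected component of $H \setminus D$, every edge of $H$ with exactly one endpoint in $S$ must lie in $D$, i.e. $H \cap \delta_G(S) \subseteq D$, and therefore $|H \cap \delta_G(S)| \leq |D| \leq k-1$.

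Next I would locate an edge that ``ought'' to be in $H$ but is not. Since $u$ and $v$ are connected in $G \setminus D$ and $v \notin S$, any $u$--$v$ path in $G \setminus D$ must cross the cut $\delta_G(S)$, so there is an edge $e \in \delta_G(S)$ with $e \notin D$. Because $H \cap \delta_G(S) \subseteq D$ and $e \notin D$, we conclude $e \notin H = E' \cup F'$; in particular $e \notin F'$ (so $e$ is a variable of LP($F'$)) and $x_e = 0$ (so $e \notin E'$). Now I would invoke the LP constraint for $S$. Since $E' \subseteq E \setminus F'$, the sets $E'$ and $F'$ are disjoint, so $\sum_{e' \in \delta_G(S) \setminus F'} x_{e'} = |E' \cap \delta_G(S)|$, and feasibility of $x$ gives $|E' \cap \delta_G(S)| \geq \min(k, |\delta_G(S)|) - |F' \cap \delta_G(S)|$. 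Adding $|F' \cap \delta_G(S)|$ to both sides and using disjointness again, $|H \cap \delta_G(S)| = |E' \cap \delta_G(S)| + |F' \cap \delta_G(S)| \geq \min(k, |\delta_G(S)|)$.

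Finally I would split on the value of $|\delta_G(S)|$. If $|\delta_G(S)| \geq k$, then $|H \cap \delta_G(S)| \geq k$, contradicting $|H \cap \delta_G(S)| \leq k-1$. If $|\delta_G(S)| < k$, then $|H \cap \delta_G(S)| \geq |\delta_G(S)|$, which together with $H \cap \delta_G(S) \subseteq \delta_G(S)$ forces $H \cap \delta_G(S) = \delta_G(S)$, contradicting $e \in \delta_G(S) \setminus H$. Either way we reach a contradiction, so $H$ is a valid $k$-EFTS. This argument is routine; the only points requiring a little care are the final case split, which is exactly where the $\min$ in $f_{F'}$ matters, and keeping straight that the fault set has size strictly less than $k$. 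I do not anticipate a genuine obstacle.
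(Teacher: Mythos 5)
Your proof is correct and follows essentially the same route as the paper's: assume infeasibility, take the connected component $S$ of $u$ in $H$ after the faults, and derive a violated LP constraint at $S$. The paper additionally invokes minimality of the fault set to get $A = H \cap \delta_G(S)$ and invokes the definition of $F$ to rule out $|\delta_G(S)| \leq k$, whereas you avoid both by bounding $|H \cap \delta_G(S)| \leq |D|$ directly and handling the small-cut case via the LP constraint itself forcing $H \cap \delta_G(S) = \delta_G(S)$ — a slightly cleaner bookkeeping, but the same idea.
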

\iflong
\begin{proof}
Suppose for contradiction that $H$ is not a valid $k$-EFTS.  Then there are two nodes $u,v \in V$ and a minimal set $A \subseteq E$ with $|A| < k$ so that $u,v$ are not connected in $H \setminus A$ but are connected in $G \setminus A$.  Let $S$ be the nodes reachable from $u$ in $G \setminus A$, and so by minimality of $A$ we know that $A = H \cap \delta_G(S)$.

Note that $|\delta_G(S)| > k$, or else all edges of $\delta_G(S)$ would be in $F$, implying that $E \cap \delta_G(S) = H \cap \delta_G(S) = A$ and so $u$ and $v$ would not be connected in $G \setminus A$.  Thus
\begin{align*}
    \sum_{e \in \delta_G(S) \setminus F'} x_e &= |H \cap \delta_G(S)| - |F' \cap \delta_G(S)| = |A| - |\delta_G(S) \cap F'| \\
    &< \min(k, |\delta_G(S)|) - |\delta_G(S) \cap F'| = f_{F'}(S),
\end{align*}
which contradicts $x$ being a feasible solution to LP($F'$).
\end{proof}
\fi

These lemmas (together with the fact that every edge in $F$ must be in any valid solution) imply that if we can solve and round this LP while losing some factor $\alpha$, then we can add $F$ to the rounded solution to get an $\alpha$-approximation.  Hence we are interested in solving and rounding this LP.

We first argue that we can solve the LP using the Ellipsoid algorithm with a separation oracle.  Note that unlike $k$-ECSS, here a violated constraint does not just correspond to a cut with LP values less than $k$, since our cut-requirement function is more complicated.  Indeed, if we compute a global minimum cut (with respect to the LP values) then we may end up with a small cut which is not violated even though there are violated constraints.  So we need to argue more carefully that we can find a violated cut when one exists.

\begin{lemma} \label{lem:solve}
For every $F' \supseteq F$, LP($F'$) can be solved in polynomial time.
\end{lemma}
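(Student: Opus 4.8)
The plan is to solve LP($F'$) by the Ellipsoid method, so everything reduces to building a polynomial-time separation oracle: given $x \in [0,1]^{E \setminus F'}$, either certify that every cut constraint holds or produce a set $S$ violating one. The first move is to rewrite the constraints in a cleaner form. Extend $x$ to $\hat x \in [0,1]^{E}$ by setting $\hat x_e = x_e$ for $e \in E \setminus F'$ and $\hat x_e = 1$ for $e \in F'$ (the forced edges count as fully bought), and write $\hat x(A) = \sum_{e \in A} \hat x_e$. Since $\sum_{e \in \delta_G(S) \setminus F'} x_e = \hat x(\delta_G(S)) - |\delta_G(S) \cap F'|$, the constraint for a set $S$ is exactly $\hat x(\delta_G(S)) \geq \min(k, |\delta_G(S)|)$.

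Next I would observe that only ``large'' cuts can ever be violated. If $|\delta_G(S)| \leq k$, then by the definition of $F$ every edge of $\delta_G(S)$ lies in $F \subseteq F'$, so $\hat x(\delta_G(S)) = |\delta_G(S)| = \min(k, |\delta_G(S)|)$ and the constraint holds with equality. Hence a violated $S$ must satisfy $|\delta_G(S)| > k$, where the constraint simply reads $\hat x(\delta_G(S)) \geq k$. The temptation here is to compute a single global minimum $\hat x$-cut and compare its weight to $k$, as one does for $k$-ECSS; but this fails, precisely because a cut of small $\hat x$-weight may be a cut of small cardinality, for which the requirement $f_{F'}$ is correspondingly small and hence not violated. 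This cardinality side-condition $|\delta_G(S)| > k$ is the real obstacle, and getting around it cleanly is the one nontrivial step.

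The resolution is to route the search through edges not in $F'$. If $e = (u,v) \in E \setminus F'$ then $e \notin F$, which by the definition of $F$ means that every $u-v$ cut in $G$ has strictly more than $k$ edges. So the oracle is: for each $e = (u,v) \in E \setminus F'$, compute a minimum-$\hat x$-weight cut $S_e$ separating $u$ from $v$ (one max-flow/min-cut computation); automatically $|\delta_G(S_e)| > k$, so if $\hat x(\delta_G(S_e)) < k$ then $S_e$'s constraint is genuinely violated and we output it; if no edge yields such a cut, declare $x$ feasible. To see the ``feasible'' verdict is sound, suppose some $S$ is violated: then $|\delta_G(S)| > k$ and $\hat x(\delta_G(S)) < k$, and moreover $\min(k,|\delta_G(S)|) - |\delta_G(S) \cap F'| > \sum_{e \in \delta_G(S)\setminus F'} x_e \geq 0$ forces $|\delta_G(S) \cap F'| < k < |\delta_G(S)|$, so $\delta_G(S)$ contains some edge $e = (u,v) \notin F'$. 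Then the minimum-$\hat x$-weight $(u,v)$-cut has weight at most $\hat x(\delta_G(S)) < k$, so the oracle would have flagged a violation while processing $e$ — a contradiction.

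Finally, the oracle performs $O(m)$ minimum-cut computations, hence runs in polynomial time; since LP($F'$) is a standard covering LP with $O(m)$ variables bounded in $[0,1]$, plugging this oracle into the Ellipsoid method solves LP($F'$) in polynomial time.
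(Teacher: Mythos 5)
Your proposal is correct and is essentially the paper's own separation-oracle argument: both proofs observe that a violated set $S$ must have $|\delta_G(S)|>k$, extend $x$ by $1$ on $F'$, and then route the min-cut search through the endpoints of an edge $e\notin F'$ crossing $S$ so that every cut found automatically has more than $k$ edges, making the test ``$\hat x$-weight $<k$'' equivalent to genuine violation. Your version is a bit tighter in two cosmetic ways: you run min-cut only over the $O(m)$ edges of $E\setminus F'$ rather than over all vertex pairs, and you extract the needed non-$F'$ edge in $\delta_G(S)$ directly from $f_{F'}(S)>0$ and $|\delta_G(S)|>k$ rather than via the paper's detour through ``some crossing edge has $x_{e^*}<1$.''
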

\iflong
\begin{proof}
We give a separation oracle, which when combined with the Ellipsoid algorithm implies the lemma~\cite{GLS88}.  Consider some vector $x$ indexed by edges of $E \setminus F'$.  Suppose that $x$ is not a feasible LP solution, so we need to find a violated constraint.  Obviously if there is some $x_e \not\in [0,1]$ then we can find this in linear time.  So without loss of generality, we may assume that there is some $S \subseteq V$ such that $\sum_{e \in \delta_G(S) \setminus F'} x_e < f_{F'}(S)$.  This implies that $f_{F'}(S) > 0$ and that there is some edge $e^* \in \delta_G(S) \setminus F'$ with $x_{e^* }< 1$ (since otherwise the LP would not be satisfiable, contradicting Lemma~\ref{lem:relaxation} and the fact that $G$ itself is a valid $k$-EFTS). Let $e^* = \{u,v\}$.  Since $e^* \not\in F'$, and $F \subseteq F'$, we know that $e^*$ cannot be part of any cuts in $G$ of size at most $k$, and thus the minimum $u-v$ cut in $G$ has more than $k$ edges.  

On the other hand, if we extend $x$ to $F'$ by setting $x_e = 1$ for all $e \in F'$, then since $S$ is a violated constraint we have that
\begin{align*}
    \sum_{e \in \delta_G(S)} x_e &= \sum_{e \in \delta_G(S) \setminus F'} x_e + |F' \cap \delta_G(S)| < f_{F'}(S) + |F \cap \delta_G(S)| \\
    &= \min(k, |\delta_G(S)|) - |\delta_G(S) \cap F'| + |\delta_G(S) \cap F'| \\
    &= k.
\end{align*}
Thus if we interpret $x$ as edge weights (with $x_e = 1$ for all $e \in F$), if we compute the minimum $s-t$ cut we will find a cut $S'$ with more than $k$ edges (since all $u-v$ cuts have more than $k$ edges) with total edge weight strictly less than $k$.  Let $S'$ be this cut.  Thus $\sum_{e \in \delta_G(S') \setminus F'} x_e < k - |\delta_G(S') \setminus F'| = f_{F'}(S')$, so $S'$ is also a violated constraint.

Hence for our separation oracle we simply compute a minimum $s-t$ cut using $x$ as edge weights for all $s,t \in V$, and if any cut we finds corresponds to a violated constraint then we return it.  By the above discussion, if there is some violated constraint then this procedure will find some violated constraint.  Thus this is a valid separation oracle.
\end{proof}
\fi

\iflong
After solving this LP, we apply an obvious transformation used also in~\cite{Jain01}: we delete every edge $e$ with $x_e = 0$.  This allows us to assume without loss of generality that every edge has LP value $x_e > 0$ in our LP solution.  
\fi

\subsubsection{Local Weak Supermodularity} \label{sec:LWS}

As discussed in Section~\ref{intro:kEFTS}, it would be nice if this LP were \emph{weakly supermodular}, as this would immediately let us apply Jain's iterative rounding algorithm to obtain a $2$-approximation.  Recall the definition of weak supermodularity from~\cite{Jain01}.

\begin{definition}
Let $f : 2^V \rightarrow \mathbb{Z}$.  Then $f$ is \emph{weakly supermodular} if for every $A, B \subseteq V$, either $f(A) + f(B) \leq f(A \setminus B) + f(B \setminus A)$, or $f(A) + f(B) \leq f(A \cap B) + f(A \cup B)$.
\end{definition}

Unfortunately, our cut requirement function is not weakly supermodular; see Appendix~\ref{app:counterexamples} for a counterexample.  But we can make a simple observation that, to the best of our knowledge, has not previously been noticed or utilized in iterative rounding: Jain's iterative rounding algorithm does not actually need the weak supermodularity conditions to hold for \emph{all} pairs of sets $A, B$.  It only needs weak supermodularity to ``uncross'' the tight sets of an LP solution into a laminar family of tight sets with the same span.  Recall that a set is tight in a given LP solution if its corresponding cut constraint is tight, i.e., is satisfied with equality.  Moreover, note that in our setting, depending on our choice of $F'$ some cuts might be entirely included in $F'$.  These cuts would not have any edges remaining, resulting in an ``empty'' constraint in LP($F'$).  Such a constraint cannot be tight by definition, and also is not linearly independent with any other set of constraints.

Hence in order to use Jain's iterative rounding, we simply need our cut-requirement function $f_{F'}$ to satisfy the weak supermodularity requirements for $A,B$ where there is actually a nontrivial constraint for $A, B$ and where $F' \supseteq F$ (here $F'$ will consist of $F$ together with edges that Jain's iterative rounding algorithm has already set to $1$).  We formalize this as follows.  Given $F' \supseteq F$, we say that $S$ is an \emph{empty cut} if $\delta_G(S) \cap F' = \delta_G(S)$, and otherwise it is \emph{nonempty}. 

\begin{definition}
Given a graph $G = (V, E)$, a set $F' \subseteq E$, and a function $g : 2^V \rightarrow \mathbb{Z}$, we say that $g$ is \emph{locally weakly supermodular} with respect to $F'$ if for every $A, B \subseteq V$ with both $A$ and $B$ nonempty cuts, at least one of the following conditions holds:
\begin{itemize}
    \item $g(A) + g(B) \leq g(A \setminus B) + g(B \setminus A)$, or
    \item $g(A) + g(B) \leq g(A \cap B) + g(A \cup B)$.
\end{itemize}
\end{definition}

We will now prove that for any $F' \supseteq F$, the function $f_{F'}$ is locally weakly supermodular with respect to any $F'$.  
\iflong We say that $S$ is \emph{large} if $|\delta_G(S)| > k$, and otherwise $S$ is \emph{small}.  Note that since $F' \supseteq F$, any small cut is also an empty cut.  
\else
This is the key technical idea enabling Theorems~\ref{thm:kEFTS-weighted} and \ref{thm:kEFTS-unweighted}.
\fi 

\iflong
\begin{lemma}
\label{lem:big_sets}
Let $F' \supseteq F$.  If $A$ and $B$ are nonempty cuts for $f_{F'}$, then either $A \setminus B$ and $B \setminus A$ are nonempty cuts, or $A \cap B$ and $A \cup B$ are nonempty cuts.
\end{lemma}
\begin{proof} 
Let 
\begin{align*}
S_1 &= \delta_{G}(A \setminus B, V \setminus (A \cup B)), &S_2 &= \delta_{G}(A \setminus B, B \setminus A), &S_3 &= \delta_{G}(A \setminus B, A \cap B), \\
S_4 &= \delta_{G}(B \setminus A,V \setminus (A \cup B)), &S_5 &= \delta_{G}(B \setminus A, A \cap B), &S_6 &= \delta_{G}(A \cap B,V \setminus (A \cup B)).
\end{align*}
Suppose that $A \setminus B$ and $A\cap B$ are both empty cuts. Each edge in $\delta_{G}(A)$ is in $S_1$, $S_2$, $S_5$, or $S_6$. Additionally, $S_1$ and $S_2$ are subsets of $\delta_{G}(A \setminus B)$, while $S_5$ and $S_6$ are subsets of $\delta_{G}(A \cap B)$. This means that every edge in $\delta_{G}(A)$ is in an empty cut, and so all edges in $\delta_{G}(A)$ are in $F'$. Thus $A$ is an empty cut, contradicting the assumption of the lemma.  Thus at least one of $A \setminus B$ and $A \cap B$ is nonempty.  If we instead assume that $B \setminus A$ and $A\cap B$ are empty cuts, then we can use a similar argument to prove that $B$ is an empty cut. This proves that at least one of $B \setminus A$ and $A\cap B$ are nonempty.  Hence if $A \cap B$ is empty, then both $A \setminus B$ and $B \setminus A$ are nonempty, proving the lemma.  

Now suppose that $A \setminus B$ and $A\cup B$ are both empty cuts. Each edge in $\delta_{G}(B)$ is in $S_2$, $S_3$, $S_4$, or $S_6$. Additionally, $S_2$ and $S_3$ are subsets of $\delta_{G}(A \setminus B)$, while $S_4$ and $S_6$ are subsets of $\delta_{G}(A \cup B)$. This means that every edge in $\delta_{G}(B)$ is in an empty cut, and so all edges in $\delta_{G}(B)$ are in $F'$.  Thus $B$ is an empty cut, contradicting the assumption of the lemma.  Thus at least one of $A \setminus B$ and $A \cup B$ is nonempty.  If we instead assume that $B \setminus A$ and $A \cup B$ are empty cuts, then we can use a similar argument to prove that $A$ is empty, and hence at least one of $B \setminus A$ and $A \cup B$ is nonempty.  Hence if $A \cup B$ is empty, then both $A \setminus B$ and $B \setminus A$ are nonempty, proving the lemma.

Thus either both $A \setminus B$ and $B \setminus A$ are nonempty, or both $A \cap B$ and $A \cup B$ are nonempty, proving the lemma.  
\end{proof}
\fi

\begin{theorem}[Local Weak Supermodularity] \label{thm:LWS}
For any $F' \supseteq F$, the cut requirement function $f_{F'}$ is locally weakly supermodular with respect to $F'$. 
\end{theorem}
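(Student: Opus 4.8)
The plan is to prove local weak supermodularity by a case analysis driven by the two standard submodularity identities for the cut function $|\delta_G(\cdot)|$, combined with Lemma~\ref{lem:big_sets}, which tells us that for nonempty cuts $A,B$ either both $A\setminus B, B\setminus A$ are nonempty or both $A\cap B, A\cup B$ are nonempty. First I would recall the two elementary (in)equalities that hold for every graph: $|\delta_G(A)| + |\delta_G(B)| = |\delta_G(A\setminus B)| + |\delta_G(B\setminus A)| + 2|\delta_G(A\setminus B, B\setminus A)|$ and $|\delta_G(A)| + |\delta_G(B)| = |\delta_G(A\cap B)| + |\delta_G(A\cup B)| + 2|\delta_G(A\setminus B, B\setminus A)|$; in particular $|\delta_G|$ is submodular in the $\cap/\cup$ direction and ``posimodular'' in the $\setminus$ direction. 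The same cross-term $|\delta_G(A\setminus B,B\setminus A)|$ appears in both, and similarly $|F' \cap \delta_G(\cdot)|$ splits the same way since $F'$ is just a fixed edge set (every edge across $\delta_G(A)$ lies in exactly one of the six ``octant'' edge sets $S_1,\dots,S_6$ from the proof of Lemma~\ref{lem:big_sets}, and the count of $F'$-edges is additive over these).

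The main structural point is that $f_{F'}(S) = \min(k,|\delta_G(S)|) - |\delta_G(S)\cap F'|$, and the truncation by $\min(k,\cdot)$ is exactly what can break global weak supermodularity; locality saves us because a set $S$ with $|\delta_G(S)|\le k$ is small, hence (since $F'\supseteq F$) an empty cut, hence excluded from the hypothesis. So in the case analysis I would split on how many of the four relevant sets ($A\setminus B$, $B\setminus A$, $A\cap B$, $A\cup B$) are large (have cut size $>k$). If $A$ and $B$ are both small they are empty cuts and there is nothing to prove, so I may assume at least one of $A,B$ is large. When the four derived sets on one side of Lemma~\ref{lem:big_sets} are all large, the $\min(k,\cdot)$ operator acts as the identity on those sets and on whichever of $A,B$ is large, so the corresponding exact cut identity (posimodular or submodular) for $|\delta_G|$ passes through $f_{F'}$ essentially verbatim — the $F'$-terms cancel by the additive-over-octants observation, and where $\min(k,\cdot)$ truncates $|\delta_G(A)|$ or $|\delta_G(B)|$ that only decreases the left-hand side, which helps. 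The delicate cases are the mixed ones, where on the nonempty side guaranteed by Lemma~\ref{lem:big_sets} one of the two sets is small (hence truncated to value $\le k$ but, being an empty cut, has $f_{F'}$-value $0$ after subtracting its $F'$-edges — all of which are in $F'$) and the other is large.

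I expect the mixed cases to be the main obstacle and would handle them as follows: suppose by Lemma~\ref{lem:big_sets} that $A\setminus B$ and $B\setminus A$ are both nonempty, with, say, $A\setminus B$ small and $B\setminus A$ large (so $f_{F'}(A\setminus B)=0$). Here I want $f_{F'}(A)+f_{F'}(B) \le f_{F'}(A\setminus B)+f_{F'}(B\setminus A) = f_{F'}(B\setminus A)$. If $A$ is small this is immediate since $f_{F'}(A)=0\le f_{F'}(B\setminus A)$ needs $f_{F'}(B)\le f_{F'}(B\setminus A)$, which I would get from the posimodular identity for $|\delta_G|$ after noting $|\delta_G(A\setminus B)\cap F'| = |\delta_G(A\setminus B)|$ and that $\min(k,\cdot)$ only shrinks the left side; if $A$ is large I would instead argue that $A$ small is forced, or push the inequality through the $\cap/\cup$ identity — concretely, I would show that whenever the $\setminus$-side has a small (empty) set, the $\cap/\cup$-side inequality is the one that holds, by re-running the octant bookkeeping of Lemma~\ref{lem:big_sets} to bound the relevant cut sizes. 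The bulk of the remaining work is a (finite, routine but fiddly) enumeration of these sign patterns, checking in each that either the posimodular or the submodular inequality for $f_{F'}$ is implied by the exact $|\delta_G|$ identity together with (i) cancellation of $F'$-terms across octants, (ii) monotonicity of $z\mapsto\min(k,z)$, and (iii) the fact that every small set has $f_{F'}$-value zero. Assembling these cases gives the theorem.
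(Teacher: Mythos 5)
Your overall plan---use the octant decomposition of $\delta_G(A)\cup\delta_G(B)$, invoke Lemma~\ref{lem:big_sets} to pick a side, and observe that the $F'$-counts cancel while the cross-term is nonnegative---is the right one and is essentially what the paper does. But there is a logical gap in your case analysis that would send you chasing vacuous cases, and it would take real effort to notice this before giving up on the ``fiddly enumeration'' you propose.

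You correctly note early on that, since $F' \supseteq F$, any small set ($|\delta_G(S)|\le k$) is automatically an empty cut. But you fail to apply this observation to the \emph{derived} sets. Lemma~\ref{lem:big_sets} hands you a side of the dichotomy on which \emph{both} sets are nonempty cuts, and nonempty implies large. So the ``delicate mixed case'' you flag as the main obstacle---e.g.\ ``$A\setminus B$ and $B\setminus A$ are both nonempty, with, say, $A\setminus B$ small''---cannot arise: a set cannot be both small (hence empty) and nonempty. The same objection applies to your remark about $\min(k,\cdot)$ ``truncating $|\delta_G(A)|$ or $|\delta_G(B)|$'': since $A$ and $B$ are assumed nonempty, both are large, so $\min(k,|\delta_G(A)|)=\min(k,|\delta_G(B)|)=k$ and no truncation occurs anywhere. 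Once you make this observation, the whole proof collapses to your ``all four derived sets are large'' case: writing $s_i = |S_i\cap F'|$ for the octant sets, one gets $f_{F'}(A)+f_{F'}(B) = 2k - s_1 - 2s_2 - s_3 - s_4 - s_5 - 2s_6$, and on whichever side Lemma~\ref{lem:big_sets} applies, the corresponding sum equals this quantity plus $2s_6\ge 0$ (for the $\setminus$ side) or $2s_2\ge 0$ (for the $\cap/\cup$ side). That is the entire proof; there is no finite enumeration of sign patterns.

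A smaller point: your first cut identity has the wrong cross-term. The posimodular identity is $|\delta_G(A)| + |\delta_G(B)| = |\delta_G(A\setminus B)| + |\delta_G(B\setminus A)| + 2\,|\delta_G(A\cap B,\, V\setminus(A\cup B))|$, not with cross-term $|\delta_G(A\setminus B, B\setminus A)|$; the latter is the cross-term in the submodular ($\cap/\cup$) identity. So the two identities do \emph{not} share a cross-term, contrary to what you wrote. This would have caused confusion had you actually carried out the bookkeeping.
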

\iflong
\begin{proof}
Let $F' \supseteq F$, and suppose $A$ and $B$ are nonempty cuts. Let 
\begin{align*}
S_1 &= \delta_{G}(A \setminus B, V \setminus (A \cup B)), &S_2 &= \delta_{G}(A \setminus B, B \setminus A), &S_3 &= \delta_{G}(A \setminus B, A \cap B), \\
S_4 &= \delta_{G}(B \setminus A,V \setminus (A \cup B)), &S_5 &= \delta_{G}(B \setminus A, A \cap B), &S_6 &= \delta_{G}(A \cap B,V \setminus (A \cup B)).
\end{align*}
We also let $s_i = |S_i \cap F'|$ for $i \in [6]$.

$A$ and $B$ are nonempty cuts, so $A$ and $B$ must be large cuts and $\min(k,|\delta_{G}(A)|) = \min(k,|\delta_{G}(B)|) = k$. Each edge in $\delta_{G}(A)$ is in exactly one of $S_1$, $S_2$, $S_5$, and $S_6$, and each edge in $\delta_{G}(B)$ is in exactly one of $S_2$, $S_3$, $S_4$, and $S_6$, so we have that $|\delta_{G}(A) \cap F'| = s_1 + s_2 + s_5 + s_6$ and $|\delta_{G}(B) \cap F'| = s_2 + s_3 + s_4 + s_6$. We therefore have the following:
\begin{align}
    &f_{F'}(A) = \min(k,|\delta_{G}(A)|) - |\delta_{G}(A) \cap F| =  k - s_1 - s_2 - s_5 - s_6 \notag \\
    &f_{F'}(B) = \min(k,|\delta_{G}(B)|) - |\delta_{G}(B) \cap F| = k - s_2 - s_3 - s_4 - s_6 \notag \\
    \implies &f_{F'}(A) + f_{F'}(B) = 2k - s_1 - 2s_2 - s_3 - s_4 - s_5 - 2s_6. \label{eq:AB}
\end{align}

$A$ and $B$ are nonempty so by Lemma~\ref{lem:big_sets}, either $A \setminus B$ and $B \setminus A$ are nonempty cuts, or $A \cap B$ and $A \cup B$ are nonempty cuts. Suppose first that $A \setminus B$ and $B \setminus A$ are nonempty cuts, which implies that $\min(k,|\delta_{G}(A \setminus B)|) = \min(k,|\delta_{G}(B \setminus A)|) = k$. Each edge in $\delta_{G}(A \setminus B)$ is in exactly one of $S_1$, $S_2$, and $S_3$, and each edge in $\delta_{G}(B \setminus A)$ is in exactly one of $S_2$, $S_4$, and $S_5$, so we have that $|\delta_{G}(A \setminus B) \cap F'| = s_1 + s_2 + s_3$ and $|\delta_{G}(B \setminus A) \cap F'| = s_2 + s_4 + s_5$. Putting this all together, we get the following for $f_{F'}(A \setminus B)$ and $f_{F'}(B \setminus A)$:
\begin{align*}
    &f_{F'}(A \setminus B) = \min(k,|\delta_{G}(A \setminus B)|) - |\delta_{G}(A \setminus B) \cap F'| = k - s_1 - s_2 - s_3 \\
    &f_{F'}(B \setminus A) = \min(k,|\delta_{G}(B \setminus A)|) - |\delta_{G}(B \setminus A) \cap F'| = k - s_2 - s_4 - s_5 \\
    \implies &f_{F'}(A \setminus B) + f(B \setminus A) = 2k - s_1 - 2s_2 - s_3 - s_4 - s_5.
\end{align*}
This and~\eqref{eq:AB} imply that $f_{F'}(A) + f_{F'}(B) \leq f_{F'}(A\setminus B) + f_{F'}(B\setminus A)$ if $A \setminus B$ and $B \setminus A$ are nonempty cuts. 

Now suppose that $A \cap B$ and $A \cup B$ are nonempty cuts, and so $\min(k,|\delta_{G}(A \setminus B)|) = \min(k,|\delta_{G}(B \setminus A)|) = k$. Each edge in $\delta_{G}(A \cap B)$ is in exactly one of $S_3$, $S_5$, and $S_6$, and each edge in $\delta_{G}(A \cup B)$ is in exactly one of $S_1$, $S_4$, and $S_6$, so we have that $|\delta_{G}(A \cap B) \cap F'| = s_3 + s_5 + s_6$ and $|\delta_{G}(A \cup B) \cap F'| = s_1 + s_4 + s_6$. Putting this all together, we get the following for $f_{F'}(A \cap B)$ and $f_{F'}(A \cup B)$:

\begin{align*}
    &f_{F'}(A \cap B) = \min(k,|\delta_{G}(A \cap B)|) - |\delta_{G}(A \cap B) \cap F'| = k - s_3 - s_5 - s_6 \\
    &f_{F'}(A \cup B) = \min(k,|\delta_{G}(A \cup B)|) - |\delta_{G}(A \cup B) \cap F'| = k - s_1 - s_4 - s_6 \\
    \implies &f_{F'}(A \cap B) + f_{F'}(A \cup B) = 2k - s_1 - s_3 - s_4 - s_5 - 2s_6.
\end{align*}
This and~\eqref{eq:AB} imply that $f_{F'}(A) + f_{F'}(B) \leq f_{F'}(A\cap B) + f_{F'}(B\cup A)$ if $A \cap B$ and $A \cup B$ are nonempty cuts.
\end{proof} 
\fi

\subsection{Unweighted $k$-EFTS}
To prove Theorem~\ref{thm:kEFTS-unweighted} we need to look inside~\cite{Jain01}.  The following two lemmas from~\cite{Jain01} are the main ``uncrossing'' lemmas which depend on weak supermodularity, and in which we can use local weak supermodularity instead without change.  As in~\cite{Jain01}, for each $S \subseteq V$ we use $\mathcal A_G(S)$ to denote the row of the constraint matrix corresponding to $S$.  In other words $\mathcal A_G(S)$ is a vector indexed by elements of $E \setminus F$ which has a $1$ in the entry for $e$ if $e \in \delta_G(S) \setminus F$, and otherwise has a $0$ in that entry.

\begin{lemma}[Lemma 4.1 of~\cite{Jain01}] \label{lem:uncrossing}
If two sets $A$ and $B$ are tight then at least one of the following must hold
\begin{enumerate}
    \item $A \setminus B$ and $B \setminus A$ are also tight, and $\mathcal A_G(A) + \mathcal A_G(B) = \mathcal A_G(A \setminus B) + \mathcal A_G(B \setminus A)$
    \item $A \cap B$ and $A \cup B$ are also tight, and $\mathcal A_G(A) + \mathcal A_G(B) = \mathcal A_G(A \cap B) + \mathcal A_G(A \cup B)$
\end{enumerate}
\end{lemma}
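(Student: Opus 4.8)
The plan is to follow Jain's proof of Lemma~4.1 in~\cite{Jain01} essentially word for word, with the single change that wherever Jain invokes weak supermodularity of the cut-requirement function we instead invoke Theorem~\ref{thm:LWS}, applied to whichever pair of ``pieces'' of $A$ and $B$ Lemma~\ref{lem:big_sets} certifies to be nonempty cuts. Throughout, write $d(S) = \sum_{e \in \delta_G(S)\setminus F} x_e = \langle \mathcal{A}_G(S),\, x\rangle$ for the load of the cut $S$ under $x$, so that $S$ is tight precisely when $d(S) = f_F(S)$; recall that (as in~\cite{Jain01}) we may assume $x_e > 0$ for every $e \in E \setminus F$, the zero-valued edges having been deleted, and that a tight set is necessarily a nonempty cut, since an empty cut contributes the all-zero row and a vacuous constraint and is by convention not tight. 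The same argument applies verbatim in the residual problem, with $F'$, $E\setminus F'$, and $f_{F'}$ in place of $F$, $E\setminus F$, and $f_F$.

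First I would record the two standard coordinatewise ``uncrossing'' inequalities for the $\{0,1\}$ indicator rows: for all $A, B \subseteq V$,
\[
\mathcal{A}_G(A) + \mathcal{A}_G(B) \ \geq\ \mathcal{A}_G(A\cap B) + \mathcal{A}_G(A\cup B)
\qquad\text{and}\qquad
\mathcal{A}_G(A) + \mathcal{A}_G(B) \ \geq\ \mathcal{A}_G(A\setminus B) + \mathcal{A}_G(B\setminus A),
\]
both holding entrywise, where the only entries at which the first can be strict are edges joining $A\setminus B$ to $B\setminus A$, and the only entries at which the second can be strict are edges joining $A\cap B$ to $V\setminus(A\cup B)$; each follows from the usual case analysis over which of the four Venn regions of $(A,B)$ the two endpoints of a fixed edge fall into. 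Taking inner products with $x \geq 0$ yields $d(A) + d(B) \geq d(A\cap B) + d(A\cup B)$ and $d(A) + d(B) \geq d(A\setminus B) + d(B\setminus A)$; and since $x_e > 0$ strictly, whichever of these two scalar inequalities holds with equality forces the corresponding vector inequality to hold with equality entrywise.

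Now let $A$ and $B$ be tight, hence both nonempty cuts. By Lemma~\ref{lem:big_sets}, either (i) $A\setminus B$ and $B\setminus A$ are both nonempty cuts, or (ii) $A\cap B$ and $A\cup B$ are both nonempty cuts. The computation that proves Theorem~\ref{thm:LWS} in fact establishes the matching dichotomy: in case~(i) it gives $f_F(A) + f_F(B) \leq f_F(A\setminus B) + f_F(B\setminus A)$, and in case~(ii) it gives $f_F(A) + f_F(B) \leq f_F(A\cap B) + f_F(A\cup B)$, so we obtain the supermodular-type inequality precisely for the pair already known to consist of nonempty cuts. Treating case~(ii) (case~(i) is identical, with $A\setminus B, B\setminus A$ in the roles of $A\cap B, A\cup B$), the chain
\[
f_F(A) + f_F(B) \ \leq\ f_F(A\cap B) + f_F(A\cup B) \ \leq\ d(A\cap B) + d(A\cup B) \ \leq\ d(A) + d(B) \ =\ f_F(A) + f_F(B)
\]
--- using, in order, Theorem~\ref{thm:LWS}, LP feasibility at $A\cap B$ and $A\cup B$, the first inequality of the previous paragraph, and tightness of $A$ and $B$ --- is forced to be all equalities. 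Hence $d(A\cap B) = f_F(A\cap B)$ and $d(A\cup B) = f_F(A\cup B)$, so, being nonempty cuts, $A\cap B$ and $A\cup B$ are tight; and $d(A\cap B) + d(A\cup B) = d(A) + d(B)$, which by the strictness observation forces $\mathcal{A}_G(A) + \mathcal{A}_G(B) = \mathcal{A}_G(A\cap B) + \mathcal{A}_G(A\cup B)$. This is exactly conclusion~2 of the lemma, and case~(i) yields conclusion~1 in the same way.

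The only place this departs from Jain's argument, and the only step I expect to require care, is the invocation of Theorem~\ref{thm:LWS} in the displayed chain: the bare statement of local weak supermodularity asserts only that one of the two inequalities holds, not which one, and if it were the inequality for a pair one of whose members is an \emph{empty} cut, the chain would let us ``conclude'' that a degenerate (zero-row) set is tight, which is both false and useless. Lemma~\ref{lem:big_sets} together with the explicit case split inside the proof of Theorem~\ref{thm:LWS} is exactly what pins down the usable inequality; after that, the argument is Jain's verbatim. (A minor point: $f_F$ can be negative on empty cuts, but this never interferes, since $d(S) \geq f_F(S)$ holds for every $S \subseteq V$, trivially when $S$ is empty, and we only ever deduce tightness of nonempty cuts.)
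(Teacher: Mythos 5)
Your proof is correct and carries out essentially the argument the paper has in mind: the paper simply cites Jain's Lemma~4.1 and asserts that local weak supermodularity can be substituted for weak supermodularity ``without change,'' leaving the details implicit. You have filled them in faithfully: the two coordinatewise submodular inequalities for $\mathcal{A}_G$, the fact that $x_e > 0$ upgrades scalar equality to vectorwise equality, and the chain $f \leq f \leq d \leq d = f$ forced into all equalities by tightness.

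The one place where you add genuine value beyond a mere transcription of Jain is your observation that the bare statement of local weak supermodularity is not quite enough: it only says that \emph{some} uncrossing inequality holds, and if it were the inequality for a pair containing an empty cut the chain would go nowhere useful. You correctly identify that what the proof of Theorem~\ref{thm:LWS} actually establishes is stronger than what the definition records --- namely that the supermodular-type inequality holds \emph{for whichever pair Lemma~\ref{lem:big_sets} certifies to consist of nonempty cuts} --- and that this strengthened form is what the chain needs. This is exactly the right reading, and the caveat about $f_{F'}$ being negative or zero on empty cuts (which is why those sets are excluded from being tight and why the $d(S) \geq f_{F'}(S)$ constraint is vacuous there) is a good sanity check. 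If you were to tighten the write-up, the only suggestion would be to state the strengthened form of Theorem~\ref{thm:LWS} as a displayed remark so the reader does not have to re-open its proof to see that the quantifier is really ``for each nonempty pair'' rather than ``for at least one pair.''
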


Let $\mathcal T$ denote the family of all tight sets.  For any family $\mathcal F$ of tight sets, let $\Span(\mathcal F)$ denote the vector space spanned by $\{\mathcal A_G(S) : S \in \mathcal F\}$.

\begin{lemma}[Lemma 4.2 of~\cite{Jain01}] \label{lem:laminar-span}
For any maximal laminar family $\mathcal L$ of tight sets, $\Span(\mathcal L) = \Span(\mathcal T)$.
\end{lemma}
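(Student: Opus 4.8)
The plan is to follow Jain's original uncrossing argument verbatim, with Lemma~\ref{lem:uncrossing} (the local-weak-supermodularity version of his uncrossing lemma) substituted for his Lemma 4.1. The inclusion $\Span(\mathcal L) \subseteq \Span(\mathcal T)$ is trivial since $\mathcal L \subseteq \mathcal T$, so the content is the reverse inclusion. First I would assume, for contradiction, that $\Span(\mathcal L) \subsetneq \Span(\mathcal T)$, which means there is some tight set $S \in \mathcal T$ with $\mathcal A_G(S) \notin \Span(\mathcal L)$. Among all such sets I would pick one, call it $S$, that \emph{crosses the fewest sets of $\mathcal L$}, where two sets cross if all three of $A \setminus B$, $B \setminus A$, $A \cap B$ are nonempty and $A \cup B \neq V$ (any other relative position is compatible with laminarity). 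Since $\mathcal L$ is a maximal laminar family of tight sets, $S$ must cross at least one set $T \in \mathcal L$ — otherwise $\mathcal L \cup \{S\}$ would be a strictly larger laminar family of tight sets, contradicting maximality.

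The key step is then to apply Lemma~\ref{lem:uncrossing} to the pair $S, T$. Here one needs to check the hypothesis of that lemma: in our setting it requires $S$ and $T$ to be \emph{nonempty cuts} in the sense of Section~\ref{sec:LWS}. But any tight set is automatically a nonempty cut — an empty cut has an all-zero constraint row $\mathcal A_G(\cdot) = \mathbf 0$ and trivial requirement, so it cannot be tight (and indeed contributes nothing to any span). So the hypothesis is satisfied, and Lemma~\ref{lem:uncrossing} gives us one of two cases. In case~(1), $S \setminus T$ and $T \setminus S$ are tight and $\mathcal A_G(S) + \mathcal A_G(T) = \mathcal A_G(S\setminus T) + \mathcal A_G(T\setminus S)$; in case~(2), $S \cap T$ and $S \cup T$ are tight with the analogous identity. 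In either case, since $\mathcal A_G(T) \in \Span(\mathcal L)$ and $\mathcal A_G(S) \notin \Span(\mathcal L)$, the identity forces at least one of the two ``uncrossed'' sets (say $S'$) to also satisfy $\mathcal A_G(S') \notin \Span(\mathcal L)$ — because if both were in $\Span(\mathcal L)$ then $\mathcal A_G(S) = \mathcal A_G(S') + \mathcal A_G(S'') - \mathcal A_G(T)$ would be too.

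To reach the contradiction, I would invoke the standard counting fact that the uncrossed set $S'$ crosses \emph{strictly fewer} members of $\mathcal L$ than $S$ does. The argument is the usual one: $S'$ does not cross $T$ (it is nested with or disjoint from $T$), whereas $S$ did; and for every other $R \in \mathcal L$, because $R$ is laminar with $T$, one shows that if $S'$ crosses $R$ then $S$ already crossed $R$ — this is a short case analysis on the relative positions of $R$ against $T$. Hence the ``crossing number'' of $S'$ with respect to $\mathcal L$ is smaller than that of $S$, contradicting the minimality in the choice of $S$. This completes the proof. The one place to be slightly careful — and the only real obstacle — is confirming that every hypothesis Jain uses about tight sets survives the weakening from weak supermodularity to local weak supermodularity; but since tight $\Rightarrow$ nonempty cut, Lemma~\ref{lem:uncrossing} applies to exactly the pairs that arise, so the substitution is clean and the rest of the argument is purely combinatorial and identical to~\cite{Jain01}.
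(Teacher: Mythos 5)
Your proposal is correct and follows exactly the route the paper intends: it reproduces Jain's original uncrossing/minimal-crossing-number argument, and the one point that needs checking in this setting — that Lemma~\ref{lem:uncrossing} applies to every pair that arises because tight sets are automatically nonempty cuts — is precisely the observation the paper relies on when importing Jain's Lemma 4.2 unchanged. No gaps.
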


Recall that $n_h$ is the number of high-degree nodes, i.e., nodes of degree at least $k$ in $G$.  Then we have the following lemma, which is a modification of Lemma 4.3 of~\cite{Jain01} where we give a stronger bound on the number of sets.

\begin{lemma} \label{lem:tight-dimension}
The dimension of $\Span(\mathcal T)$ is at most $2n_h - 1$.
\end{lemma}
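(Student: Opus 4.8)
The plan is to follow the standard Jain-style counting argument, but track high-degree nodes carefully. By Lemma~\ref{lem:laminar-span}, $\Span(\mathcal T) = \Span(\mathcal L)$ for any maximal laminar family $\mathcal L$ of tight sets, so it suffices to bound $|\mathcal L|$, or more precisely the dimension of $\Span(\mathcal L)$. First I would recall the classical fact that a laminar family over a ground set of size $n$ in which no member is a singleton (or the whole set) has at most $2n-2$ members; the new twist is that we want a bound in terms of $n_h$ rather than $n$.

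The key observation is that a tight set $S$ must be a \emph{large} cut, i.e., $|\delta_G(S)| > k$: if $S$ were small then $\delta_G(S) \subseteq F \subseteq F'$, making $S$ an empty cut, and empty cuts have no corresponding constraint in LP($F'$) at all, so they cannot be tight. Hence every $S \in \mathcal L$ satisfies $|\delta_G(S)| > k$. I would then argue that this forces both $S$ and $\bar S$ to contain at least one high-degree node — or more usefully, I would restrict attention to the high-degree vertices: define $V_h$ to be the set of nodes of degree at least $k$, so $|V_h| = n_h$. The claim to prove is that the map $S \mapsto S \cap V_h$ is injective on $\mathcal L$ modulo the usual laminar bookkeeping, and that $S \cap V_h \neq \emptyset$ and $V_h \setminus S \neq \emptyset$ (equivalently $S \cap V_h$ is a proper nonempty subset of $V_h$) for every $S \in \mathcal L$. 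The second part follows because if $S \cap V_h = \emptyset$ then every node of $S$ has degree $< k$, so $|\delta_G(S)| \le \sum_{v \in S}\deg(v) < k$ (after a small argument, or simply: a cut can't exceed the total degree on its smaller side if that side has only low-degree vertices) — wait, that bound is $|\delta_G(S)| \le \sum_{v\in S}\deg_G(v)$ which could still be large; so instead one shows that if $S$ and $\bar S$ both touch $V_h$ we get injectivity, and the sets that don't can be charged away.

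Concretely, I would set up the standard induction on $|\mathcal L|$: let $\mathcal L$ be laminar, and consider it as a forest where children are maximal proper subsets. For a leaf $S$ with no high-degree vertex "owned" by it that isn't shared, we show $\mathcal A_G(S)$ is in the span of its ancestors or siblings, contradicting maximality/independence, unless $S$ privately contains a high-degree node. More cleanly, following Jain's Lemma 4.3 argument: assign to each set in $\mathcal L$ a distinct high-degree "witness" vertex. The point where this could break — a tight set all of whose exclusive vertices have low degree — cannot contribute to the span because its characteristic vector $\mathcal A_G(S)$ would be expressible via its children (a set whose boundary, after removing the children's boundaries, passes only through low-degree vertices contributes nothing new). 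So every set in a \emph{linearly independent} laminar subfamily owns a distinct high-degree vertex, giving at most $n_h$ sets — and the $2n_h - 1$ bound comes from the refined version where a set may own a high-degree vertex either "from inside" or be the relevant complement, plus the $-1$ from excluding $V$ itself (which has empty boundary). I would then invoke that a linearly independent family has size equal to the dimension, yielding $\dim \Span(\mathcal T) \le 2n_h - 1$.

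The main obstacle I anticipate is the precise charging argument showing that a tight set contributing a new dimension must "own" a high-degree vertex — i.e., adapting Jain's proof that a minimal counterexample laminar family has a member whose boundary, after subtracting children's boundaries, is supported on vertices that would otherwise have degree $< k$, and deriving a contradiction with linear independence there. This is exactly where degree-$\ge k$ (rather than just "vertex of $G$") has to be threaded through Jain's induction, and getting the constant right ($2n_h - 1$ versus $2n_h - 2$ or $n_h$) requires care about whether a vertex can be charged twice (once for a set, once for its complement's role) and about excluding trivial sets with empty boundary. I would handle this by closely mirroring the structure of Jain's Lemma 4.3 proof, substituting "high-degree vertex" for "vertex" wherever the argument uses that a vertex has a nonzero contribution, and checking that the substitution is valid precisely because low-degree vertices cannot be the sole support of any nontrivial difference of tight-set boundaries.
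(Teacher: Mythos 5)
Your high-level plan (restrict the laminar family $\mathcal L$ to the high-degree vertices $V_h$, show the restriction is a laminar family of distinct nonempty subsets of $V_h$, then apply the classical $2n_h - 1$ bound) is exactly the route the paper takes. But you do not actually establish either of the two facts this route requires, and you acknowledge as much at the precise point where the argument should close.

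The gap is that you never identify the mechanism that makes low-degree vertices invisible: every edge incident on a low-degree vertex $v$ lies in the singleton cut $\delta_G(\{v\})$, which has size $\deg(v) < k$, so that edge is in $F$ and does not appear in the constraint matrix at all. Your first attempt to show $S \cap V_h \neq \emptyset$ uses a cut-size bound $|\delta_G(S)| < k$, which you correctly notice is false, and you then write ``so instead one shows... and the sets that don't can be charged away'' without saying how. The correct argument has nothing to do with the cut being small: if $S$ is entirely low-degree then every edge of $\delta_G(S)$ touches a low-degree vertex (its endpoint in $S$), so $\delta_G(S) \subseteq F$, hence $\mathcal A_G(S) = 0$ and $S$ cannot be tight. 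The same mechanism gives injectivity of the restriction, which you wave off as ``the usual laminar bookkeeping'': if $S \subsetneq S'$ in $\mathcal L$ and $S' \setminus S$ is all low-degree, then every edge in the symmetric difference $(\delta_G(S) \setminus \delta_G(S')) \cup (\delta_G(S') \setminus \delta_G(S))$ has an endpoint in $S' \setminus S$, so it lies in $F$, so $\mathcal A_G(S) = \mathcal A_G(S')$, contradicting that $\mathcal L$ contains no two sets with identical rows. Your closing paragraph, which reaches for a Jain-Lemma-4.3-style induction with witness vertices, is an unnecessary detour and in any case has the same missing step (``contributes nothing new'' is the assertion that needs the $F$-membership argument). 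Without the $F$-membership observation, neither nonemptiness nor injectivity of the $V_h$-restriction is established, so the bound $2n_h - 1$ does not follow.
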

\iflong
\begin{proof}
Let $\mathcal L$ be a maximal laminar family of tight sets.  Lemma~\ref{lem:laminar-span} implies that $\Span(\mathcal L) = \Span(\mathcal T)$, so it suffices to upper bound the number of sets in $\mathcal L$.  And since we care about the span, if there are two sets $S, S'$ with $\mathcal A_G(S) = \mathcal A_G(S')$ then we can remove one of them from $\mathcal L$ arbitrarily, so no two sets in $\mathcal L$ have identical rows in the constraint matrix.  

Any set that consists of exclusively low degree nodes cannot be tight, since the set has no corresponding row in the constraint matrix. Thus, all sets in $\mathcal L$ must contain at least one high degree node, and hence all minimal sets in $\mathcal L$ have at least one high degree node. 

Let $S \in \mathcal L$, and let $S' \supset S$ so that every node in $S' \setminus S$ is a low-degree node.  Then every edge edge in $(\delta_G(S) \setminus \delta_G(S')) \cup (\delta_G(S') \setminus \delta_G(S))$ must be incident on at least one low-degree node and hence is in $F$.  Thus $\mathcal A_G(S) = \mathcal A_G(S')$, and hence $S'$ is not in $\mathcal L$.  Therefore, any superset $S'$ in the laminar family of some other set $S$ in the laminar family must have at least one more high degree node than $S$.  

Since any minimal set in $\mathcal L$ has at least one high degree node, and every set in $\mathcal L$ contains at least one more high degree node than any set in $\mathcal L$ that it contains, if we restrict each set in $\mathcal L$ to the high-degree nodes then we have a laminar family on the high-degree nodes.  Thus $|\mathcal L| \leq 2n_h - 1$.
\end{proof}
\fi

We can now prove Theorem~\ref{thm:kEFTS-unweighted}.

\begin{proof}[Proof of Theorem~\ref{thm:kEFTS-unweighted}]
We first solve LP($F$) using Lemma~\ref{lem:solve} to get some basic feasible solution $x$.  Since there are $|E \setminus F|$ variables, this point is defined by $|E \setminus F|$ linearly independent tight constraints.  Lemma~\ref{lem:tight-dimension} implies that at most $2n_h-1$ of these are from tight sets, and hence all of the other tight constraints must be of the form $x_e = 0$ or $x_e = 1$ for some edge $e \in E \setminus F$.  Thus at most $2n_h - 1$ edges are assigned a fractional value in $x$.  Hence if we include all such edges in our solution $H$, together with all edges with $x_e = 1$ and all edges in $F$, we have a solution which is feasible (by Lemma~\ref{lem:relaxation2}).  Note that any high-degree node must have degree at least $k$ in any feasible solution, and thus $OPT \geq \frac{k}{2} n_h$.  Hence our solution $H$ has size at most
\begin{align*}
    |H| &\leq \sum_{e \in E \setminus F} x_e + |F| + 2n_h \leq OPT + 2n_h\leq OPT + \frac{4}{k} OPT = \left(1+\frac{4}{k}\right)OPT. \qedhere
\end{align*}
\end{proof}

\subsection{Weighted $k$-EFTS}
Jain's approximation algorithm solves the initial LP, rounds up and removes any edges with $x_e \geq 1/2$ which results in a residual problem, and repeats.  This is obviously a $2$-approximation (see~\cite{Jain01} for details), but requires proving that there is always at least one edge with $x_e \geq 1/2$ so we can make progress (even in the residual problems).  This is accomplished by proving Lemmas~\ref{lem:uncrossing} and~\ref{lem:laminar-span} to show that the tight constraints can be ``uncrossed'' into a laminar family.  This requires weak supermodularity, but as discussed, since in our LP every tight constraint must be a nonempty constraint, it is sufficient to replace this with local weak supermodularity.  Jain then uses a complex counting argument based on this laminar family of tight constraints to prove that some edge $e$ must have $x_e \geq 1/2$.  Importantly, nothing in this counting argument depends on the cut requirement having any particular structure (e.g., weak supermodularity); it depends only on the fact that the family of tight constraints can be uncrossed to be laminar.  

Since local weak supermodularity is sufficient to uncross the tight constraints into a laminar family, we can simply apply Jain's counting argument on this family for LP($F'$) to obtain the following lemma (as in Theorem 3.1 of~\cite{Jain01}).
\begin{lemma} \label{lem:semi-integral}
For all $F' \supseteq F$, in any basic feasible solution $x$ of LP($F'$) there is at least one $e \in E \setminus F'$ with $x_e \geq 1/2$.
\end{lemma}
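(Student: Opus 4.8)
The plan is to reduce to Jain's argument: first use local weak supermodularity to uncross the tight cut-constraints of a basic solution into a laminar family that spans the same space, and then invoke Jain's fractional token-counting argument essentially verbatim. Fix $F' \supseteq F$ and a basic feasible solution $x$ of LP($F'$). As already observed, we may delete every edge with $x_e = 0$, so assume $0 < x_e \le 1$ for all $e \in E \setminus F'$. If some edge has $x_e = 1$ we are done, so suppose toward a contradiction that $x_e < 1/2$ for every $e \in E \setminus F'$. Since $x$ is basic and no bound constraint $x_e = 0$ or $x_e = 1$ is tight, $x$ is the unique solution of $m := |E \setminus F'|$ linearly independent tight cut-constraints; in particular $\dim \Span(\mathcal T) = m$.

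Next I would argue that the tight sets uncross into a laminar family. If $S$ is an empty cut then its constraint row $\mathcal A_G(S)$ is the zero vector, so it cannot be linearly independent with anything and may be discarded; hence the tight sets that matter for $\Span(\mathcal T)$ are all nonempty cuts. By Theorem~\ref{thm:LWS}, $f_{F'}$ is locally weakly supermodular with respect to $F'$. The key point---and this is where one must be careful---is that the weak-supermodularity hypothesis is used in~\cite{Jain01} \emph{only} to uncross a pair of tight sets $A,B$, which are nonempty cuts here, and by Lemma~\ref{lem:big_sets} the uncrossed pair (either $A\setminus B, B\setminus A$ or $A\cap B, A\cup B$) again consists of nonempty cuts, which is exactly the pair for which the local inequality of Theorem~\ref{thm:LWS} holds. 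Combining that local inequality with the submodular-type inequalities satisfied by the $x$-weighted cut function, exactly as in~\cite{Jain01}, yields Lemmas~\ref{lem:uncrossing} and~\ref{lem:laminar-span} with no further change. Consequently there is a maximal laminar family $\mathcal L$ of nonempty tight sets with $\Span(\mathcal L) = \Span(\mathcal T)$, and we may take $\mathcal L$ so that $\{\mathcal A_G(S) : S \in \mathcal L\}$ is a basis of $\Span(\mathcal T)$; thus $|\mathcal L| = m$.

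It then remains to run Jain's counting argument on $\mathcal L$ (Theorem~3.1 of~\cite{Jain01}). Since $f_{F'}$ is integer-valued and $x_e > 0$ for all $e$, every $S \in \mathcal L$ satisfies $x(\delta_G(S) \setminus F') = f_{F'}(S) \in \mathbb Z_{\ge 1}$, and by assumption every surviving edge has LP value strictly below $1/2$. Assign two tokens to each edge of $E \setminus F'$ (total $2m$) and redistribute them up the laminar forest so that each member of $\mathcal L$ retains at least two tokens and at least one token is left unassigned; this gives $2m \ge 2|\mathcal L| + 1 = 2m + 1$, a contradiction. Nothing in this redistribution uses the structure of $f_{F'}$ beyond the facts that the sets of $\mathcal L$ are tight, form a laminar family, and have integral positive requirements---all of which we have established---so the argument transfers unchanged. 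Hence some $e \in E \setminus F'$ has $x_e \ge 1/2$.

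The main obstacle is the middle step: one must verify that every appeal to weak supermodularity inside Jain's uncrossing is an appeal on a pair of tight sets, hence on nonempty cuts, and that the uncrossing operation keeps us inside the class of pairs covered by Theorem~\ref{thm:LWS}; only then is replacing weak supermodularity by local weak supermodularity genuinely lossless. Once the laminar basis of tight sets is in hand, the token count is a black box requiring no new ideas.
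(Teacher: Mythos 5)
Your proposal is correct and takes essentially the same route as the paper: discard empty cuts (zero rows), use Lemma~\ref{lem:big_sets} together with Theorem~\ref{thm:LWS} to show that uncrossing tight sets stays within the class of nonempty cuts and hence that Lemmas~\ref{lem:uncrossing} and~\ref{lem:laminar-span} go through verbatim, and then invoke Jain's token-counting argument on the resulting laminar basis. The paper's own justification is even terser---it simply asserts that local weak supermodularity suffices to uncross and then cites Theorem~3.1 of Jain---so the extra detail you supply (in particular, the explicit observation that the uncrossing keeps you inside the set of pairs covered by local weak supermodularity) is a faithful expansion of the same argument rather than a different one.
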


\iflong
Hence we have the following iterative rounding algorithm for weighted $k$-EFTS:
\begin{itemize}
    \item Let $F' = F$
    \item While $F'$ is not a feasible solution:
    \begin{itemize}
        \item Let $x$ be a basic feasible solution for LP($F'$) (obtained in polynomial time using Lemma~\ref{lem:solve})
        \item Let $E_{1/2} = \{e \in E \setminus F' : x_e \geq 1/2\}$, which must be nonempty by Lemma~\ref{lem:semi-integral}
        \item Add $E_{1/2}$ to $F'$
    \end{itemize}
\end{itemize}

This clearly returns a feasible solution, and the analysis of~\cite{Jain01} (particularly Theorem 3.2) implies that this is a $2$-approximation, which implies Theorem~\ref{thm:kEFTS-weighted}.
\else
When combined with the rest of the analysis in~\cite{Jain01} (particularly Theorem 3.2), this implies that iterative rounding is a $2$-approximation, implying Theorem~\ref{thm:kEFTS-weighted}.
\fi

\iflong
\section{Reduction to RSND on 2-Connected Graphs} \label{sec:reduction}
In this section, we give a reduction from the Relative Survivable Network Design (RSND) problem on general graphs to the RSND problem on $2$-connected graphs. We then use this reduction to give a $2$-approximation algorithm for the special case of RSND in which all demands are at most $2$.

\subsection{Definitions}
Let $G'=(V,E')$ be the subgraph of $G$ obtained by removing all edges in cuts of size 1 from $G$.  We now construct the \emph{component graph} $G_C$ as follows.

\begin{definition}
Let $G_C = (V_C, E_C)$ be a component graph, where each connected component $C \in G'$ is represented by a vertex $v_C \in V_C$. Let $C_i$ and $C_j$ be connected components in $G'$. The edge $(v_{C_i}, v_{C_j})$ is in $E_C$ if and only if there exists vertices $i \in C_i$ and $j \in C_j$, such that $(i,j) \in E$.
\end{definition}

It is easy to see that $G_C$ is a tree and that every connected component of $G'$ is $2$-edge connected. 

\begin{definition}
A vertex $t \in V$ is a \emph{terminal vertex} if $t$ is adjacent to at least one edge in $E \setminus E'$. For each terminal vertex $t$, let $P_t$ be the set of vertex pairs, $(u,v)$, such that $u$ and $v$ are in different connected components in $G'$, and such that every $u-v$ path uses an edge in $E \setminus E'$ that has $t$ as an endpoint.
\end{definition}

\subsection{Reduction}
We are now able to give a reduction to RSND on 2-connected graphs. Going forward, it will be easier to refer to RSND demands using a demand function. We say that an RSND instance on graph $G$ with demands $\{(s_i, t_i, k_i)\}_{i \in [\ell]}$ has a corresponding demand function, $r : V \times V \rightarrow \mathbb{Z}$, such that $r(s_i,t_i) = k_i$ for all $i$ and $r(u,v) = 0$ for all other pairs.

\paragraph{Reduction:}  We reduce from an RSND instance on input graph $G = (V,E)$ with demand function $r(u,v)$ on vertex pairs $u,v \in V$ and edge weights $w : E \rightarrow \mathbb{R}_{\geq 0}$ to a new instance of RSND. The new instance is on graph $G_R$, has edge weight function $w_R$, and a demand function $r_R(u,v)$.  The input graph and edge weight function are unchanged: We set $G_R = G$ and $w_R = w$. Now we define $r_R(u,v)$. For each connected connected component $C \in G'$, the reduction is as follows:
\begin{enumerate}
    \item For each vertex pair $u,v \in C$ such that $u$ and $v$ are not terminal vertices, set $\displaystyle r_R(u,v) = r(u,v)$
    \item For each vertex pair $u,t \in C$ such that $u$ is not a terminal vertex and $t$ is a terminal vertex, set  $\displaystyle r_R(u,t) =  \max\left\{ r(u,t), \max_{v : (u,v) \in P_t }r(u,v)\right\}$
    \item For each vertex pair $t_1, t_2 \in C$ such that $t_1$ and $t_2$ are terminal vertices, set \\ $\displaystyle r_R(t_1,t_2) = \max \left\{r(t_1,t_2), \max_{(v,w) \in P_{t_1} \cap P_{t_2} } r(v,w) \right\}$.
\end{enumerate}
For all vertex pairs $u,v$ such that $u$ and $v$ are in different connected components in $G'$, if $r(u,v) > 0$ then we set $r_R(u,v) = 1$.

\begin{lemma}
Any feasible solution to the RSND problem on input graph $G$ with edge weight function $w(e)$ and demand function $r(u,v)$ is also a feasible solution to the RSND problem on input graph $G_R$ with edge weight function $w_R(e)$ and demand function $r_R(u,v)$.
\end{lemma}
\begin{proof}
We show that given a feasible subgraph $H_A$ to the original instance, $H_A$ is also a feasible solution to the reduction instance (and thus has the same cost). In particular, we will show that for each vertex pair $u,v \in V$, for any edge fault set $F$ with $|F| < r_R(u,v)$, $u$ and $v$ are connected in $G \setminus F$ if and only if they are connected in $H_A \setminus F$. When this property holds for a fixed vertex pair $u,v$ in $H_A$, we say that $u$ and $v$ are relative fault tolerant with respect to $G$ under the reduction instance (that is, with demand function $r_R(u,v)$). We will show that $u$ and $v$ are relative fault tolerant with respect to $G$ under the reduction instance. We have the following cases:
\begin{enumerate}
    \item Vertices $u$ and $v$ are in the same connected component $C$ in $G'$, and neither $u$ nor $v$ is a terminal vertex. Then, $r_R(u,v) = r(u,v)$. Both RSND instances have the same demand for the vertex pair. The subgraph $H_A$ is a feasible solution to the original instance, so $u$ and $v$ are relative fault tolerant with respect to $G$ under the original instance. Therefore, $u$ and $v$ in must still be relative fault tolerant with respect to $G_R$ under the reduction instance in this case.
    
    \item Vertices $u$ and $v$ are in the same connected component $C$ in $G'$, and exactly one of $u$ and $v$ is a terminal vertex. Suppose without loss of generality that $v$ is the terminal vertex. First, suppose that $r_R(u,v) = r(u,v)$. Both instances have the same demand for this vertex pair, so the argument is identical to that given in Case 1. Now suppose that $r_R(u,v) = \max_{x : (u,x) \in P_v }r(u,x)$. Let $x = \argmax_{x : (u,x) \in P_v }r(u,x)$.
    \begin{itemize}
        \item Suppose $u$ and $x$ are connected in $G \setminus F$, where $|F| < r(u,x) = r_R(u,v)$. Vertices $u$ and $x$ are in different connected components in $G'$, and every $u-x$ path must use an edge in $E \setminus E'$ that has $v$ as an endpoint. Therefore, $u$ and $v$ must also be connected in $G \setminus F$. Since $H_A$ is a feasible solution to the original instance, we have that if $u$ and $x$ are connected in $G \setminus F$ for some fault set $F$ with $|F| < r(u,x)$, then $u$ and $x$ are also connected in $H_A \setminus F$. Combining this with the fact that a path from $u$ to $x$ implies a path from $u$ to $v$ gives us the following:  If $u$ and $x$ are connected in $G \setminus F$ for some fault set $F$ with $|F| < r(u,x) = r_R(u,v)$, then $u$ and $v$ are connected in both $G \setminus F$ and in $H_A \setminus F$. Therefore, since $u$ and $x$ are connected in $G \setminus F$ (and in $H_A \setminus F$), $u$ and $v$ must be connected in $H_A \setminus F$. This means that $u$ and $v$ are relative fault tolerant with respect to $G$ under the reduction instance in this case. 
        
        \item Now suppose $u$ and $x$ are not connected in $G \setminus F$, but $u$ and $v$ are still connected in $G \setminus F$, for some fault set $F$ with $|F| < r(u,x) = r_R(u,v)$. Since $G_C$ is a tree and $u$ and $v$ are in the same connected component, $C \in G'$, vertices $u$ and $v$ can only be separated in $H_A$ by edges in $E(C)$. Therefore we only consider the edges in $F$ that are in $E(C)$. Let $F_C = E(C) \cap F$, and note that $u$ and $v$ are connected in $G \setminus F_C$. We will show that $u$ and $v$ must also be connected in $H_A \setminus F_C$ (and therefore in $H_A \setminus F$). Vertices $u$ and $v$ are connected in $G \setminus F_C$, and $F_C \subseteq E(C)$; therefore, $u$ and $x$ are also connected in $G \setminus F_C$. Additionally, since $|F_C| < r(u,x)$, we have the following: If $u$ and $x$ are connected in $G \setminus F_C$, then $u$ and $x$ are also connected in $H_A \setminus F_C$. This implies that $u$ and $v$ are connected in $H_A \setminus F_C$, and therefore in $H_A \setminus F$, meaning that $u$ and $v$ are relative fault tolerant with respect to $G$ under the reduction instance in this case. 
    \end{itemize}
    
    \item Vertices $u$ and $v$ are in the same connected component $C$ in $G'$, and both $u$ and $v$ are terminal vertices. First, suppose that $r_R(u,v) = r(u,v)$. Both instances have the same demand for the vertex pair, and so the argument is identical to that given in Case 1. Now suppose that $r_R(u,v) = \max_{(x,y) \in P_{u} \cap P_{v} } r(x,y)$. Let $(x,y) =  \argmax_{(x,y) \in P_{u} \cap P_{v} } r(x,y)$.
    \begin{itemize}
        \item Suppose $x$ and $y$ are connected in $G \setminus F$, where $|F| < r(x,y) = r_R(u,v)$. Vertices $x$ and $y$ are in different connected components in $G'$, and every $x-y$ path must use an edge that has $u$ as an endpoint and an edge that has $v$ as an endpoint. Therefore, $u$ and $v$ must also be connected in $G \setminus F$. Additionally, $H_A$ is a feasible solution to the original instance, so we have that if $x$ and $y$ are connected in $G \setminus F$ for some fault set $F$ with $|F| < r(x,y)$, then $x$ and $y$ are also connected in $H_A \setminus F$. Combining this with the fact that a path from $x$ to $y$ implies a path from $u$ to $v$ gives us the following:  If $x$ and $y$ are connected in $G \setminus F$ for some fault set $F$ with $|F| < r(x,y) = r_R(u,v)$, then we have that $u$ and $v$ are also connected in both $G \setminus F$ and in $H_A \setminus F$. Therefore, $u$ and $v$ must be connected in $H_A \setminus F$, and so $u$ and $v$ are relative fault tolerant under the reduction instance in this case.
        
        \item Now consider the case when $x$ and $y$ are not connected in $G \setminus F$, but $u$ and $v$ are connected in $G \setminus F$, for some fault set $F$ with $|F| < r(x,y) = r_R(u,v)$. Since $G_C$ is a tree, $u$ and $v$ can only be separated in $H_A$ by edges in $E(C)$. Therefore, we only consider the edges in $F$ that are in $E(C)$. Let $F_C = E(C) \cap F$, and note that $u$ and $v$ are connected in $G \setminus F_C$. We will show that $u$ and $v$ must also be connected in $H_A \setminus F_C$ (and therefore in $H_A \setminus F$). Vertices $u$ and $v$ are connected in $G \setminus F_C$, and $F_C \subseteq E(C)$; therefore, $x$ and $y$ are also connected in $G \setminus F_C$. Additionally, because $|F_C| < r(x,y)$, we have that if $x$ and $y$ are connected in $G \setminus F_C$, then $x$ and $y$ are also connected in $H_A \setminus F_C$. This implies that $u$ and $v$ are connected in $H_A \setminus F_C$, and therefore in $H_A \setminus F$. Thus, $u$ and $v$ are relative fault tolerant with respect to $G$ under the reduction instance in this case.
    \end{itemize}
    
    \item Vertices $u$ and $v$ are in different connected components in $G'$. There is only one edge-disjoint path from $u$ to $v$ in $G$. In the reduction instance, if $r_R(u,v) = 1$ then $r(u,v) > 0$.  Since $H_A$ is feasible, if $r(u,v) > 0$, then there is a single path from $u$ to $v$ in $H_A$ if there is a path from $u$ to $v$ in $G$. Therefore, the demand $r_R(u,v) = 1$ is always satisfied in $H_A$, and so $u$ and $v$ are relative fault tolerant with respect to $G$ under the reduction instance.\qedhere
\end{enumerate}
\end{proof}

We now show that a feasible solution to the reduction RSND instance is a feasible solution to the original instance.

\begin{lemma}
\label{lem:reduction}
Any feasible solution to the RSND problem on input graph $G_R$ with edge weight function $w_R(e)$ and demand function $r_R(u,v)$ is also a feasible solution to the original RSND problem on input graph $G$ with edge weight function $w(e)$ and demand function $r(u,v)$.
\end{lemma}
\begin{proof}
We show that given a feasible solution subgraph $H_B$ to the reduction instance, $H_B$ is also a feasible solution to the original RSND instance (and thus has the same cost). If vertices $u$ and $v$ are in the same connected component in $G'$, then $r_R(u,v) \geq r(u,v)$. As a result, $u$ and $v$ are relative fault tolerant with respect to $G$ under the original instance. 

Suppose instead that $u$ and $v$ are in different connected components. Let $C_u$ and $C_v$ be different connected components in $G'$, and let $u \in C_u$ and $v \in C_v$ be vertices in these components. We will show that if $u$ and $v$ are connected in $G \setminus F$, where $F$ is an edge fault set with $|F| < r(u,v)$, then $u$ and $v$ are connected in $H_B \setminus F$.

The component subgraph $G_C$ is a tree and $u$ and $v$ are in different components, so there is a size 1 cut that separates $u$ and $v$ in $G$. Therefore, if $u$ and $v$ are connected in $G \setminus F$ for some edge fault set $F$, then $F$ cannot have an edge from any of the size 1 cuts that separate $u$ and $v$. Note that all other size 1 cuts are not on any $u$-$v$ path. As a result, we only need to consider fault sets $F$ such that $|F| > 1$ and $F$ does not contain size 1 cuts. Any such $F$ must have all edges within the connected components of $G'$. We can assume without loss of generality that all edges in $F$ are in the same connected component in $G'$. We will now show that if $u$ and $v$ are connected in $G \setminus F$, where $|F| < r(u,v)$, then $F$ cannot separate $u$ or $v$ from any of its terminal vertices in $H_B \setminus F$ (and therefore $u$ and $v$ are connected in $H_B \setminus F$).

Suppose $F$, with $|F| < r(u,v)$, is one of these fault sets, and that without loss of generality that $F \subseteq E(C_u)$ (the argument is identical when $F \subseteq E(C_v)$; we will later handle the case when $F \subseteq E(C_i)$ where $i \neq u,v$). Let $t_u \in C_u$ be the terminal vertex such that $(u,v) \in P_{t_u}$. Since $r(u,v) \leq r_R(u,t_u)$, we have that if $u$ and $t_u$ ($v$ and $t_v$) are connected in $G \setminus F$, then $u$ and $t_u$ ($v$ and $t_v$) are connected in $H_B \setminus F$.

Now suppose that $u$ and $v$ are connected through some other connected component, $C_i$, such that $u,v \notin C_i$. Also, let $t_1$ and $t_2$ be terminal vertices in $C_i$ such that $(u,v) \in P_{t_1} \cap P_{t_2}$.  Suppose in addition that $F \in C_i$, and that there is a path from $u$ to $t_1$ and a path from $v$ to $t_2$ in $G \setminus F$. If $t_1 \neq t_2$, then because $r(u,v) \leq r_R(t_1,t_2)$, we have that if $t_1$ and $t_2$ are connected in $G \setminus F$, then they are connected in $H_B \setminus F$. We have shown that if $u$ and $v$ are connected in $G \setminus F$, where $|F| < r(u,v)$, then in $H_B$, $F$ cannot separate $u$ or $v$ from any of their terminal vertices.

Finally, we consider the empty fault set. That is, we want to show that if $u$ and $v$ are connected in $G$, then they are connected in $H_B$. For every vertex pair $u,v$, if $r(u,v) > 0$, then $r_R(u,v) = 1$. Subgraph $H_B$ is feasible, so if $r_R(u,v) = 1$ then there is a path from $u$ to $v$ in $H_B$ if there is a path from $u$ to $v$ in $G$. This means that if $r(u,v) > 0$, there is a path from $u$ to $v$ in $H_B$ if there is a path from $u$ to $v$ in $G$. Since there is only 1 edge-disjoint path from $u$ to $v$ in $G$, we have that $u$ and $v$ are relative fault tolerant with respect $G$ under the original instance.
\end{proof}

We have shown that any feasible solution to one instance is also a feasible solution to the other. This also implies that the optimal solution to both the original and reduction instances is the same, and has the same value. We now give a corollary that allows us to assume that any input graph of an RSND instance is 2-edge connected.

\begin{theorem}
\label{thm:2connected}
If there exists an $\alpha$-approximation algorithm for RSND on 2-edge connected graphs, then there is an $\alpha$-approximation algorithm for RSND on general graphs. 
\end{theorem}
\begin{proof}
Suppose we have an $\alpha$-approximation algorithm for RSND on 2-edge connected graphs. An $\alpha$-approximation algorithm for RSND on general graphs is as follows: Perform the reduction described above, and run the $\alpha$-approximation algorithm for RSND on 2-edge connected graphs on each connected component in $G'$ (recall that each component is 2-edge connected). Then, for each edge $e \in E \setminus E'$, we include $e$ in the solution subgraph $H$ if there exists a pair of connected components $C_i, C_j \in G'$ such that $e$ is on the path from $C_i$ and $C_j$ and there exists vertices $v_i \in C_i$ and $v_j \in C_j$ such that $r_R(v_i,v_j) > 0$.

The algorithm returns a subgraph that is a feasible solution to the reduction instance, so the subgraph is a feasible solution to the original RSND instance by Lemma \ref{lem:reduction}. Now we will show that the algorithm gives an $\alpha$-approximation of the RSND problem. Let $H$ be the solution subgraph returned by the algorithm, and let $H^*$ be the optimal solution to the RSND problem instance. Let $C_1, C_2, \dots, C_\ell$ be the connected components of $G'$. For a fixed connected component $C_i$, let $H_i = H[C_i]$ be the subgraph of $H$ induced by component $C_i$, and let $H_i^* = H^*[C_i]$ be the subgraph of $H^*$ induced by $C_i$. Let $c(S)$ denote the total weight of a subgraph $S$, and $c(T)$ denote the sum of the weights of each edge in the edge set $T$. 

Each connected component under the reduction instance is an instance of the RSND problem on 2-edge connected graphs, and the algorithm runs the $\alpha$-approximation for 2-edge connected RSND on each instance. Hence $c(H_i) \leq \alpha \cdot c(H_i^*)$ for all $i$.  Summing over all connected components, we get that
\begin{align*}
     \sum_{i=1}^{\ell} c(H_i) &\leq \alpha \cdot \sum_{i=1}^{\ell} c(H_i^*).
\end{align*}

Finally, let $\tilde{E}$ be the set of edges from size 1 cuts in $G$ that are included in the algorithm solution $H$. Additionally, let $\tilde{E}^*$ be the set of edges from size 1 cuts in $G$ that are included in the optimal solution. Any edge $e$ from a size 1 cut must be included in any feasible solution if $e$ connects a vertex pair with positive demand. The algorithm only selects the edges from size 1 cuts that must be included in any feasible solution, so $\tilde{E} = \tilde{E^*}$. Putting everything together, we have the following:
\begin{align*}
    c(ALG) &= \sum_{i=1}^{\ell} c(H_i) + c(\tilde{E}) \leq \alpha  \sum_{i=1}^{\ell} c(H_i^*) +  c(\tilde{E}) \leq \alpha \left( \sum_{i=1}^{\ell} c(H_i^*) + \cdot c(\tilde{E}^*) \right) = \alpha \cdot c(OPT). \qedhere
\end{align*}
\end{proof}

\subsection{Special Case: 2-RSND}
\label{app:2-RSND}
The $k$-RSND problem is a special case of the RSND problem. In $k$-RSND, the input is still a graph $G=(V,E)$ and a demand for each vertex pair; however, all demands are at most $k$. 

\begin{theorem} \label{thm:2RSND}
There is a $2$-approximation algorithm for $2$-RSND.
\end{theorem}
\begin{proof}
For every vertex pair $u,v$ in a two-connected graph, the number of edge disjoint paths from $u$ to $v$ is at least $2$. Therefore, an instance of 2-connected 2-RSND is an instance of SND, and hence Jain's $2$-approximation for SND~\cite{Jain01} is 2-approximation for 2-RSND on 2-connected graphs. By Theorem~\ref{thm:2connected}, this gives a 2-approximation algorithm for 2-RSND on general graphs. 
\end{proof}

Unfortunately, this algorithm does not directly extend to larger demand upper bounds. If we require the connected components in $G'$ to be $k$-edge connected, where $k$ is the demand upper bound, then the edges in $E \setminus E'$ may not form a tree on the components of $G'$. This would require some other means for selecting edges between these connected components. If we instead carry out the reduction from Theorem~\ref{thm:2connected}, then each connected component would still be an instance of general RSND, since there may be demands that are larger than 2, while each connected component is only guaranteed to be 2-edge connected.

\else
\section{2-Connectivity and $k=2$} \label{sec:reduction}
We will now move on from $k$-EFTS to the more general RSND problem.  It turns out to be relatively straightforward to handle cuts of size $1$: removing such cuts gives a tree of $2$-connected components, and we can essentially run an algorithm independently inside each component.  This gives the following theorem, the proof of which can be found in Appendix~\ref{app:reduction}.  
\begin{theorem}
\label{thm:2connected}
If there exists an $\alpha$-approximation algorithm for RSND on 2-edge connected graphs, then there is an $\alpha$-approximation algorithm for RSND on general graphs. 
\end{theorem}

Extending this slightly gives the following theorem (proof in Appendix~\ref{app:2-RSND}), where $2$-RSND denotes the special case of the RSND problem where $k_i \leq 2$ for all $i$.

\begin{theorem} \label{thm:2RSND}
There is a $2$-approximation algorithm for $2$-RSND.
\end{theorem}
\fi

\section{RSND with a Single Demand: $k=3$} \label{sec:RSND}
In this section we prove Theorem~\ref{thm:3RSND}.  In the Single Demand RSND problem, we are given a graph $G=(V,E)$ (possibly with edge weights $w : E \rightarrow \mathbb{R}^+$) and a $k$-relative fault tolerance demand for a single vertex pair $(s,t)$. In other words, the set of connectivity demands is just $\{(s,t,k)\}$.
We give a $7-\frac14 = \frac{27}{4}$-approximation algorithm for the $k=3$ Single Demand RSND problem. The main idea is to partition the input graph using important separators, prove a structure lemma which characterizes the required connectivity guarantees within each component of the partition, and then achieve these guarantees using a variety of subroutines: a min-cost flow algorithm, a 2-RSND approximation algorithm (Theorem~\ref{thm:2RSND}), and a Steiner Forest approximation algorithm~\cite{AKR95}.  

\subsection{Decomposition} \label{sec:decomposition}
By Theorem~\ref{thm:2connected}, an $\alpha$-approximation algorithm for RSND on $2$-connected graphs implies an $\alpha$-approximation algorithm for RSND on general graphs. Hence going forward, we will assume the input graph $G$ is 2-connected. In this section we define important separators and describe how to construct what we call the \emph{$s-t$ 2-chain} of $G$. 

\begin{definition}
Let $X$ and $Y$ be vertex sets of a graph $G$. An \textit{$(X,Y)$-separator} of $G$ is a set of edges $S$ such that there is no path between any vertex $x \in X$ and any vertex $y \in Y$ in
$G \setminus S$. An $(X,Y)$-separator $S$ is \textit{minimal} if no subset $S' \subset S$ is also an $(X,Y)$-separator.  If $X = \{x\}$ and $Y = \{y\}$, we say that $S$ is an $(x,y)$-separator.
\end{definition}

The next definition, which is a slight modification of the definition due to~\cite{M11}, is a formalization of a notion of a ``closest'' separator.  

\begin{definition}
Let $S$ be an $(X,Y)$-separator of graph $G$, and let $R$ be the vertices reachable from $X$ in $G \setminus S$. Then $S$ is an \textit{important} $(X,Y)$-separator if $S$ is minimal and there is no $(X,Y)$-separator $S'$ such that $|S'| \leq |S|$ and $R' \subset R$, where $R'$ is the set of vertices reachable from $X$ in $G \setminus S'$.
\end{definition}

This definition corresponds to a ``closest'' separator, while the original definition of~\cite{M11} correspond to a ``farthest'' separator.  Important separators have been studied extensively due to their usefulness in fixed-parameter tractable algorithms, and so much is known about them.  For our purposes, we will only need the following lemma, which follows directly from Theorem 2 of~\cite{M11}.

\begin{lemma} 
\label{lem:imp_sep}
Let $X,Y \subseteq V$ be two sets of vertices in graph $G = (V,E)$, and let $d \geq 0$. An important $(X,Y)$-separator of size $d$ can be found in time $4^d \cdot n^{O(1)}$ (if one exists), where $n = |V|$.
\end{lemma}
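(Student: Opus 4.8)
The plan is to obtain this as a direct corollary of the enumeration guarantee for important separators in~\cite{M11}, after reconciling the ``closest'' formulation used here with the ``farthest'' formulation used there. The first step is to observe that an important $(X,Y)$-separator in our sense is exactly an important $(Y,X)$-separator in the sense of~\cite{M11}. Indeed, a \emph{minimal} edge $(X,Y)$-separator $S$ is determined by the set $R$ of vertices reachable from $X$ in $G\setminus S$: writing $R$ for \emph{exactly} that reachable set, we have $S=\delta_G(R)$, $X\subseteq R$, and $R\cap Y=\emptyset$, and the same edge set $S$ is a minimal $(Y,X)$-separator whose ``reachable from $Y$'' set is $V\setminus R$. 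Under this bijection, our requirement ``no $(X,Y)$-separator $S'$ with $|S'|\le|S|$ has $R'\subsetneq R$'' is precisely the requirement ``no $(Y,X)$-separator $S'$ with $|S'|\le|S|$ has its $Y$-side strictly containing the $Y$-side of $S$,'' which is the definition of an important $(Y,X)$-separator from~\cite{M11}. So the two families of separators coincide.

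The second step is to invoke Theorem~2 of~\cite{M11} with the ordered terminal pair $(Y,X)$ and parameter $d$: it enumerates all important $(Y,X)$-separators of size at most $d$ --- of which there are at most $4^d$ --- in time $4^d\cdot n^{O(1)}$. By the first step this list is exactly the set of important $(X,Y)$-separators (our sense) of size at most $d$. The third step is then trivial: scan the list and return any element of size exactly $d$, reporting that none exists if the list contains no such element (which in particular happens when the list is empty, e.g.\ when the minimum $(X,Y)$-cut exceeds $d$). Correctness holds because any important $(X,Y)$-separator of size $d$ has size at most $d$ and hence is produced by the enumeration; the running time is dominated by the enumeration and is $4^d\cdot n^{O(1)}$.

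The only place requiring care --- and hence the ``main obstacle,'' such as it is --- is the definitional bookkeeping in the first step. Two small points must be checked: that one is using the edge-separator version of Theorem~2 of~\cite{M11} (or else reducing edge separators to vertex separators by a size-preserving transformation), and that one consistently takes $R$ (respectively the $Y$-side) to be the \emph{exact} reachable set, so that components of $G\setminus S$ attached to neither $X$ nor $Y$ do not cause the strict-containment conditions on the two sides to diverge. Once these are pinned down, the equivalence of the two notions --- and therefore the lemma --- follows immediately.
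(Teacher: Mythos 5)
Your proof is correct and takes essentially the paper's route: the paper simply asserts that the lemma ``follows directly from Theorem 2 of~\cite{M11},'' relying on the observation (made just above the lemma) that its definition is the ``closest'' mirror of the ``farthest'' notion in~\cite{M11}. You have spelled out exactly the duality that citation implicitly invokes --- an important $(X,Y)$-separator here is precisely an important $(Y,X)$-separator in the sense of~\cite{M11}, since a minimal separator $S$ satisfies $S=\delta_G(R)=\delta_G(Q)$ for the $X$- and $Y$-reachable sides $R,Q$, so strictly shrinking one side strictly grows the other.
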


By Lemma \ref{lem:imp_sep}, we can find an important $(X,Y)$-separator of size 2 in polynomial time. We now describe how to use this to construct what we call the \textit{$s-t$ 2-chain} of $G$. First, if there are no important $(s,t)$-separators of size 2 in $G$, then every $(s,t)$-separator has size at least 3. Hence we can just use the 2-approximation for Survivable Network Design~\cite{Jain01} with demand $(s,t,3)$ to solve the problem (or can exactly solve it by finding the cheapest three pairwise disjoint $s-t$ paths in polynomial time using a min-cost flow algorithm). 

If such an important separator exists, then we first find an important $(s,t)$-separator $S_0$ of size 2 in $G$, and let $R_0$ be the set of vertices reachable from $s$ in $G \setminus S_0$.  We let $V_{(0,r)}$ be the nodes in $R_0$ incident on $S_0$, and let $V_{(1,\ell)}$ be the nodes in $V \setminus R_0$ incident on $S_0$.  We then proceed inductively.  Given $V_{(i,\ell)}$, if there is no important $(V_{(i,\ell)}, t)$ separator of size $2$ in $G \setminus (\cup_{j=0}^{i-1} R_j)$ then the chain  is finished.  Otherwise, let $S_i$ be such a separator, let $R_i$ be the nodes reachable from $V_{(i,\ell)}$ in $(G \setminus (\cup_{j=0}^{i-1} R_j)) \setminus S_i$, let $V_{(i, r)}$ be the nodes in $R_i$ incident on $S_i$, and let $V_{(i+1, \ell)}$ be the nodes in $V \setminus (\cup_{j=0}^i R_j)$ incident on $S_i$.

After this process completes we have our $s-t$ 2-chain, consisting of components $R_0, \dots,R_p$ along with important separators $S_0, \dots, S_{p-1}$ between the components.  See Figure \ref{fig:chain}. 

\begin{figure}
    \centering
    \includegraphics[scale=1]{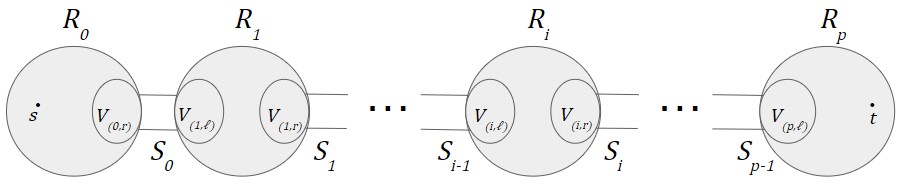}
    \caption{The $s-t$ 2-chain of $G$.}
    \label{fig:chain}
\end{figure}

We can now use this chain construction to give a structure lemma which characterizes feasible solutions. Informally, the lemma states that a subgraph $H$ of $G$ is a feasible solution if and only if in the $s-t$ 2-chain of $G$, all edges between components are in $H$, and in every component $R_i$ certain connectivity requirements between $V_{(i,\ell)}$ and $V_{(i,r)}$ are met. 

Let $G = (V,E)$ be a graph, and let $H$ be a subgraph of $G$. Going forward, we will say that in $H$, a vertex set $A \subset V$ has a path to (or is reachable from) another vertex set $B \subset V$ if there is a path from a vertex $a \in A$ to a vertex $b \in B$ in $H$. Additionally, let $X$ and $Y$ be vertex sets. We also say that $H$ satisfies the RSND demand $(X,Y,k)$ on input graph $G$ if the following is true: for every $F \subseteq E$ with $|F| < k$, if there is a path from at least one vertex in $X$ to at least one vertex in $Y$ in $G \setminus F$ then there is a path from at least one vertex in $X$ to at least one vertex in $Y$ in $H \setminus F$. The demand $(X,Y,k)$ on input $G$ is equivalent to contracting all nodes in $X$ to create super node $v_X$, contracting all nodes in $Y$ to create super node $v_Y$, and including demand $(v_X,v_Y,k)$. We will also let $G[R_i]$ and $H[R_i]$ be the subgraphs of $G$ and $H$, respectively, induced by the component $R_i$.

\begin{lemma}[Structure Lemma]
\label{lem:partition_characterize}
Let $G$ be the input graph, and let $H$ be a subgraph of $G$. Additionally, let $R_0, \dots, R_p$ denote the components in the $s-t$ 2-chain of $G$, and let $S_0, \dots, S_{p-1}$ denote the edge sets between components in the chain, as defined previously. Let $G_i = G[R_i]$, and $H_i = H[R_i]$. Then $H$ is a feasible solution to the $k=3$ Single Demand RSND problem if and only if all edges in $S_0, \dots, S_{p-1}$ are included in $H$, and $H_i$ has the following properties for every $i$:
\begin{enumerate}
    \item There are at least 3 edge-disjoint paths from $V_{(i,\ell)}$ to $V_{(i,r)}$.
    
    \item $H_i$ is a feasible solution to RSND on input graph $G_i$ with demands \[\left\{(V_{(i,\ell)},v_r,2) : v_r \in V_{(i,r)}\right\} \cup \left\{(V_{(i,r)},v_\ell,2) : v_\ell \in V_{(i,\ell)}\right\}.\]
    
    \item $H_i$ is a feasible solution to RSND on input graph $G_i$ with demands $\left\{(u,v,1) : (u,v) \in V_{(i,\ell)} \times V_{(i,r)}\right\}$.
\end{enumerate}
\end{lemma}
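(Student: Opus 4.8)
The plan is to prove both directions of the characterization by carefully analyzing how fault sets of size at most $2$ interact with the chain structure. The key structural fact I will use throughout is that the $s-t$ $2$-chain is built from \emph{important} separators of size $2$, which are ``closest'' to the source side; this should give two crucial consequences. First, by the chain construction, for each $i$ there is no $(s,t)$-separator of size $2$ ``inside'' $R_i$ between $V_{(i,\ell)}$ and $V_{(i,r)}$ that is more source-proximal than $S_{i-1}$, i.e., the only $2$-cuts separating $s$ from $t$ are exactly the $S_j$'s (up to the reachable-set ordering). Second, because the $S_j$ are minimal $(s,t)$-separators of size $2$, each $S_j$ is a bond: removing it splits $G$ into exactly the two pieces $\cup_{i \le j} R_i$ and $\cup_{i > j} R_i$, and moreover (since these are size-$2$ cuts in a $2$-connected graph) any single edge of $S_j$ can be faulted and $s,t$ remain connected through the other edge of $S_j$. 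I would state these as preliminary observations before the main argument.

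For the ``only if'' direction, suppose $H$ is feasible. The edges of each $S_j$ must be in $H$: if $e \in S_j$ is missing, then faulting the other edge of $S_j$ (a single fault) disconnects $s$ from $t$ in $H$ but not in $G$, contradiction — this is essentially the forced-edge argument already used for $k$-EFTS in the excerpt. For property~1, I would argue that there must be $3$ edge-disjoint $V_{(i,\ell)}$-to-$V_{(i,r)}$ paths in $H_i$: if there were a cut of size $\le 2$ inside $R_i$ separating these boundary sets, then combined with the structure of $G$ (paths from $s$ enter $R_i$ only through $V_{(i,\ell)}$ and leave only through $V_{(i,r)}$) that cut would be (or would contain) a $2$-cut separating $s$ from $t$, which would have been found earlier as an important separator — contradicting maximality/construction of the chain (I need to be careful here: the contradiction is that such a cut would be an important separator not captured, so I should phrase it via the ``closest separator'' minimality). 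For properties~2 and~3, I take a fault set $F$ with $|F| < 3$ (resp. $< 2$), localize it to a single component using the tree-like / bond structure of the $S_j$'s (a fault set of size $\le 2$ that keeps $s,t$ connected cannot contain a full $S_j$, and by the same component-separation argument we may assume $F \subseteq E(R_i)$ for a single $i$), and then translate an $s$-to-$t$ path in $G \setminus F$ into a $V_{(i,\ell)}$-to-$V_{(i,r)}$ connection requirement in $G_i \setminus F$, which $H_i$ must satisfy. The demand lists in~2 and~3 are exactly the patterns that arise: when $|F| = 2$ both faults sit in $R_i$ and we may need to connect all of $V_{(i,\ell)}$ to a \emph{specific} boundary vertex on the other side (since the path on the far side might funnel through one vertex of $S_i$), giving the $(V_{(i,\ell)}, v_r, 2)$ demands; when $|F| \le 1$ we just need some $V_{(i,\ell)}$-to-$V_{(i,r)}$ path, but the refinement to all pairs $(u,v)$ comes from choosing the single fault to be one edge of $S_{i-1}$ or $S_i$, which pins down the entry/exit vertex — hence the pairwise demands in~3. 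Verifying that these are \emph{exactly} the right demands (no stronger requirement is forced, and nothing weaker suffices) is where most of the work lies.

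For the ``if'' direction, assume all $S_j$-edges are in $H$ and each $H_i$ satisfies properties 1--3; I must show $H$ is feasible. Take any $F \subseteq E$ with $|F| < 3$ such that $s,t$ are connected in $G \setminus F$; I want an $s$-$t$ path in $H \setminus F$. Since $|F| \le 2$, $F$ cannot contain any entire $S_j$ (else $s,t$ disconnected in $G \setminus F$), so for each $j$ at least one edge of $S_j$ survives in $H \setminus F$ (using $S_j \subseteq H$). Decompose $F = \bigsqcup_i F_i$ with $F_i = F \cap E(R_i)$ (plus possibly single edges inside the $S_j$'s, which don't matter since the other edge survives). If all $F_i = \emptyset$ we route through property~1; otherwise at most one $F_i$ is nonempty with $|F_i|=2$, or at most two with $|F_i| = 1$ each. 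In each component I stitch together a boundary-to-boundary path in $H_i \setminus F_i$: in the ``hard'' component (where $F_i \ne \emptyset$) I invoke property~2 or~3 with the appropriate boundary vertices determined by which edges of $S_{i-1}, S_i$ survive in $H \setminus F$, and in all other components I use property~1 (which gives $3 > |F|$ edge-disjoint boundary paths, so one survives any single fault). Concatenating these across $R_0, \dots, R_p$ through the surviving $S_j$-edges yields the desired $s$-$t$ path in $H \setminus F$. The delicate point is the \emph{case analysis on where the at-most-two faults land relative to the separators} — getting the matching between ``which $S_j$ edge survives'' and ``which RSND demand in property 2 or 3 to apply'' exactly right.

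\textbf{Main obstacle.} I expect the bulk of the difficulty — and the reason the authors call this ``surprisingly difficult'' — to be in pinning down precisely the demand sets in properties 2 and 3 and proving both that they are \emph{necessary} (some fault configuration genuinely forces each listed demand, including the asymmetric $(V_{(i,\ell)}, v_r, 2)$ form where one side is a set and the other a single vertex) and \emph{sufficient} (no fault of size $\le 2$ escapes being handled by this list, which requires the careful localization-to-one-component argument leaning on the important-separator / bond structure). In particular, showing that a size-$2$ cut inside some $R_i$ that separates the two boundary sets would have been discovered as an important separator by the chain construction — i.e., that the chain truly ``uses up'' all the relevant $2$-separators — is the structural crux, and it is exactly where the ``closest separator'' definition (rather than \cite{M11}'s ``farthest'') is needed.
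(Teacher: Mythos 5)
Your high-level plan matches the paper's: prove both directions by exploiting the chain, force the separator edges into any feasible solution, localize faults section-by-section, and route through the three per-component properties. Your identification of the crux for Property~1's necessity --- that a $2$-cut separating $V_{(i,\ell)}$ from $V_{(i,r)}$ inside $R_i$ would yield a closer $(V_{(i,\ell)},t)$-separator, contradicting the importance of $S_i$ --- is exactly the paper's argument, and your observation that the closest-separator variant of importance is what makes it work is on point.

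However, the routing in your ``if'' direction has the roles of the properties reversed, and this is a genuine error rather than a presentational slip. A fault-free component is \emph{not} handled by Property~1: having three edge-disjoint paths from the \emph{set} $V_{(i,\ell)}$ to the \emph{set} $V_{(i,r)}$ says nothing about whether the \emph{specific} vertex of $V_{(i,\ell)}$ you arrive at (forced by which edge of $S_{i-1}$ you used) can reach the \emph{specific} vertex of $V_{(i,r)}$ you need to exit through (forced by which edge of $S_i$ survives) --- all three paths could share both endpoints, leaving the other boundary vertex completely disconnected in $H_i$. Fault-free components require Property~3 (the pairwise $(u,v,1)$ demands), which is precisely what the paper's Lemma about fault-free subchains uses. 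Conversely, the ``hard'' component is where Property~1 does the work: when both faults land in a single $R_i$, you cannot invoke Property~2 (which only tolerates $|F|<2$) or Property~3 (which only tolerates $|F|<1$); only the three edge-disjoint paths can survive two edge deletions. The correct assignment, which the paper's case lemmas implement, is $|F_i|=0 \Rightarrow$ Property~3, $|F_i|=1 \Rightarrow$ Property~2, $|F_i|=2 \Rightarrow$ Property~1, with sub-cases driven by whether a second fault sits on $S_{i-1}$ or $S_i$ and hence pins the entry or exit vertex. Beyond this concrete mis-assignment, the bulk of your write-up is a road map: you correctly flag the case analysis and the ``exactly the right demands'' verification as the hard part, but you defer it rather than carry it out, so the proposal is a sketch of the right scaffolding with the load-bearing case analysis still missing.
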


\iflong
An RSND demand of $(u,v,1)$ is a relative connectivity requirement between the vertices $u$ and $v$. Another way to state Property 3 is the following: If $u \in V_{(i,\ell)}$ and $v \in V_{(i,r)}$ are connected in $G_i$, then $u$ and $v$ are connected in $H_i$, for all $(u,v) \in V_{(i,\ell)} \times V_{(i,r)}$.
\fi

\iflong \else
The proof of this structure lemma is a highly technical case analysis, which due to space constraints can be found in Appendix~\ref{app:structure}.  At a very high level, though, our proof is as follows.  For the ``only if'' direction, we first assume that we are given some feasible solution $H$.  Then for each of the properties in Lemma~\ref{lem:partition_characterize}, we assume it is false and derive a contradiction by finding a fault set $F \subseteq E$  with $|F| \leq 2$ where there is a path from $s$ to $t$ in $G \setminus F$, but not in $H \setminus F$.  The exact construction of such an $F$ depends on which of the properties of Lemma~\ref{lem:partition_characterize} we are analyzing.  

For the more complicated ``if'' direction, we assume that $H$ satisfies the conditions of Lemma~\ref{lem:partition_characterize} and consider a fault set $F \subseteq E$ with $|F| \leq 2$ where $s$ and $t$ are connected in $G \setminus F$.  We want to show that $s$ and $t$ are connected in $H \setminus F$.  We analyze two subchains of the $s-t$ 2-chain of $G$: the minimal prefix of the chain which contains at least $1$ fault, and the minimal prefix of the chain which contains both faults. 
We first show that the set of vertices reachable from $s$ at the end of the first subchain is the same in $G \setminus F$ and in $H \setminus F$. We then use this to show that there is at least one reachable vertex at the end of the second subchain in $H \setminus F$, even though (unlike the first subchain) the set of reachable vertices at the end of the second subchain may be smaller in $H \setminus F$ than in $G \setminus F$. From there we show that there is a path to $t$ in $H \setminus F$ from this one reachable vertex.  There are a large number of cases depending on the structure of $F$ (whether it intersects some of the separators in the chain, whether both faults are in the same component, etc.), and we have to use different properties of Lemma~\ref{lem:partition_characterize} in different cases, making this proof technically involved.
\fi

\iflong
\subsection{Proof of Lemma~\ref{lem:partition_characterize} (Structure Lemma)} 
Let $i \leq j$, let $H$ be a subgraph of $G$, and let $H_i = H[R_i]$. We will say that an edge $e$ is between subgraphs $H_i$ and $H_j$, or that $e$ is within the subchain that starts at $H_i$ and ends at $H_j$, if the following is true: Either edge $e$ is in $E(H_k)$ such that $i \leq k \leq j$, or $i \neq j$ and $e \in S_k$ such that $i \leq k \leq j-1$. Before we begin the proof, we will need a lemma that describes the connectivity of vertices in $V_{(i,\ell)}$ and $V_{(j,r)}$ in $G$ when there are no edge faults between components $R_i$ and $R_j$.

\begin{lemma}
\label{lem:no_faults_G}
In the $s-t$ 2-chain of $G$, consider the subchain that starts at $G_i$ and ends at $G_j$, inclusive, where $i \leq j$. Then for every $u \in V_{(i,\ell)}$, there is a path from $u$ to $V_{(j,r)}$.  Similarly, for each $u \in V_{(j,r)}$, there is a path from $u$ to $V_{(i,\ell)}$.  These paths only use edges within the subchain that starts at $G_i$ and ends at $G_j$.
\end{lemma}
\begin{proof}
First we will show that for any subgraph $G_k$ within the subchain, where $G_k = G[R_k]$, there is a path in $G_k$ from $V_{(k,\ell)}$ to each vertex in $V_{(k,r)}$. Fix $G_k$ with $i \leq k \leq j$. Suppose for the sake of contradiction that at least one vertex $v_{kr_1} \in V_{(k,r)}$ is not reachable from $V_{(k,\ell)}$ in $G_k$. If $|V_{(k,r)}|= 1$, then there is no path from $V_{(k,\ell)}$ to $V_{(k,r)}$ in $G_k$. This means there is a size 0 cut in $G_k$, and therefore a size 0 cut in $G$.  But $G$ is 2-connected, so this gives a contradiction. If $|V_{(k,r)}|= 2$, let $V_{(k,r)} = \{ v_{kr_1}, v_{kr_2} \}$. Then, the edge in $S_k$ that is incident on $v_{kr_2}$ is a cut of size 1 in $G$, contradicting the assumption that $G$ is 2-connected. Therefore, in $G_k$, each vertex in $V_{(k,r)}$ must be reachable from $V_{(k,\ell)}$ in $G_k$. 

Now we show via a similar argument that in $G_k$, there is a path from each vertex in $V_{(k,\ell)}$ to $V_{(k,r)}$. Fix $G_k$ with $i \leq k \leq j$. Suppose for the sake of contradiction that at least one vertex $v_{k\ell_1} \in V_{(k,\ell)}$ is not reachable from $V_{(k,r)}$ in $G_k$. If $|V_{(k,\ell)}|= 1$, then there is no path from $V_{(k,\ell)}$ to $V_{(k,r)}$ in $G_k$, meaning there is a size 0 cut in $G$. This contradicts the assumption that $G$ is 2-connected. If $|V_{(k,\ell)}|= 2$, let $V_{(k,\ell)} = \{ v_{k\ell_1}, v_{k\ell_2} \}$. Then, the edge in $S_{k-1}$ that is incident on $v_{k\ell_2}$ is a cut of size 1 in $G$, contradicting the assumption that $G$ is 2-connected. Therefore, $V_{(k,r)}$ must be reachable from each vertex in $V_{(k,\ell)}$ in $G_k$. 

We have shown that for each subgraph $G_k$ in the subchain starting at $G_i$ and ending at $G_j$, each vertex in $V_{(k,r)}$ is reachable from $V_{(k,\ell)}$, and $V_{(k,r)}$ is reachable from each vertex in $V_{(k,\ell)}$ in $G_k$. This implies that each vertex in $V_{(j,r)}$ is reachable from $V_{(i,\ell)}$, and that $V_{(j,r)}$ is reachable from each vertex in $V_{(i,\ell)}$, using only edges in the subchain from $G_i$ to $G_j$. This can been seen via a proof by induction on the number of components into the $s-t$ 2-chain. 
\end{proof}

\subsubsection{Only if}
We are now ready to prove that the properties in Lemma~\ref{lem:partition_characterize} are necessary. Suppose subgraph $H$ is a feasible solution, and suppose for the sake of contradiction that for some important separator $S_i$ in the chain, there is an edge $e_1 \in S_i$ that is not in $H$. Let $e_2$ be the other edge in $S_i$ (note that $e_2$ must exist since $|S_i| = 2$ for all $i$). First, suppose that $e_2$ is in $H$. We will show that $s$ and $t$ are connected in $G \setminus \{e_2\}$ but that they are not connected in $H \setminus \{e_2\}$, giving a contradiction. Separator $S_i$ is a size 2 $s-t$ cut, so $s$ and $t$ are not connected in $H \setminus \{e_2\}$. Now we just need to show that they are connected in $G \setminus \{e_2\}$. There are no edge faults between $G_0$ and $G_i$, so by Lemma \ref{lem:no_faults_G} there is a path from $V_{(0,\ell)} = s$ to each vertex in $V_{(i,r)}$ that only uses edges between $G_0$ and $G_i$. In $G \setminus \{e_2\}$, one vertex in $V_{(i,r)}$ is adjacent to a vertex in $V_{(i+1,\ell)}$; they share the edge $e_1$. Therefore there is a path from a vertex in $V_{(i,r)}$ to a vertex in $V_{(i+1,\ell)}$ that only uses the edge $e_1$. Finally, there are no edge faults between $G_{i+1}$ and $G_p$, so again by Lemma \ref{lem:no_faults_G} there is a path from each vertex in $V_{(i+1,\ell)}$ to $V_{(p,r)} = t$, using only edges between $G_{i+1}$ and $G_p$. Putting everything together, we have that $s$ and $t$ are connected in $G \setminus \{e_2\}$ but are not connected in $H \setminus \{e_2\}$, contradicting the assumption that $H$ is feasible. Now suppose that $e_2$ is not in $H$. Then, $s$ and $t$ are connected in $G$ but not in $H$. This also contradicts the assumption that $H$ is feasible, and so $H$ must have all edges in $\cup_{j=0}^{p-1} S_j$.

Now suppose that $H$ is feasible, but suppose for the sake of contradiction that Property 1 in the statement of Lemma \ref{lem:partition_characterize} is not satisfied in $H_i$. This means there are fewer than three edge-disjoint paths from $V_{(i,\ell)}$ to $V_{(i,r)}$ in $H_i$, and so by Menger's theorem there is some cut $S_i'$ of size $2$ that separates $V_{(i,\ell)}$ and $V_{(i,r)}$ in $H_i$. This directly implies that $S_i'$ also separates $V_{(i,\ell)}$ from $t$ in $H_i$. Cut $S_i'$ is therefore a $(V_{(i,\ell)}, t)$-separator of size 2 with $R_i' \subset R_i$, where $R_i'$ is the set of vertices reachable from $V_{(i,\ell)}$ in $(G \setminus  \cup_{j=0}^{i-1} R_j) \setminus S_i'$. Separator $S_i$ is therefore not an important $(V_{(i,\ell)},t)$-separator in $G \setminus \cup_{j=0}^{i-1} R_j$, contradicting our decomposition construction.

Suppose again that $H$ is feasible, but now suppose for the sake of contradiction that Property 2 in the statement of Lemma \ref{lem:partition_characterize} is not satisfied in $H_i$. Without loss of generality
\footnote{WLOG because the argument is symmetric: If we choose $v_{\ell_1} \in V_{(i,\ell)}$ instead to be the vertex such that the RSND demand $(V_{(i,r)}, v_{\ell_1}, 2)$ is not satisfied in $H_i$, then the fault sets are $F = \{e\}$ (if $|V_{(i,\ell)}|=1$) and $F = \{e,f\}$ (if $|V_{(i,\ell)}|=2$), where $e$ is an edge such that $v_{\ell_1}$ and $V_{(i,r)}$ are connected in $G_i \setminus \{e\}$ but not in $H_i \setminus \{e\}$, and $f$ is the edge in $S_{i-1}$ incident on $v_{\ell_2}$.}, 
let $v_{r_1} \in V_{(i,r)}$ be a vertex such that the RSND demand $(V_{(i,\ell)}, v_{r_1}, 2)$ is not satisfied in $H_i$. That is, there exists some edge fault $e \in E$ such that there is path from $V_{(i,\ell)}$ to $v_{r_1}$ in $G_i \setminus \{e\}$, but there is no path in $H_i \setminus \{e\}$. We will now show that there exists a fault set $F$ with $|F| \leq 2$ such that $s$ and $t$ are connected in $G \setminus F$ but are disconnected in $H \setminus F$. This would imply that $H$ is not feasible, giving a contradiction. There are two cases:
\begin{itemize}
    \item $|V_{(i,r)}| = 1$. Let $F = \{e\}$. There is no path from $V_{(i,\ell)}$ to $V_{(i,r)}$ in $H_i \setminus F$, and therefore no path from $s$ to $t$ in $H \setminus F$. There are no faults between $G_0$ and $G_{i-1}$, inclusive, so by Lemma \ref{lem:no_faults_G} there is a path from $V_{(0,\ell)} = s$ to each vertex in $V_{(i-1,r)}$ in $G \setminus F$, using only edges between $G_0$ and $G_{i-1}$. There are no faults in $S_{i-1}$, so there is also a path from $s$ to each vertex in $V_{(i,\ell)}$, using only edges between $G_0$ and $G_{i}$. As previously stated, there is a path from at least one vertex in $V_{(i,\ell)}$ to $v_{r_1} \in V_{(i,r)}$ in $G_i \setminus F$. There are no faults in $S_{i}$, so there is a path from $v_{r_1}$ to each vertex in $V_{(i+1,\ell)}$, using only edges in $S_{i}$. Finally, there are no faults between $G_{i+1}$ and $G_{p}$ inclusive, so by Lemma \ref{lem:no_faults_G} there is a path from $V_{(i+1,\ell)}$ to $V_{(p,r)} = t$ in $G \setminus F$, using only edges between $G_{i+1}$ and $G_p$. Putting everything together, there is an $s-t$ path in $G \setminus F$, but there is no $s-t$ path in $H \setminus F$, giving a contradiction to the assumption that $H$ is a feasible solution. 
    
    \item $|V_{(i,r)}| = 2$, with $V_{(i,r)} = \{v_{r_1},v_{r_2}\}$. Let $f$ denote the edge in $S_{i}$ incident on $v_{r_2}$, and set $F = \{e, f\}$. In $H \setminus F$, there is no path from $V_{(i,\ell)}$ to $V_{(i+1,\ell)}$. Therefore there is no path from $s$ to $t$ in $H \setminus F$. There are no faults between $G_0$ and $G_{i-1}$, inclusive, and no faults in $S_{i-1}$, so by Lemma \ref{lem:no_faults_G} there is a path from $s$ to each vertex in $V_{(i,\ell)}$ in $G \setminus F$, using only edges between $G_0$ and $G_i$. As previously stated, there is a path from at least one vertex in $V_{(i,\ell)}$ to $v_{r_1} \in V_{(i,r)}$ in $G_i \setminus F$. Since the single edge fault in $S_i$, $f$, is not incident on $v_{r_1}$, there is a path from $v_{r_1}$ to a vertex in $V_{(i+1,\ell)}$, only using the remaining edge in $S_i$. Finally, by Lemma \ref{lem:no_faults_G}, there is a path from each vertex in $V_{(i+1,\ell)}$ to $t$ in $G \setminus F$, which only use edges between $G_{i+1}$ and $G_p$. Putting everything together, there is an $s-t$ path in $G \setminus F$, but there is no $s-t$ path in $H \setminus F$, giving a contradiction to the assumption that $H$ is a feasible solution.
\end{itemize}
Suppose again that $H$ is a feasible, but now suppose for the sake of contradiction that Property 3 in the statement of Lemma \ref{lem:partition_characterize} is not satisfied in $H_i$. Let $v_{\ell_1} \in V_{(i,\ell)}$ and $v_{r_1} \in V_{(i,r)}$ be vertices such that the RSND demand $(v_{\ell_1}, v_{r_1}, 1)$ is not satisfied in $H_i$. This means there is a path from $v_{\ell_1}$ to $v_{r_1}$ in $G_i$, but there is no path in $H_i$. We will now show that there exists a fault set, $F$, with $|F| \leq 2$, such that $s$ and $t$ are connected in $G \setminus F$ but are disconnected in $H \setminus F$. This would mean that $H$ is not a feasible, giving a contradiction. There are three cases:
\begin{itemize}
    \item $|V_{(i,\ell)}|=|V_{(i,r)}| = 1$. Let $F$ be the empty set. There is no path from $V_{(i,\ell)}$ to $V_{(i,r)}$ in $H_i$, and therefore no $s-t$ path in $H$. $G$ is 2-edge connected, so there is a path from $s$ to $t$ in $G$. This is a contradiction to the assumption that $H$ is a feasible.
    
    \item Exactly one of $V_{(i,\ell)}$ and $V_{(i,r)}$ has size 2. Without loss of generality (since the other case is symmetric), let $V_{(i,r)} = \{v_{r_1},v_{r_2}\}$. Let $f$ denote the edge in $S_{i}$ incident on $v_{r_2}$, and set $F = \{f\}$. In $H \setminus F$, there is no path from $V_{(i,\ell)}$ to $V_{(i+1,\ell)}$. Therefore there is no $s-t$ path in $H \setminus F$. There are no faults between $G_0$ and $G_{i}$, inclusive, so by Lemma \ref{lem:no_faults_G} there is a path from $s$ to each vertex in $V_{(i,r)}$ in $G \setminus F$, using only edges between $G_0$ and $G_i$. There is also a path from $v_{r_1}$ to a vertex in $V_{(i+1,\ell)}$, only using an edge in $S_{i}$. Finally, by Lemma \ref{lem:no_faults_G}, there is a path from each vertex in $V_{(i+1,\ell)}$ to $t$, using only edges between $G_{i+1}$ and $G_p$. Putting everything together, there is an $s-t$ path in $G \setminus F$, but there is no $s-t$ path in $H \setminus F$, giving a contradiction to the assumption that $H$ is a feasible solution.
    
    \item $V_{(i,\ell)}$ and $V_{(i,r)}$ have size 2. Let $V_{(i,\ell)} = \{v_{\ell_1},v_{\ell_2}\}$ and $V_{(i,r)} = \{v_{r_1},v_{r_2}\}$. We also let $f_1$ denote the edge in $S_{i-1}$ incident on $v_{\ell_2}$, and let $f_2$ denote the edge in $S_{i}$ incident on $v_{r_2}$. Set $F = \{f_1, f_2\}$. In $H \setminus F$, there is no path from $V_{(i-1,r)}$ to $V_{(i+1,\ell)}$. Therefore, there no $s-t$ path in $H \setminus F$. There are no faults between $G_0$ and $G_{i-1}$, inclusive, so by Lemma \ref{lem:no_faults_G} there is a path from $s$ to each vertex in $V_{(i-1,r)}$ in $G \setminus F$, using only edges between $G_0$ and $G_{i-1}$. There is only one fault in $S_{i-1}$, so there is a path from one vertex in $V_{(i-1,r)}$ to $v_{\ell_1} \in V_{(i,\ell)}$, only using the remaining edge in $S_{i-1}$. There is also a path in $G_i \setminus F$ from $v_{\ell_1}$ to $v_{r_1}$, by our assumption. Finally, $v_{r_1}$ has a path to a vertex in $V_{(i+1,\ell)}$, which only uses the remaining edge in $S_i$. By Lemma \ref{lem:no_faults_G}, each vertex in $V_{(i+1,\ell)}$ has a path to $t$, using only edges between $G_{i+1}$ and $G_p$. Putting everything together, there is an $s-t$ path in $G \setminus F$, but there is no $s-t$ path in $H \setminus F$, giving a contradiction to the assumption that $H$ is a feasible solution.
\end{itemize}

\subsubsection{If}
Now we prove that the properties stated in Lemma~\ref{lem:partition_characterize} are sufficient. Suppose all edges in $\cup_{j=0}^{p-1} S_j$ are in $H$, and suppose all 3 properties in the statement of Lemma \ref{lem:partition_characterize} are met for all $H_i$. For all possible fault sets $F$, with $|F| \leq 2$, we will show that if $s$ and $t$ are connected in $G \setminus F$, they must also be connected in $H \setminus F$, and therefore $H$ is feasible. Let $F = \{f_1, f_2\}$ be the fault set, and suppose $s$ and $t$ are connected in $G \setminus F$. We will first show that if there are no faults between $H_i$ and $H_j$, then in $H$, each vertex in $V_{(j,r)}$ is reachable from $V_{(i,\ell)}$, and each vertex in $V_{(i,\ell)}$ has a path to $V_{(j,r)}$. 

\begin{lemma}
\label{lem:no_faults_H}
Let $H$ be a subgraph of $G$, and suppose that all properties in the statement of Lemma \ref{lem:partition_characterize} are met by $H_i$ for all $i$. Consider the subchain that starts at $H_i$ and ends at $H_j$, where $i \leq j$. Then, there is a path from each vertex in $V_{(i,\ell)}$ to the vertex set $V_{(j,r)}$, and a path from the vertex set $V_{(i,\ell)}$ to each vertex in $V_{(j,r)}$ in $H$. These paths only use edges within the subchain that starts at $H_i$ and ends at $H_j$.
\end{lemma}
\begin{proof}
We have shown in Lemma \ref{lem:no_faults_G} that if subgraph $G_k$ is in the subchain that begins at $G_i$ and ends at $G_j$, then each vertex in $V_{(k,r)}$ is reachable from $V_{(k,\ell)}$, and $V_{(k,r)}$ is reachable from each vertex in $V_{(k,\ell)}$. For each subgraph $H_k$ in the subchain from $H_i$ to $H_j$, the RSND demands $\{(u,v,1) : (u,v) \in V_{(i,\ell)} \times V_{(i,r)}\}$ are satisfied (Property 3 of Lemma \ref{lem:partition_characterize}). That is, if $v_{k\ell} \in V_{(k,\ell)}$ and $v_{kr} \in V_{(k,r)}$ are connected in $G$, then they are connected in $H$. Therefore, we also have that in $H_k$, each vertex in $V_{(k,r)}$ is reachable from $V_{(k,\ell)}$, and $V_{(k,r)}$ is reachable from each vertex in $V_{(k,\ell)}$. This also implies that in $H$, each vertex in $V_{(j,r)}$ is reachable from $V_{(i,\ell)}$, and each vertex in $V_{(i,\ell)}$ has a path to $V_{(j,r)}$, using only edges within the subchain from $H_i$ to $H_j$. This can been seen via a proof by induction on the number of components into the chain.
\end{proof}

For all $i$, let component $R_i$ with separator $S_{i-1}$ (if it exists) together be the \textit{$i$th section} of the chain. Section $i$ is considered earlier in the chain than section $j$ if $i < j$. We divide the $s-t$ 2-chain into subchains as follows. Suppose edge fault $f_1$ is in section $i$, and $f_2$ is in section $j$, where $i \leq j$. Then the first subchain, $L_1$, begins at $R_0$ and ends at $R_i$, inclusive. The second subchain, $L_2$, begins at $R_{0}$ and ends at $R_k$, inclusive. If $i = k$, then $L_1$ and $L_2$ are the same subchain. Let $L_\alpha^G$ denote the subchain of $G$ induced by $L_\alpha$, and let $L_\alpha^H$ denote the subchain of $H$ induced by $L_\alpha$. 

We now prove a series of lemmas. Lemma \ref{lem:1_fault} states which vertices at the end of $L_1$ are reachable in $L_1^H \setminus F$ given what is reachable in $L_1^G \setminus F$. Lemma \ref{lem:2nd_1_fault} uses Lemma \ref{lem:1_fault} to prove that when the two edge faults are in different sections of the chain and $s$ and $t$ are connected in $G \setminus F$, then $s$ and $t$ are connected in $H \setminus F$.  Lemma \ref{lem:2_fault} shows that when both edge faults are in the same section, $s$ and $t$ are connected in $H \setminus F$ if they are connected in $G \setminus F$

\begin{lemma}
\label{lem:1_fault}
Let $F = \{f_1, f_2\}$ be the fault set, where $f_1$ and $f_2$ are in sections $i$ and $j$, respectively, with $i < j$. Suppose there is an $s-t$ path in $G \setminus F$. Consider subchain $L_1$, as defined above. Then, every vertex in $V_{(i,r)}$ that is reachable from $s$ in $L_1^G \setminus F$ is also reachable from $s$ in $L_1^H \setminus F$.
\end{lemma}

\begin{proof}
There are no faults from $R_0$ to $R_{i-1}$, so by Lemmas \ref{lem:no_faults_G} and \ref{lem:no_faults_H}, in $G$ and in $H$, there is a path from $s$ to each vertex in $V_{(i-1,r)}$, using only edges between $G_0$ and $G_{i-1}$, and between $H_0$ and $H_{i-1}$, respectively. There are two cases: $f_1 \in R_i$ and $f_1 \in S_{i-1}$. 

\begin{itemize}
    \item Suppose first that $f_1 \in R_i$. There is a path from $s$ to $t$ in $G \setminus F$, so there must also be a path from $V_{(i,\ell)}$ to at least one vertex in $V_{(i,r)}$ in $G_i \setminus F$. Suppose there is a path in $G_i \setminus F$ from $V_{(i,\ell)}$ to $v_{ir} \in V_{(i,r)}$; that is, there is a path from $V_{(i,\ell)}$ to $v_{ir}$ after the removal of $f_1$. Property 2 from Lemma \ref{lem:partition_characterize} is satisfied in $H_i$. Therefore, in $H_i$, $V_{(i,\ell)}$ and $v_{ir}$ must also be connected after the removal of $f_1$. Thus, if a vertex in $V_{(i,r)}$ is reachable from $V_{(i,\ell)}$ in $G_i \setminus F$, then that same vertex is reachable from $V_{(i,\ell)}$ in $H_i \setminus F$. Note that since there are no faults in $S_{i-1}$, we can also say that if a vertex $u \in V_{(i,r)}$ is reachable from the vertex set $V_{(i-1,r)}$ in $L^G_1 \setminus F$, then $u$ is reachable from $V_{(i-1,r)}$ in $L^H_1 \setminus F$. 
    
    \item Now suppose that $f_1 \in S_{i-1}$. Without loss of generality, let $f_1$ be incident on vertex $v_{i\ell_1} \in V_{(i,\ell)}$. 
    If $|V_{(i,\ell)}| = 1$, then in $G$ and in $H$, $v_{i\ell_1}$ is adjacent to both vertices in $V_{(i-1,r)}$. Therefore, after the removal of $f_1$, $v_{i\ell_1}$ is still adjacent to a vertex in $V_{(i-1,r)}$ in $G \setminus F$ and in $H \setminus F$. Additionally, there are no faults in $R_i$, so by Lemmas \ref{lem:no_faults_G} and \ref{lem:no_faults_H}, there is a path from $V_{(i,\ell)} = \{v_{i\ell_1}\}$ to each vertex in $V_{(i,r)}$ in $G_i \setminus F$ and in $H_i \setminus F$. Putting it all together, in $L^G_1 \setminus F$ and in $L^H_1 \setminus F$, there is a path from $V_{(i-1,r)}$ to each vertex in $V_{(i,r)}$ that uses only edges in $S_{i-1}$ and in $R_i$.
    If $|V_{(i,\ell)}| = 2$, then let $V_{(i,\ell)} = \{v_{i\ell_1}, v_{i\ell_2} \} $. Recall that $f_1 \in S_{i-1}$ is incident on $v_{i\ell_1} \in V_{(i,\ell)}$. We therefore have that any path into $V_{(i,\ell)}$ from $V_{(i-1,r)}$ must visit $v_{i\ell_2}$. Since there is an $s-t$ path in $G \setminus F$, there must also be a path from $v_{i\ell_2}$ to $V_{(i,r)}$ in $G_i \setminus F$. Property 3 from Lemma \ref{lem:partition_characterize} is satisfied in $H_i$. Therefore, if there is a path from $v_{i\ell_2}$ to a vertex $v_{ir_1} \in V_{(i,r)}$ in $G_i$, then there is a path from $v_{i\ell_2}$ to $v_{ir_1}$ in $H_i$. Putting everything together, we have the following: If there is a path from $V_{(i-1,r)}$ to a vertex $u \in V_{(i,r)}$ in $L^G_1 \setminus F$, then there is a path from $V_{(i-1,r)}$ to $u$ in $L^H_1 \setminus F$. 
\end{itemize}

We have shown that if section $i$ has exactly one fault, then every vertex in $V_{(i,r)}$ that is reachable from $V_{(i-1,r)}$ in $L_1^G \setminus F$ is also reachable from $V_{(i-1,r)}$ in $L_1^H \setminus F$. Recall that in $G$ and in $H$, there is a path from $s$ to each vertex in $V_{(i-1,r)}$ that only uses edges between $G_0$ and $G_{i-1}$, and between $H_0$ and $H_{i-1}$, respectively. We therefore have that every vertex in $V_{(i,r)}$ that is reachable from $s$ in $L^G_1 \setminus F$ is also reachable from $s$ in $L^H_1 \setminus F$.
\end{proof}

In the following lemma, we will use Lemma \ref{lem:1_fault} to prove that if the edge faults in $F$ are in different sections of the $s-t$ 2-chain, then there is an $s-t$ path in $G \setminus F$ if and only if there is an $s-t$ path in $H \setminus F$.

\begin{lemma}
\label{lem:2nd_1_fault}
 Let $F = \{f_1, f_2\}$ be the fault set, where $f_1$ and $f_2$ are in sections $i$ and $j$, respectively, with $i<j$. Suppose there is an $s-t$ path in $G \setminus F$. Consider subchain $L_2$, as defined above. Then, at least one vertex in $V_{(j,r)}$ is reachable from $s$ in $L^H_2 \setminus F$, and there is an $s-t$ path in $H \setminus F$.
\end{lemma}
\begin{proof}
There is an $s-t$ path in $G \setminus F$, so at least one vertex in $V_{(i,r)}$ must be reachable from $s$ in $L^G_1 \setminus F$. Let $v_{ir_1}$ be this vertex. By Lemma \ref{lem:1_fault}, we also have that $v_{ir_1}$ is reachable from $s$ in $L^H_1 \setminus F$.  There are no faults between sections $i$ and $j$, \textit{not} inclusive, so using Lemmas \ref{lem:no_faults_G} and \ref{lem:no_faults_H}, we can say that in both $G \setminus F$ and in $H \setminus F$, there is a path from $v_{ir_1}$ to $V_{(j-1, r)}$, using only edges in $S_i$ and in the subchain from $G_{i+1}$ to $G_{j-1}$, or in the subchain from $H_{i+1}$ to $H_{j-1}$, respectively. Additionally, Property 3 of Lemma \ref{lem:partition_characterize} is met for all $H_i$. Therefore, if $G$ has a path from $v_{ir_1}$ to a particular vertex $u \in V_{(j-1, r)}$ that only uses edges in $S_i$ and in the subchain from $G_{i+1}$ to $G_{j-1}$, then $H$ also has a path from $v_{ir_1}$ to $u$ that only uses edges in $S_i$ and in the subchain from $H_{i+1}$ to $H_{j-1}$. We therefore have that every vertex in $V_{(j-1, r)}$ that is reachable from $s$ using only edges between $G_0$ and $G_{j-1}$ is also reachable from $s$ using only edges between $H_0$ and $H_{j-1}$. Now, we have two cases: $f_2$ is in $R_j$ or $f_2$ is in $S_{j-1}$.
\begin{itemize}
    \item We first consider the case with $f_2 \in R_j$. There is an $s-t$ path in $G \setminus F$, and let $P$ be such a path. There must be at least one vertex in $V_{(j-1, r)}$ that is in $P$ in $G \setminus F$ (otherwise, $P$ would not be an $s-t$ path in $G \setminus F$). Let $v_{j-1,r}$ be such a vertex. As proved in the first paragraph of this proof, we also have that there is a path from $s$ to $v_{j-1,r}$ in $H \setminus F$, using only edges between $H_0$ and $H_{j-1}$. Let $v_{j\ell}$ be the vertex in $V_{(j, \ell)}$ that is in $P$ and adjacent to $v_{j-1,r}$. There are no faults in $S_{j-1}$, so there is also a path from $s$ to $v_{j\ell}$ that only uses edges between $G_0$ and $G_{j-1}$, or between $H_0$ and $H_{j-1}$, and an edge in $S_{j-1}$. Since $P$ is an $s-t$ path in $G \setminus F$ that uses $v_{j\ell}$, there is a path from $v_{j\ell}$ to $t$ in $G \setminus F$ that only uses edges between $G_{j}$ to $G_{p}$. This also means there is a path from $v_{j\ell}$ to $V_{(j,r)}$ in $G_j \setminus F$. Since Property 2 from Lemma \ref{lem:partition_characterize} is met on subgraph $H_j$, there must also be a path from $v_{j\ell}$ to $V_{(j,r)}$ in $H_j \setminus F$. There are no faults in $S_j$, implying that in $H \setminus F$, there is also a path from $v_{j\ell}$ to a vertex in $V_{(j+1,\ell)}$, using only edges in $H_j$ and an edge in $S_j$.
    
    \item We now consider the case with $f_2 \in S_{j-1}$. At least one vertex in $V_{(j-1,r)}$ is reachable from $s$ in $G \setminus F$, using only edges between $G_0$ and $G_{j-1}$ (otherwise there would be no $s-t$ path in $G \setminus F$). Suppose first that all vertices in $V_{(j-1,r)}$ are reachable from $s$ in $G \setminus F$, using only edges between $G_0$ and $G_{j-1}$. Then, each vertex in $V_{(j-1,r)}$ is reachable from $s$ in $H \setminus F$ as well, using only edges between $H_0$ and $H_{j-1}$ (proved in paragraph 1 of this proof). There is one remaining edge in $S_{j-1}$, so in $G \setminus F$ and in $H \setminus F$, there must be a path from $V_{(j-1,r)}$ to one vertex $v_{j\ell} \in V_{(j, \ell)}$ that only uses the remaining edge in $S_{j-1}$. 
    Now suppose exactly one vertex in $V_{(j-1,r)}$ is reachable from $s$ in $G \setminus F$, using only edges between $G_0$ and $G_{j-1}$.  Let $v_{j-1,r}$ be this vertex. Therefore (proved in paragraph 1 of this proof), $v_{j-1,r}$ is also reachable from $s$ in $H \setminus F$, using only edges between $H_0$ and $H_{j-1}$. We can assume that $|V_{(j-1,r)}| = 2$, since the $|V_{(j-1,r)}| = 1$ case is covered by the previous argument. If $f_2$ is the edge in $S_{j-1}$ that is incident on $v_{j-1,r}$, then there is no $s-t$ path in $G \setminus F$. This contradicts the assumption that there is an $s-t$ path in $G \setminus F$. Therefore, there must be a path in $G \setminus F$, and in $H \setminus F$, from $v_{j-1,r}$ to a vertex $v_{j\ell} \in V_{(j, \ell)}$, using only the non-fault edge in $S_{j-1}$.  
\end{itemize}
We have shown that in $H \setminus F$, there is a path from $s$ to at least one vertex $v_{j\ell}$ in $V_{(j, \ell)}$ that only uses edges between $H_0$ and $H_{j-1}$ and in $S_{j-1}$. Additionally, there are no faults between $H_{j}$ and $H_p$, so by Lemma \ref{lem:no_faults_H}, $v_{j\ell}$ has a path to $V_{(p,r)} = \{t\}$ that only uses edges between $H_{j}$ and $H_p$. Putting it all together, in $H \setminus F$, there is a path from $s$ to $t$. \qedhere
\end{proof}

Now we will show that if the edge faults in $F$ are in the same section of the $s-t$ 2-chain, then there is an $s-t$ path in $G \setminus F$ if and only if there is an $s-t$ path in $H \setminus F$.

\begin{lemma}
\label{lem:2_fault}
Let $F = \{f_1, f_2\}$ be the fault set, where $f_1$ and $f_2$ are both in section $i$. Suppose there is an $s-t$ path in $G \setminus F$. Consider subchain $L_1$, as defined above. Then, at least one vertex in $V_{(i,r)}$ is reachable from $s$ in $L^H_1 \setminus F$, and there is an $s-t$ path in $H \setminus F$.
\end{lemma}
\begin{proof}
There are no faults between $R_0$ and $R_{i-1}$, so by Lemmas \ref{lem:no_faults_G} and \ref{lem:no_faults_H}, in $G \setminus F$ and in $H \setminus F$, each vertex in $V_{(i-1,r)}$ is reachable from $s$, using only edges between $G_0$ and $G_{i-1}$, or between $H_0$ and $H_{i-1}$, respectively. We have two cases: Either both $f_1$ and $f_2$ are in $R_i$ or, without loss of generality, $f_1 \in S_{i-1}$ and $f_2 \in R_i$. Note that $f_1$ and $f_2$ cannot both belong in $S_{i-1}$ because this would contradict the assumption that there is an $s-t$ path in $G \setminus F$. 
\begin{itemize}
    \item First consider the case with $f_1$ and $f_2$ in $R_i$. In $G_i$, there are at least 3 edge-disjoint paths from $V_{(i,\ell)}$ to $V_{(i,r)}$; otherwise, by Menger's theorem, there is a cut of size at most two that separates $V_{(i,\ell)}$ from $V_{(i,r)}$. This would mean that $S_i$ cannot be an important $(V_{(i,\ell)},t)$-separator of $G \setminus \cup_{j=0}^{i-1} R_j $. Since Property 1 in Lemma \ref{lem:partition_characterize} is met in $H_i$, there must also be 3 or more edge-disjoint paths from $V_{(i,\ell)}$ to $V_{(i,r)}$ in $H_i$. Therefore, there is a path from $V_{(i,\ell)}$ to $V_{(i,r)}$ in $H_i \setminus F$. There are no faults in $S_{i-1}$, so we can also say there is a path from $V_{(i-1,r)}$ to $V_{(i,r)}$ in $L_1^H \setminus F$. 
    
    \item Next, consider the case with $f_1 \in S_{i-1}$ and $f_2 \in R_i$. Suppose without loss of generality that $f_1$ is incident on vertex $v_{i\ell_1} \in V_{(i, \ell)}$. We first consider the case with $|V_{(i, \ell)}| = 1$. Since $v_{i\ell_1}$ is adjacent to both vertices in $V_{(i-1, r)}$, there is still a path from $V_{(i-1,r)}$ to $v_{i\ell_1}$ in $H\setminus F$ using only the remaining edge in $S_{i-1}$. Additionally, there are at least 3 edge-disjoint paths from $v_{i\ell_1}$ to $V_{(i,r)}$ in $H_i$. There is only one fault, $f_2$, in $R_i$. Thus, in $H_i \setminus F$, there must be a path from $v_{i\ell_1}$ to $V_{(i,r)}$. 
    Now we consider the case with $|V_{(i, \ell)}| = 2$. Let $V_{(i, \ell)} = \{ v_{i\ell_1}, v_{i\ell_2} \} $. Since $f_1 \in S_{i-1}$ is incident on $v_{i\ell_1}$, any path from $V_{(i-1,r)}$ to $V_{(i, \ell)}$ in $G \setminus F$ and in $H \setminus F$ must use the edge incident on $v_{i\ell_2}$, and must visit $v_{i\ell_2}$. Since there is a path from $s$ to $t$ in $G \setminus F$, there must also be a path from $v_{i\ell_2}$ to $V_{(i,r)}$ in $G_i \setminus F$. Property 2 in Lemma \ref{lem:partition_characterize} is satisfied in $H_i$. Therefore, if there is a path from $v_{i\ell_2}$ to $V_{(i,r)}$ in $G_i \setminus F$, then there must also be a path from $v_{i\ell_2}$ to $V_{(i,r)}$ in $H_i \setminus F$. Therefore, we have a path from $V_{(i,\ell)}$ (and from $V_{(i-1,r)}$)) to $V_{(i,r)}$ in $H \setminus F$.
\end{itemize}
We have shown that there is a path from $V_{(i-1,r)}$ to $V_{(i,r)}$ in $L_1^H \setminus F$. Additionally, there are no faults in $S_i$, so in $H \setminus F$ there is also a path from each vertex in $V_{(i,r)}$ to $V_{(i+1,\ell)}$ that only uses an edge in $S_i$. Finally, there are no faults between $H_{i+1}$ and $H_p$, inclusive, so by Lemma \ref{lem:no_faults_H}, each vertex in $V_{(i+1,\ell)}$ has a path to $V_{(p,r)} = \{t\}$, using only edges between $H_{i+1}$ and $H_p$. Putting everything together, in $H \setminus F$, there is a path from $s$ to $t$.
\end{proof}

Lemmas~\ref{lem:2nd_1_fault} and \ref{lem:2_fault} together clearly imply the ``if'' direction of Lemma~\ref{lem:partition_characterize}.
\fi

\subsection{Algorithm and Analysis} \label{sec:3RSND-alg}
We can now use Lemma~\ref{lem:partition_characterize} to give a $7-\frac14 = \frac{27}{4}$-approximation algorithm for the $k=3$ setting of Single Demand RSND on 2-connected graphs which, by Theorem~\ref{thm:2connected}, gives a $\frac{27}{4}$-approximation algorithm for the $k=3$ Single Demand RSND problem on general graphs.  \iflong \else All missing proofs can be found in Appendix~\ref{app:3RSND-alg}.\fi

Our algorithm uses a variety of subroutines, including an algorithm for min-cost flow, the 2-RSND approximation algorithm of Theorem~\ref{thm:2RSND}, and a Steiner Forest approximation algorithm.  For reference, we state the latter of these. 

\begin{lemma}[\cite{AKR95}]
\label{lem:steiner_forest}
There is a $\left(2-\frac{1}{k}\right)$-approximation algorithm for the Steiner Forest problem, where $k$ is the number of terminal pairs in the input.
\end{lemma}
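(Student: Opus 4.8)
The plan is to establish this by the primal--dual ``moat-growing'' method of Agrawal--Klein--Ravi. First I would set up the cut-covering LP. Define $f:2^V\to\{0,1\}$ by $f(S)=1$ iff $S$ separates some terminal pair, i.e.\ $|S\cap\{s_i,t_i\}|=1$ for some $i$; this $f$ is \emph{proper} (symmetric, and $f(A\cup B)\le\max(f(A),f(B))$ for disjoint $A,B$), which is precisely the structure the AKR analysis exploits in lieu of weak supermodularity. The relaxation is $\min\sum_e c_e x_e$ subject to $\sum_{e\in\delta(S)}x_e\ge f(S)$ for all $S\subseteq V$ and $x_e\ge0$; by Menger's theorem a feasible subgraph induces a feasible integral point, so this is a relaxation, and its dual has variables $y_S\ge0$ for $S$ with $f(S)=1$, objective $\max\sum_S f(S)y_S$, and constraints $\sum_{S:e\in\delta(S)}y_S\le c_e$.

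Next I would run the growth procedure: keep a forest $F$ (initially empty) and duals (initially $0$), and call a component $C$ of $F$ \emph{active} if $f(C)=1$. Raise $y_C$ at a uniform rate over all currently active components until the dual constraint of some edge $e\notin F$ goes tight, add $e$ to $F$ (merging two components), and repeat until no component is active; then do a reverse-delete pass, deleting edges in reverse order of addition whenever this keeps every pair connected, to obtain $F'$. Feasibility of $F'$ is immediate (growth halts only once every pair lies inside one component, and pruning preserves this), and $y$ stays feasible since an edge enters $F$ exactly when its constraint first becomes tight.

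For the bound, use $\mathrm{cost}(F')=\sum_{e\in F'}c_e=\sum_{e\in F'}\sum_{S:e\in\delta(S)}y_S=\sum_S y_S\cdot|\delta(S)\cap F'|$, valid since each $e\in F'\subseteq F$ is tight. Fix a time $t$ in the run and contract the current components of $F$; the $F'$-edges between distinct components form a forest on the contracted graph, and the crux lemma --- the only place reverse-delete is genuinely used --- is that every leaf of this forest is an active component. A short degree count in a forest (inactive components are either isolated or have degree $\ge2$) then gives $\sum_{C\text{ active at }t}|\delta(C)\cap F'|\le 2a_t-2$, where $a_t$ is the number of active components at $t$. Since each of the $k$ pairs witnesses at most two active components, $a_t\le 2k$, so $2a_t-2\le(2-\tfrac1k)a_t$. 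Writing $\epsilon_t$ for the dual increment at step $t$ (so $\sum_S y_S=\sum_t\epsilon_t a_t$) and summing,
\[
\mathrm{cost}(F')\;=\;\sum_t \epsilon_t\!\sum_{C\text{ active at }t}|\delta(C)\cap F'|\;\le\;(2-\tfrac1k)\sum_t\epsilon_t a_t\;=\;(2-\tfrac1k)\sum_S y_S\;\le\;(2-\tfrac1k)\,OPT,
\]
the last step by weak LP duality, which is the claimed ratio.

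The main obstacle is the leaf lemma: showing that after the reverse-delete pass, at \emph{every} step of the execution, every leaf of the contracted $F'$-forest is active. This is where the deletion order and the feasibility of $F'$ genuinely interact, and it is the technical heart of the Goemans--Williamson-style analysis; everything else is bookkeeping together with the elementary observation $a_t\le 2k$ that upgrades the factor from $2$ to $2-1/k$. (See~\cite{AKR95} for the full details.)
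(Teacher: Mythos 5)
The paper does not prove this lemma at all; it is stated as a citation to Agrawal--Klein--Ravi~\cite{AKR95}, and there is no in-paper proof for your sketch to be compared against. That said, your proposal correctly reproduces the standard primal--dual moat-growing / reverse-delete analysis (in the Goemans--Williamson formulation of the AKR algorithm), and the arithmetic is right: with $k$ demand pairs there are at most $2k$ distinct terminals, so at every moment $a_t \le 2k$, and the degree-count bound $\sum_{C\text{ active}}|\delta(C)\cap F'|\le 2a_t-2\le(2-\tfrac1k)a_t$ integrates against the dual growth to give exactly the claimed $2-\tfrac1k$. You also correctly identify the one nontrivial ingredient you left unproved --- the ``leaf lemma'' that, after reverse-delete, every degree-one component of the contracted $F'$-forest at every time $t$ must be active (equivalently, inactive components are isolated or have degree at least two). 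That is indeed the technical core of the analysis and the only place the deletion order and $F'$-feasibility interact; the remainder is, as you say, bookkeeping. Since the paper treats this as a black-box citation, your level of detail is appropriate, and the only thing to do to make the proof self-contained would be to carry out the leaf lemma argument rather than deferring to~\cite{AKR95}.
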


We can now give our algorithm.  Given a graph $G = (V, E)$ with edge weights $w : E \rightarrow \mathbb{R}_{\geq 0}$ and demand $\{(s,t,3)\}$, we first create the $s-t$ 2-chain of $G$ in polynomial time, as described in Section \ref{sec:decomposition}. After building the chain, within each component we run a set of algorithms to satisfy the demands characterized by Lemma~\ref{lem:partition_characterize}: a combination of min-cost flow, 2-RSND, and Steiner Forest algorithms. We include the outputs of these algorithms in our solution $H$, together with all edges in the separators $S = S_1 \cup S_2 \cup \dots \cup S_{p-1}$.

We first create an instance of min-cost flow on $G[R_i]$ (in polynomial time). Contract the vertices in $V_{(i,\ell)}$ and contract the vertices in $V_{(i,r)}$ to create super nodes $v_\ell$ and $v_r$, respectively. Let $v_\ell$ be the source node and $v_r$ be the sink node. For each edge $e \in E(R_i)$ set the capacity of $e$ to 1 and set the cost of $e$ to $w(e)$. Require a minimum flow of $3$, and run a polynomial-time min-cost flow algorithm on this instance~\cite{CLRS}. Since all capacities are integers the algorithm will return an integral flow, so we add to $H$ all edges with non-zero flow. 

We then create our first instance of 2-RSND on $G[R_i]$. Contract the vertices in $V_{(i,\ell)}$ to create super node $v_\ell$, and set demands $\{(v_\ell, u, 2) : u \in V_{(i,r)} \}$. For our second instance of 2-RSND on $R_i$, contract $V_{(i, r)}$ to create super node $v_r$, and set demands $\{(u, v_r, 2) : u \in V_{(i,\ell)} \}$. We run the 2-RSND algorithm (Theorem~\ref{thm:2RSND}) on each of these instances and include all selected edges in $H$. 

Finally, we create an instance of the Steiner Forest problem on $G[R_i]$. For each vertex pair $(v_{\ell},v_{r}) \in V_{(i,\ell)} \times V_{(i,r)}$, we check in polynomial time if $v_{\ell}$ and $v_{r}$ are connected in $G[R_i]$.  If they are connected, then we include $(v_{\ell}, v_{r})$ as a terminal pair in the Steiner Forest instance. Additionally, for $e \in E(R_i)$, we set the cost of $e$ to $w(e)$. We run the Steiner Forest approximation algorithm (Lemma~\ref{lem:steiner_forest}) on this instance, and add all selected edges to $H$.

The following lemma is essentially directly from Lemma~\ref{lem:partition_characterize} (the structure lemma) and the description of our algorithm.
\begin{lemma} \label{lem:3RSND-feasible}
$H$ is a feasible solution.
\end{lemma}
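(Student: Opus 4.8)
The plan is to verify, component by component, that the subgraph $H$ produced by the algorithm satisfies all three conditions of the Structure Lemma (Lemma~\ref{lem:partition_characterize}), and then invoke that lemma to conclude feasibility. First I would observe that the algorithm explicitly adds every edge of every separator $S_0, \dots, S_{p-1}$ to $H$, so the separator-inclusion hypothesis of Lemma~\ref{lem:partition_characterize} is immediate. It therefore remains to check Properties 1, 2, and 3 for each $H_i = H[R_i]$.

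For Property 1, the min-cost flow instance we solve on $G[R_i]$ contracts $V_{(i,\ell)}$ and $V_{(i,r)}$ into a source $v_\ell$ and sink $v_r$, assigns each edge capacity $1$, and demands a flow of value $3$. By Lemma~\ref{lem:no_faults_G} (the $i=j$ case), there really are at least $3$ edge-disjoint $V_{(i,\ell)}$--$V_{(i,r)}$ paths in $G[R_i]$, so such a flow exists and the min-cost flow algorithm returns an integral flow of value $3$; the $3$ edges-of-support decompose into $3$ edge-disjoint unit-flow paths from $v_\ell$ to $v_r$, which pull back to $3$ edge-disjoint $V_{(i,\ell)}$--$V_{(i,r)}$ paths in $H_i$. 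Hence Property 1 holds. For Property 2, the two 2-RSND instances we run (one with $V_{(i,\ell)}$ contracted to $v_\ell$ and demands $\{(v_\ell, u, 2) : u \in V_{(i,r)}\}$, the other with $V_{(i,r)}$ contracted to $v_r$ and demands $\{(u, v_r, 2) : u \in V_{(i,\ell)}\}$) have, by Theorem~\ref{thm:2RSND}, feasible outputs, which we add to $H$; using the stated equivalence between a demand $(X,Y,k)$ on $G_i$ and a demand $(v_X, v_Y, k)$ after contracting $X$ and $Y$, these outputs being feasible for the 2-RSND instances means exactly that $H_i$ satisfies the demand set of Property 2. For Property 3, the Steiner Forest instance has a terminal pair $(v_\ell, v_r)$ for every $(v_\ell, v_r) \in V_{(i,\ell)} \times V_{(i,r)}$ that is connected in $G[R_i]$; the approximation algorithm of Lemma~\ref{lem:steiner_forest} returns a forest connecting every terminal pair, and adding its edges to $H_i$ guarantees that every such $v_\ell, v_r$ is connected in $H_i$. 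This is precisely the restatement of Property 3 given in the paragraph after Lemma~\ref{lem:partition_characterize}: if $u \in V_{(i,\ell)}$ and $v \in V_{(i,r)}$ are connected in $G_i$, then they are connected in $H_i$.

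One small point to address carefully is monotonicity: $H_i$ is the union of the edge sets returned by all of these subroutines (restricted to $R_i$), and adding extra edges to a feasible solution of an RSND or connectivity instance keeps it feasible, so the fact that each property is witnessed by a \emph{subset} of $H_i$'s edges suffices — each of Properties 1, 2, 3 is a monotone (upward-closed) property of subgraphs of $G_i$. Also note the demands in Property 2 and Property 3 are stated on the uncontracted graph $G_i$, and I would spell out once that a subgraph satisfies $(X,Y,k)$ on $G_i$ iff the corresponding subgraph of the contracted graph satisfies $(v_X, v_Y, k)$, so that the subroutine outputs transfer correctly.

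I do not expect a serious obstacle here; this lemma is, as the paper says, ``essentially directly'' from the Structure Lemma and the algorithm description. The only mildly delicate step is making sure the contractions used to define the subroutine instances line up exactly with the contracted-demand formulation in Lemma~\ref{lem:partition_characterize}, and that (for Property 1) the existence of the required flow is justified by $2$-connectivity of $G$ via Lemma~\ref{lem:no_faults_G} rather than assumed — but both of these are routine once stated. With all three properties verified for every $i$ and all separator edges present, Lemma~\ref{lem:partition_characterize} gives that $H$ is a feasible solution to the $k=3$ Single Demand RSND problem on the $2$-connected graph $G$, which is what we wanted.
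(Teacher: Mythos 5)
Your overall strategy matches the paper exactly: verify the separator-inclusion hypothesis and Properties 1--3 of Lemma~\ref{lem:partition_characterize} component by component, then invoke the structure lemma. Properties 2 and 3 and the monotonicity observation are handled correctly (and your explicit note that adding edges preserves feasibility of RSND/connectivity demands is a useful clarification the paper leaves implicit).

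However, your justification for Property 1 does not work as stated. You claim that Lemma~\ref{lem:no_faults_G} (with $i=j$) together with $2$-connectivity shows that $G[R_i]$ contains at least three edge-disjoint $V_{(i,\ell)}$--$V_{(i,r)}$ paths. Lemma~\ref{lem:no_faults_G} only asserts \emph{reachability} of every vertex of $V_{(i,r)}$ from $V_{(i,\ell)}$ (and vice versa); it says nothing about edge-disjointness or multiplicity. And $2$-edge-connectivity of $G$ would, via Menger, give you at most two edge-disjoint paths, not three. The third path has nothing to do with $2$-connectivity: it comes from the \emph{importance} of the separator $S_i$. Because $S_i$ was chosen as an important $(V_{(i,\ell)},t)$-separator of size $2$ in $G\setminus\bigcup_{j<i}R_j$, no cut of size $\le 2$ can separate $V_{(i,\ell)}$ from $V_{(i,r)}$ strictly inside $R_i$; by Menger this forces at least $3$ edge-disjoint paths. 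The paper packages this cleanly by observing that $G$ itself is a feasible solution and so, by the already-proved ``only if'' direction of Lemma~\ref{lem:partition_characterize}, $G_i$ must satisfy Property 1 (and Properties 2, 3), making the min-cost flow instance feasible. You should replace the appeal to Lemma~\ref{lem:no_faults_G} and $2$-connectivity with this argument; otherwise the claim that a flow of value $3$ exists is unjustified.
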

\iflong
\begin{proof}
For each $i$, let $H_{i}$ denote the subgraph of $H$ induced by $R_i$ and let $G_i$ denote the subgraph of $G$ induced by $R_i$.  We will show that $H$ satisfies the conditions of Lemma~\ref{lem:partition_characterize}, and hence is feasible.  By construction, $H$ contains all edges $S$ in the important separators.  

To show property 1 of Lemma~\ref{lem:partition_characterize}, recall that in each $H_i$ we included the edges selected via a min-cost flow algorithm from $V_{(i,\ell)}$ to $V_{(i,r)}$ with flow $3$.  Since there are at least three edge-disjoint paths from $V_{(i,\ell)}$ to $V_{(i,r)}$ in $G_i$ (by Lemma~\ref{lem:partition_characterize} since $G$ itself is feasible), this will return three edge-disjoint paths from $V_{(i,\ell)}$ to $V_{(i,r)}$.  Hence $H$ satisfies the first property.

Property 2 of Lemma~\ref{lem:partition_characterize} is direct from the algorithm, since $H_i$ includes the output of the 2-RSND algorithm from Theorem~\ref{thm:2RSND} when run on demands $\left\{(V_{(i,\ell)},v_r,2) : v_r \in V_{(i,r)}\right\} \cup \left\{(V_{(i,r)},v_\ell,2) : v_\ell \in V_{(i,\ell)}\right\}$.  Similarly, within each component $H_{i}$ in the $s-t$ 2-chain, the edges selected by the Steiner Forest algorithm form a path from vertex $v_{\ell} \in V_{(i,\ell)}$ to vertex $v_r \in V_{(i,r)}$ if $v_{\ell}$ and $v_r$ are connected in $G$. This satisfies Property 3 in Lemma \ref{lem:partition_characterize}. 
\end{proof}
\fi

Let $H^*$ denote the optimal solution, and for any set of edges $A \subseteq E$, let $w(A) = \sum_{e \in A} w(e)$.  The next lemma follows from combining the approximation ratios of each of the subroutines used in our algorithm.

\begin{lemma} \label{lem:3RSND-cost}
$w(H) \leq \frac{27}{4} \cdot w(H^*)$
\end{lemma}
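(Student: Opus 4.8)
The plan is to charge the cost of $H$ against $w(H^*)$ component by component. Write $H = S \cup \bigcup_{i=0}^{p} (M_i \cup R^{(1)}_i \cup R^{(2)}_i \cup T_i)$, where $M_i$ is the min-cost flow output in $R_i$, $R^{(1)}_i, R^{(2)}_i$ are the outputs of the two $2$-RSND runs in $R_i$ (one per ``direction''), and $T_i$ is the Steiner Forest output in $R_i$. Since the components $R_i$ are vertex-disjoint and $S$ consists of edges between components, $w(H) \le w(S) + \sum_i \big( w(M_i) + w(R^{(1)}_i) + w(R^{(2)}_i) + w(T_i) \big)$. By Lemma~\ref{lem:partition_characterize} (applied to the feasible solution $H^*$), all edges of $S$ lie in $H^*$, so $w(S) \le w(H^*[S]) = w(H^*) - \sum_i w(H^*[R_i])$. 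First I would isolate, for each $i$, the subgraph $H^*_i = H^*[R_i]$, which (again by Lemma~\ref{lem:partition_characterize}) satisfies all three properties of the structure lemma inside $G_i$, and is therefore simultaneously a feasible instance for each of the four subroutines run on $R_i$.

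Next I would bound each subroutine's output against $w(H^*_i)$. The min-cost flow instance asks for flow $3$ from $V_{(i,\ell)}$ to $V_{(i,r)}$ at unit capacities; since $H^*_i$ satisfies property~1 it supports $3$ edge-disjoint such paths, so its indicator vector is a feasible integral flow and the optimal min-cost flow has weight at most $w(H^*_i)$; thus $w(M_i) \le w(H^*_i)$. For the two $2$-RSND runs: $H^*_i$ satisfies property~2, so it is feasible for each of the two $2$-RSND instances, and Theorem~\ref{thm:2RSND} gives a $2$-approximation, so $w(R^{(1)}_i) \le 2\,w(H^*_i)$ and $w(R^{(2)}_i) \le 2\,w(H^*_i)$. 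For Steiner Forest: property~3 says $H^*_i$ connects every terminal pair $(v_\ell, v_r)$ we included (namely those connected in $G_i$), so $H^*_i$ is a feasible Steiner Forest solution, and Lemma~\ref{lem:steiner_forest} gives a $(2 - 1/q)$-approximation where $q$ is the number of terminal pairs; since $|V_{(i,\ell)}|, |V_{(i,r)}| \le 2$ we have $q \le 4$, so $w(T_i) \le \tfrac{7}{4}\,w(H^*_i)$. Summing, $w(M_i) + w(R^{(1)}_i) + w(R^{(2)}_i) + w(T_i) \le (1 + 2 + 2 + \tfrac{7}{4})\, w(H^*_i) = \tfrac{27}{4}\, w(H^*_i)$.

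Finally I would assemble the pieces:
\begin{align*}
w(H) &\le w(S) + \sum_{i=0}^{p} \tfrac{27}{4}\, w(H^*_i) \le \Big( w(H^*) - \sum_i w(H^*_i) \Big) + \tfrac{27}{4} \sum_i w(H^*_i) \\
&\le \tfrac{27}{4}\, w(H^*) - \sum_i w(H^*_i) + \tfrac{27}{4} \sum_i w(H^*_i) - \tfrac{23}{4}\sum_i w(H^*_i)\\
&= \tfrac{27}{4}\, w(H^*),
\end{align*}
where the middle step rewrites $w(H^*) = w(H^*[S]) + \sum_i w(H^*_i)$ and uses $w(S) \le w(H^*[S])$, and the cruder bound $w(S) \le w(H^*) - \sum_i w(H^*_i) \le w(H^*)$ together with $\sum_i \tfrac{27}{4} w(H^*_i) \le \tfrac{27}{4}(w(H^*) - w(H^*[S]))$ already suffices; the cleanest phrasing is just $w(H) \le w(H^*[S]) + \tfrac{27}{4}\sum_i w(H^*_i) \le \tfrac{27}{4}\big(w(H^*[S]) + \sum_i w(H^*_i)\big) = \tfrac{27}{4}\, w(H^*)$.

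The main obstacle is not any single inequality but making sure the accounting is airtight: one must verify that the edge sets charged in different components are genuinely disjoint (guaranteed since the $R_i$ partition the relevant vertices and $S$ sits between them), that $H^*$ restricted to each $R_i$ really is feasible for \emph{all four} subroutine instances simultaneously (this is exactly the content of the three properties in Lemma~\ref{lem:partition_characterize}, which we may invoke since $H^*$ is feasible), and that the Steiner Forest terminal count is bounded by $4$ so the $(2-1/q)$ factor is at most $7/4$. The constant $27/4 = 1 + 2 + 2 + 7/4$ is then forced, and combining with $w(S) \le w(H^*[S])$ loses nothing beyond the per-component factor.
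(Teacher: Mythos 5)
Your proposal is correct and follows essentially the same approach as the paper: decompose $H$ into the separator edges plus the four subroutine outputs per component, observe via Lemma~\ref{lem:partition_characterize} that $H^*[R_i]$ is simultaneously feasible for all four subroutine instances (giving per-component factors $1$, $2$, $2$, $7/4$ summing to $27/4$), and combine using the fact that the separator edges must also appear in $H^*$. The middle display in your assembly is needlessly convoluted, but the ``cleanest phrasing'' you give at the end matches the paper's final computation exactly.
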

\iflong
\begin{proof}
Let $H_{i} = H[R_i]$ be the subgraph of $H$ induced by $R_i$, and let $H_{i}^* = H^*[R_i]$ be the subgraph of the optimal solution induced by $R_i$. We also let $H_{i}^M$ denote the subgraph of $H_i$ returned by the min-cost flow algorithm run on $R_i$ (i.e., the set of edges with non-zero flow), let $H_{i}^{N^1}$ and $H_{i}^{N^2}$ denote the subgraphs returned by the first and second 2-approximation 2-RSND algorithms run on $R_i$, respectively, and we let $H_{i}^{F}$ denote the subgraph of $H_i$ returned by the Steiner Forest algorithm on $R_i$. We also let $M_{i}^{*}$ be the optimal solution to the Minimum-Cost Flow instance on $R_i$, let $N_{i}^{1^*}$ and $N_{i}^{2^*}$ be the optimal solutions to the first and second 2-RSND instances on $R_i$, respectively, and let $F_{i}^{*}$ be the optimal solution to the Steiner Forest instance on $R_i$. Subgraph $H_{i}^M$ is given by an exact algorithm, subgraphs $H_{i}^{N^1}$ and $H_{i}^{N^2}$ are given by a 2-approximation algorithm, and subgraph $H_{i}^{F}$ is given by a $\left(2-\frac{1}{k}\right)$-approximation algorithm. Note that there are at most $4$ terminal pairs in the Steiner Forest instance, so $k \leq 4$ and the algorithm gives a $\frac{7}{4}$-approximation. Hence we have the following for each component $R_i$:
\begin{align*}
    w(H_{i}^M) &= w(M_{i}^{*}) &  w(H_{i}^{N^1}) &\leq 2 w(N_{i}^{1^*}) \\
    w(H_{i}^{N^2}) &\leq 2 w(N_{i}^{2^*}) & w(H_{i}^{F}) &\leq \frac{7}{4} w(F_{i}^{*}). 
\end{align*}
Summing over all components in the chain, we get the following:
\begin{align*}
    \sum_{i=0}^{p} w(H_{i}^M) &= \sum_{i=0}^{p} w(M_{i}^{*}) &
    \sum_{i=0}^{p} w(H_{i}^{N^1}) &\leq 2 \cdot \sum_{i=0}^{p} w(N_{i}^{1^*}) \\
    \sum_{i=0}^{p} w(H_{i}^{N^2}) &\leq 2 \cdot \sum_{i=0}^{p} w(N_{i}^{2^*}) &
    \sum_{i=0}^{p} w(H_{i}^{F}) &\leq \frac{7}{4}  \cdot \sum_{i=0}^{p} w(F_{i}^{*}). 
\end{align*}
We also have that
\begin{align*}
    w(H_{i}) &\leq w(H_{i}^M) + w(H_{i}^{N^1}) + w(H_{i}^{N^2})+ w(H_{i}^{F}).
\end{align*}
Summing over all components in the chain and then substituting the above, we get the following:
\begin{align*}
    \sum_{i=0}^{p} w(H_{i}) &\leq \sum_{i=0}^{p} w(H_{i}^M) + \sum_{i=0}^{p} w(H_{i}^{N^1}) + \sum_{i=0}^{p} w(H_{i}^{N^2}) + \sum_{i=0}^{p} w(H_{i}^{F}) \\
    &\leq \sum_{i=0}^{p} w(M_{i}^{*}) + 2 \cdot \sum_{i=0}^{p} w(N_{i}^{1^*}) + 2 \cdot \sum_{i=0}^{p} w(N_{i}^{2^*}) + \frac{7}{4}  \cdot \sum_{i=0}^{p} w(F_{i}^{*}). 
\end{align*}
The optimal subgraph $H^*$ is a feasible solution, so by Lemma \ref{lem:partition_characterize}, each property in the lemma statement must be met on subgraph $H^*_i$ for all $i$. For all properties in the lemma to be satisfied on $H^*_i$, the set of edges $E(H^*_i)$ must be a feasible solution to each of the Minimum-Cost Flow, 2-RSND, and Steiner Forest instances on $R_i$. Therefore, the cost of $H^*_i$ must be at least the cost of the optimal solution to each of the Minimum-Cost Flow, 2-RSND, and Steiner Forest instances. We therefore have the following:
\begin{align*}
    \sum_{i=0}^{p} w(H_{i}) &\leq \sum_{i=0}^{p} w(H^*_{i}) + 2 \cdot \sum_{i=0}^{p} w(H^*_{i}) + 2 \cdot \sum_{i=0}^{p} w(H^*_{i}) + \frac{7}{4}  \cdot \sum_{i=0}^{p} w(H^*_{i}) \leq \frac{27}{4} \cdot \sum_{i=0}^{p} w(H_{i}^*).
\end{align*}
Finally, we must account for the edges between components in the $s-t$ 2-chain. Let $S$ be the set of edges between components in the chain that are included in the algorithm solution, and let $S^*$ be the set of edges between components included in the optimal solution.  By Lemma~\ref{lem:partition_characterize}, any feasible solution must include all edges between the components of the chain. We therefore have that $S = S^*$ and we get the following:
\begin{align*}
    w(H) = \sum_{i=0}^{p} w(H_{i}) + w(S) &\leq \frac{27}{4}  \sum_{i=0}^{p} w(H_{i}^*) + w(S) \leq \frac{27}{4} \left( \sum_{i=0}^{p} w(H_{i}^*) + w(S^*)\right) \leq \frac{27}{4} w(H^*). \qedhere
\end{align*}
\end{proof}
\fi

Theorem~\ref{thm:3RSND} is directly implied by Lemmas~\ref{lem:3RSND-feasible} and \ref{lem:3RSND-feasible} together with the obvious observation that our algorithm runs in polynomial time.

\bibliography{refs}

\appendix

\section{Counterexamples from Section~\ref{sec:kEFTS}} \label{app:counterexamples}
We show some counterexample to obvious approaches to $k$-EFTS; in particular, we show that our cut requirement function $f_F$ is not weakly supermodular, and the most obvious cut requirement function $f(S) = \min(k, |\delta_G(S)|)$ is also not weakly supermodular.  

Recall that $\delta_G(S)$ denotes the edges in $G$ with exactly one endpoint in $S$.  We extend this notation for disjoint sets $A, B$ by letting $\delta_G(A,B)$ denote the edges with one endpoint in $A$ and one endpoint in $B$.

\begin{theorem}
The function $f_F$ is not weakly supermodular.
\end{theorem}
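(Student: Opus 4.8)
The plan is to disprove weak supermodularity of $f_F$ by exhibiting a single explicit instance $(G,k)$ together with two sets $A,B$ for which \emph{both} inequalities in the definition fail. Theorem~\ref{thm:LWS} already shows that $f_F(A)+f_F(B)\le\max\{f_F(A\setminus B)+f_F(B\setminus A),\ f_F(A\cap B)+f_F(A\cup B)\}$ whenever $A$ and $B$ are both nonempty cuts, so any counterexample must have at least one of $A,B$ be an empty cut. Since an empty cut $S$ has all of $\delta_G(S)$ in $F$, it satisfies $f_F(S)=\min(k,|\delta_G(S)|)-|\delta_G(S)|\le 0$; the only way to keep the left-hand side positive is therefore to take $A$ to be a \emph{small} cut (so $f_F(A)=0$) while $B$ is a \emph{large} cut whose whole boundary avoids $F$ (so $f_F(B)=k$). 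This dictates the shape of the construction: build $A$ out of degree-$1$ vertices, so $\delta_G(A)\subseteq F$ automatically, and place $B$ inside a region connected enough that $\delta_G(B)\cap F=\emptyset$, arranged so that the four ``corner'' sets $A\setminus B,\,B\setminus A,\,A\cap B,\,A\cup B$ all have $f_F$-value strictly less than $k$.

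Concretely, I would take $k=2$ and let $G$ have vertex set $\{p_1,p_2,p_3,p_4,p_4',p_4''\}$ and edge set $\{p_1p_4,\ p_2p_3,\ p_3p_4,\ p_3p_4',\ p_3p_4'',\ p_4p_4',\ p_4'p_4'',\ p_4''p_4\}$; set $A=\{p_1,p_2\}$ and $B=\{p_2,p_3\}$, so that $A\setminus B=\{p_1\}$, $B\setminus A=\{p_3\}$, $A\cap B=\{p_2\}$, and $A\cup B=\{p_1,p_2,p_3\}$. The first step is to determine $F$. Because $p_1$ and $p_2$ each have degree $1$, the cuts $\delta_G(\{p_1\})$ and $\delta_G(\{p_2\})$ have size $1$, so $p_1p_4\in F$ and $p_2p_3\in F$; on the other hand each of $p_3p_4$, $p_3p_4'$, $p_3p_4''$ admits three edge-disjoint paths joining its two endpoints inside $\{p_3,p_4,p_4',p_4''\}$, so (by Menger) the minimum cut separating its endpoints has size $3>k$, and hence it lies outside $F$. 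Whether the triangle edges $p_4p_4',p_4'p_4'',p_4''p_4$ belong to $F$ is irrelevant, since none of them crosses any of the six sets above.

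The second step is a short evaluation of $f_F(S)=\min(2,|\delta_G(S)|)-|\delta_G(S)\cap F|$ on the six sets. Here $\delta_G(A)=\{p_1p_4,p_2p_3\}\subseteq F$, so $f_F(A)=2-2=0$; $\delta_G(B)=\{p_3p_4,p_3p_4',p_3p_4''\}$ is disjoint from $F$, so $f_F(B)=2-0=2$; $\delta_G(\{p_1\})=\{p_1p_4\}$, so $f_F(A\setminus B)=1-1=0$; $\delta_G(\{p_3\})=\{p_2p_3,p_3p_4,p_3p_4',p_3p_4''\}$, so $f_F(B\setminus A)=2-1=1$; $\delta_G(\{p_2\})=\{p_2p_3\}$, so $f_F(A\cap B)=1-1=0$; and $\delta_G(\{p_1,p_2,p_3\})=\{p_1p_4,p_3p_4,p_3p_4',p_3p_4''\}$, so $f_F(A\cup B)=2-1=1$. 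Thus $f_F(A)+f_F(B)=2$, while $f_F(A\setminus B)+f_F(B\setminus A)=1$ and $f_F(A\cap B)+f_F(A\cup B)=1$; neither defining inequality of weak supermodularity holds, so $f_F$ is not weakly supermodular.

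I expect the only non-mechanical part to be the search for this instance, constrained as it is by Theorem~\ref{thm:LWS} (which forces the ``$A$ small, $B$ large and $F$-avoiding'' shape); within that, the one spot that needs care is checking that no unanticipated small cut of $G$ sneaks an edge of $\delta_G(B)$ into $F$, which is precisely what the three-edge-disjoint-paths argument above takes care of. Everything following the choice of instance is routine arithmetic.
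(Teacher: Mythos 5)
Your counterexample is correct: the six cut values check out ($f_F(A)=0$, $f_F(B)=2$, and all four corner sets have value $0$ or $1$), $F=\{p_1p_4,p_2p_3\}$ is identified correctly (the remaining four vertices induce a $K_4$, which is $3$-edge-connected, so no edge with both endpoints there can lie in a cut of size $\le 2$; and indeed no edge of $\delta_G(B)$ is in $F$), and both weak-supermodularity inequalities fail with $2 > 1$. You also correctly observe that any such example must, by Theorem~\ref{thm:LWS}, have at least one of $A,B$ be an empty cut, and you verify (implicitly) that $A$ is an empty cut here, so this does not contradict local weak supermodularity.

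Your construction is genuinely different from the paper's. The paper works at $k=100$ with an abstractly specified graph (cut sizes between parts are prescribed, the parts themselves are ``extremely dense and well-connected'' cliques), while you give a fully explicit $6$-vertex graph with $k=2$. The paper's $A$ has $f_F(A)=-1$ (a nonempty-$\delta_G$ cut whose boundary is entirely forced), whereas your $A$ is a small cut with $f_F(A)=0$; both routes drive the left-hand side up while keeping the four corner values down. Your version is cleaner as a standalone certificate --- no hand-waving about dense regions --- and your observation that Theorem~\ref{thm:LWS} constrains the search to ``at least one of $A,B$ is an empty cut'' is a nice way to motivate the construction, though of course neither you nor the paper needs a minimality argument to establish the theorem as stated.
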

\begin{proof}
Consider the following example.  Set $k = 100$.  We create a graph $G = (V, E)$ which has two sets $A, B \subseteq V$ with the following properties.
\begin{align*}
    |\delta_G(A \setminus B, V \setminus (A \cup B))|&= 49 & |\delta_G(B \setminus A, V \setminus (A \cup B))| &= 105 \\
    |\delta_G(A \cap B, V \setminus (A \cup B))| &=3 & |\delta_G(A \setminus B, B \setminus A)| &= 0 \\
    |\delta_G(A \setminus B, A \cap B)| &= 2 & |\delta_G(B \setminus A, A \cap B)| &= 49
\end{align*}
Anything not specified is extremely dense and well-connected, so an edge is in $F$ if and only if it is part of a small cut made up of the above sets.  It is not hard to see that the small cuts are precisely $A \setminus B$ (since $|\delta_G(A \setminus B)| = 49 + 0 + 2 = 51 < 100$) and $A \cap B$ (since $|\delta_G(A \cap B) = 3 + 2 + 49 = 54 < 100$).  All other cuts are large.  Hence $F$ consists of all edges involving $A$ or $B$ other than $\delta_G(B \setminus A, V \setminus (A \cup B))$, or more specifically,
\begin{align*}
F &= \delta_G(A \setminus B, V \setminus (A \cup B)) \cup \delta_G(A \cap B, V \setminus (A \cup B)) \cup \delta_G(A \setminus B, A \cap B) \cup \delta_G(B \setminus A, A \cap B).
\end{align*}
We can now calculate $f_F$ on the subsets we care about:
\begin{align*}
    f_F(A) &= 100 - 49 - 3 - 49 = -1 \\
    f_F(B) &= 100 - 3 - 2 = 95 \\
    f_F(A \setminus B) &= 0 \tag{$A \setminus B$ is small} \\
    f_F(B \setminus A) &= 100 - 49 = 51 \\
    f_F(A \cap B) &= 0 \tag{$A \cap B$ is small} \\
    f_F(A \cup B) &= 100 - 49 - 3 = 48
\end{align*}

Thus
\begin{align*}
    f_F(A) + f_F(B) &= 94 & f_F(A \setminus B) + f_F(B \setminus A) &= 51 & f_F(A \cup B) + f_F(A \cap B) &= 48
\end{align*}
Hence $f_F$ is not weakly supermodular.
\end{proof}

Note that the above example is not a contradiction of $f$ being \emph{locally} weakly supermodular since $A$ is an empty cut.

\begin{theorem}
The function $f = \min(k, |\delta_G(S)|)$ is not weakly supermodular.
\end{theorem}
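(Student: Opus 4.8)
The plan is to exhibit an explicit graph $G$ together with two vertex sets $A, B$ for which \emph{both} of the inequalities in the definition of weak supermodularity fail simultaneously. Since $f(S) = \min(k, |\delta_G(S)|)$ depends only on cut sizes, it suffices to control $|\delta_G(S)|$ for the six sets $A$, $B$, $A \setminus B$, $B \setminus A$, $A \cap B$, $A \cup B$, which is easiest to arrange by making $A \setminus B$, $B \setminus A$, and $A \cap B$ singletons with a prescribed number of incident edges.

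Concretely, I would fix any $k \geq 2$ and take $A \setminus B = \{u\}$, $B \setminus A = \{v\}$, $A \cap B = \{w\}$, and $k-1$ further vertices $z_1, \dots, z_{k-1}$ outside $A \cup B$. Put a single edge $(u,v)$ and the $k-1$ edges $(w, z_1), \dots, (w, z_{k-1})$, and no other edges incident to $\{u,v,w\}$. Then, tracking each edge through the $S_1, \dots, S_6$ partition exactly as in the setup of Theorem~\ref{thm:LWS}: $|\delta_G(A)| = |\delta_G(B)| = k$ (the $uv$-edge together with the $k-1$ edges at $w$), while $|\delta_G(A \setminus B)| = |\delta_G(B \setminus A)| = 1$ (just the $uv$-edge) and $|\delta_G(A \cap B)| = |\delta_G(A \cup B)| = k-1$ (the $uv$-edge becomes internal to $A \cup B$). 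Consequently $f(A) + f(B) = 2k$, whereas $f(A \setminus B) + f(B \setminus A) = 2$ and $f(A \cap B) + f(A \cup B) = 2(k-1)$; since $2k > 2$ and $2k > 2(k-1)$, neither $f(A)+f(B) \le f(A\setminus B)+f(B\setminus A)$ nor $f(A)+f(B) \le f(A\cap B)+f(A\cup B)$ holds, so $f$ is not weakly supermodular.

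The verification is entirely routine; the only care needed is the bookkeeping of which of the six pieces of the partition each edge contributes to, and checking that $|\delta_G(A)| = |\delta_G(B)| = k$ exactly (so that $f(A) = f(B) = k$) rather than something larger. There is no genuine obstacle here: the construction simply exploits the fact that the cap at $k$ lets a cut of size exactly $k$ and a cut of size $k-1$ both register as "large'' while their symmetric differences register as small. If one prefers an example that mirrors the style of the $f_F$ counterexample above (with $A$ and $B$ embedded in a denser graph), one can instead use the symmetric choice of edge multiplicities between the four parts $A\setminus B$, $B\setminus A$, $A\cap B$, $V\setminus(A\cup B)$ from the $S_1,\dots,S_6$ analysis and pad the parts internally; this changes none of the relevant cut counts.
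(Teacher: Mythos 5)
Your proof is correct and takes essentially the same approach as the paper: both exhibit an explicit graph $G$ and sets $A,B$ for which both weak-supermodularity inequalities fail simultaneously. Your construction is cleaner and parametric in $k$ (a single $uv$-edge plus a star at $w$, giving $f(A)=f(B)=k$, $f(A\setminus B)=f(B\setminus A)=1$, $f(A\cap B)=f(A\cup B)=k-1$), whereas the paper fixes $k=100$ and specifies the six cross-part edge counts directly with padding; the underlying idea — exploiting the cap at $k$ so that $A$ and $B$ register the full value $k$ while the uncrossed sets register strictly less — is the same.
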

\begin{proof}
Consider the following example.  Set $k = 100$.   We create a graph $G = (V, E)$ which has two sets $A, B \subseteq V$ with the following properties.  All of $A \setminus B$ and $B \setminus A$ and $A \cap B$ and $V \setminus (A \cup B)$ are extremely large and dense (e.g., large cliques).  There are no edges between $A \setminus B$, $B \setminus A$, or $A \cap B$.  The other cut sizes are:
\begin{align*}
    |\delta_G(A \cap B, V \setminus (A \cup B))| &= 55 \\
    |\delta_G(A \setminus B, V \setminus (A \cup B))| &= 95 \\
    |\delta_G(B \setminus A), V \setminus (A \cup B))| &= 95
\end{align*}
Then it is easy to see that
\begin{align*}
    f(A) &= 100 & f(b) &= 100 \\
    f(A \setminus B) &= 95 & f(B \setminus A) &= 95 \\
    f(A \cup B) &= 100 & f(A \cap B) &= 55
\end{align*}
Hence $f$ is not weakly supermodular.
\end{proof}

\iflong \else
\section{Proofs from Section~\ref{sec:kEFTS}} \label{app:kEFTS}
\begin{proof}[Proof of Lemma~\ref{lem:relaxation}]
Clearly $0 \leq x_e \leq 1$ for all $e \in E \setminus F'$.  Consider some $S \subseteq V$.  Since $H$ is a valid $k$-EFTS, the number of edges in $H \cap \delta_G(S)$ is at least $\min(k, |\delta_G(S)|)$ (or else the edges in $H \cap \delta_G(S)$ would be a fault set of size less than $k$ such that the connected components of $H$ post-faults are different from the connected components of $G$ post-faults).  Hence 
\begin{align*}
    \sum_{e \in \delta_G(S) \setminus F'} x_e &=|(H \cap \delta_G(S)) \setminus F'| = |H \cap \delta_G(S)| - |H \cap \delta_G(S) \cap F'| \geq |H \cap \delta_G(S)| - |\delta_G(S) \cap F'| \\
    &\geq \min(k, |\delta_G(S)|) - |\delta_G(S) \cap F'| = f_{F'}(S),
\end{align*}
as required.
\end{proof}

\begin{proof}[Proof of Lemma~\ref{lem:relaxation2}]
Suppose for contradiction that $H$ is not a valid $k$-EFTS.  Then there are two nodes $u,v \in V$ and a minimal set $A \subseteq E$ with $|A| < k$ so that $u,v$ are not connected in $H \setminus A$ but are connected in $G \setminus A$.  Let $S$ be the nodes reachable from $u$ in $G \setminus A$, and so by minimality of $A$ we know that $A = H \cap \delta_G(S)$.

Note that $|\delta_G(S)| > k$, or else all edges of $\delta_G(S)$ would be in $F$, implying that $E \cap \delta_G(S) = H \cap \delta_G(S) = A$ and so $u$ and $v$ would not be connected in $G \setminus A$.  Thus
\begin{align*}
    \sum_{e \in \delta_G(S) \setminus F'} x_e &= |H \cap \delta_G(S)| - |F' \cap \delta_G(S)| = |A| - |\delta_G(S) \cap F'| \\
    &< \min(k, |\delta_G(S)|) - |\delta_G(S) \cap F'| = f_{F'}(S),
\end{align*}
which contradicts $x$ being a feasible solution to LP($F'$).
\end{proof}

\begin{proof}[Proof of Lemma~\ref{lem:solve}]
We give a separation oracle, which when combined with the Ellipsoid algorithm implies the lemma~\cite{GLS88}.  Consider some vector $x$ indexed by edges of $E \setminus F'$.  Suppose that $x$ is not a feasible LP solution, so we need to find a violated constraint.  Obviously if there is some $x_e \not\in [0,1]$ then we can find this in linear time.  So without loss of generality, we may assume that there is some $S \subseteq V$ such that $\sum_{e \in \delta_G(S) \setminus F'} x_e < f_{F'}(S)$.  This implies that $f_{F'}(S) > 0$ and that there is some edge $e^* \in \delta_G(S) \setminus F'$ with $x_{e^* }< 1$ (since otherwise the LP would not be satisfiable, contradicting Lemma~\ref{lem:relaxation} and the fact that $G$ itself is a valid $k$-EFTS). Let $e^* = \{u,v\}$.  Since $e^* \not\in F'$, and $F \subseteq F'$, we know that $e^*$ cannot be part of any cuts in $G$ of size at most $k$, and thus the minimum $u-v$ cut in $G$ has more than $k$ edges.  

On the other hand, if we extend $x$ to $F'$ by setting $x_e = 1$ for all $e \in F'$, then since $S$ is a violated constraint we have that
\begin{align*}
    \sum_{e \in \delta_G(S)} x_e &= \sum_{e \in \delta_G(S) \setminus F'} x_e + |F' \cap \delta_G(S)| < f_{F'}(S) + |F \cap \delta_G(S)| \\
    &= \min(k, |\delta_G(S)|) - |\delta_G(S) \cap F'| + |\delta_G(S) \cap F'| \\
    &= k.
\end{align*}
Thus if we interpret $x$ as edge weights (with $x_e = 1$ for all $e \in F$), if we compute the minimum $s-t$ cut we will find a cut $S'$ with more than $k$ edges (since all $u-v$ cuts have more than $k$ edges) with total edge weight strictly less than $k$.  Let $S'$ be this cut.  Thus $\sum_{e \in \delta_G(S') \setminus F'} x_e < k - |\delta_G(S') \setminus F'| = f_{F'}(S')$, so $S'$ is also a violated constraint.

Hence for our separation oracle we simply compute a minimum $s-t$ cut using $x$ as edge weights for all $s,t \in V$, and if any cut we finds corresponds to a violated constraint then we return it.  By the above discussion, if there is some violated constraint then this procedure will find some violated constraint.  Thus this is a valid separation oracle.
\end{proof}

\begin{lemma}
\label{lem:big_sets}
Let $F' \supseteq F$.  If $A$ and $B$ are nonempty cuts for $f_{F'}$, then either $A \setminus B$ and $B \setminus A$ are nonempty cuts, or $A \cap B$ and $A \cup B$ are nonempty cuts.
\end{lemma}
\begin{proof}
Let 
\begin{align*}
S_1 &= \delta_{G}(A \setminus B, V \setminus (A \cup B)), &S_2 &= \delta_{G}(A \setminus B, B \setminus A), &S_3 &= \delta_{G}(A \setminus B, A \cap B), \\
S_4 &= \delta_{G}(B \setminus A,V \setminus (A \cup B)), &S_5 &= \delta_{G}(B \setminus A, A \cap B), &S_6 &= \delta_{G}(A \cap B,V \setminus (A \cup B)).
\end{align*}
Suppose that $A \setminus B$ and $A\cap B$ are both empty cuts. Each edge in $\delta_{G}(A)$ is in $S_1$, $S_2$, $S_5$, or $S_6$. Additionally, $S_1$ and $S_2$ are subsets of $\delta_{G}(A \setminus B)$, while $S_5$ and $S_6$ are subsets of $\delta_{G}(A \cap B)$. This means that every edge in $\delta_{G}(A)$ is in an empty cut, and so all edges in $\delta_{G}(A)$ are in $F'$. Thus $A$ is an empty cut, contradicting the assumption of the lemma.  Thus at least one of $A \setminus B$ and $A \cap B$ is nonempty.  If we instead assume that $B \setminus A$ and $A\cap B$ are empty cuts, then we can use a similar argument to prove that $B$ is an empty cut. This proves that at least one of $B \setminus A$ and $A\cap B$ are nonempty.  Hence if $A \cap B$ is empty, then both $A \setminus B$ and $B \setminus A$ are nonempty, proving the lemma.  

Now suppose that $A \setminus B$ and $A\cup B$ are both empty cuts. Each edge in $\delta_{G}(B)$ is in $S_2$, $S_3$, $S_4$, or $S_6$. Additionally, $S_2$ and $S_3$ are subsets of $\delta_{G}(A \setminus B)$, while $S_4$ and $S_6$ are subsets of $\delta_{G}(A \cup B)$. This means that every edge in $\delta_{G}(B)$ is in an empty cut, and so all edges in $\delta_{G}(B)$ are in $F'$.  Thus $B$ is an empty cut, contradicting the assumption of the lemma.  Thus at least one of $A \setminus B$ and $A \cup B$ is nonempty.  If we instead assume that $B \setminus A$ and $A \cup B$ are empty cuts, then we can use a similar argument to prove that $A$ is empty, and hence at least one of $B \setminus A$ and $A \cup B$ is nonempty.  Hence if $A \cup B$ is empty, then both $A \setminus B$ and $B \setminus A$ are nonempty, proving the lemma.

Thus either both $A \setminus B$ and $B \setminus A$ are nonempty, or both $A \cap B$ and $A \cup B$ are nonempty, proving the lemma.  
\end{proof}

\begin{proof}[Proof of Theorem~\ref{thm:LWS}]
Let $F' \supseteq F$, and suppose $A$ and $B$ are nonempty cuts. Let 
\begin{align*}
S_1 &= \delta_{G}(A \setminus B, V \setminus (A \cup B)), &S_2 &= \delta_{G}(A \setminus B, B \setminus A), &S_3 &= \delta_{G}(A \setminus B, A \cap B), \\
S_4 &= \delta_{G}(B \setminus A,V \setminus (A \cup B)), &S_5 &= \delta_{G}(B \setminus A, A \cap B), &S_6 &= \delta_{G}(A \cap B,V \setminus (A \cup B)).
\end{align*}
We also let $s_i = |S_i \cap F'|$ for $i \in [6]$.

$A$ and $B$ are nonempty cuts, so $A$ and $B$ must be large cuts and $\min(k,|\delta_{G}(A)|) = \min(k,|\delta_{G}(B)|) = k$. Each edge in $\delta_{G}(A)$ is in exactly one of $S_1$, $S_2$, $S_5$, and $S_6$, and each edge in $\delta_{G}(B)$ is in exactly one of $S_2$, $S_3$, $S_4$, and $S_6$, so we have that $|\delta_{G}(A) \cap F'| = s_1 + s_2 + s_5 + s_6$ and $|\delta_{G}(B) \cap F'| = s_2 + s_3 + s_4 + s_6$. We therefore have the following:
\begin{align}
    &f_{F'}(A) = \min(k,|\delta_{G}(A)|) - |\delta_{G}(A) \cap F| =  k - s_1 - s_2 - s_5 - s_6 \notag \\
    &f_{F'}(B) = \min(k,|\delta_{G}(B)|) - |\delta_{G}(B) \cap F| = k - s_2 - s_3 - s_4 - s_6 \notag \\
    \implies &f_{F'}(A) + f_{F'}(B) = 2k - s_1 - 2s_2 - s_3 - s_4 - s_5 - 2s_6. \label{eq:AB}
\end{align}

$A$ and $B$ are nonempty so by Lemma~\ref{lem:big_sets}, either $A \setminus B$ and $B \setminus A$ are nonempty cuts, or $A \cap B$ and $A \cup B$ are nonempty cuts. Suppose first that $A \setminus B$ and $B \setminus A$ are nonempty cuts, which implies that $\min(k,|\delta_{G}(A \setminus B)|) = \min(k,|\delta_{G}(B \setminus A)|) = k$. Each edge in $\delta_{G}(A \setminus B)$ is in exactly one of $S_1$, $S_2$, and $S_3$, and each edge in $\delta_{G}(B \setminus A)$ is in exactly one of $S_2$, $S_4$, and $S_5$, so we have that $|\delta_{G}(A \setminus B) \cap F'| = s_1 + s_2 + s_3$ and $|\delta_{G}(B \setminus A) \cap F'| = s_2 + s_4 + s_5$. Putting this all together, we get the following for $f_{F'}(A \setminus B)$ and $f_{F'}(B \setminus A)$:
\begin{align*}
    &f_{F'}(A \setminus B) = \min(k,|\delta_{G}(A \setminus B)|) - |\delta_{G}(A \setminus B) \cap F'| = k - s_1 - s_2 - s_3 \\
    &f_{F'}(B \setminus A) = \min(k,|\delta_{G}(B \setminus A)|) - |\delta_{G}(B \setminus A) \cap F'| = k - s_2 - s_4 - s_5 \\
    \implies &f_{F'}(A \setminus B) + f(B \setminus A) = 2k - s_1 - 2s_2 - s_3 - s_4 - s_5.
\end{align*}
This and~\eqref{eq:AB} imply that $f_{F'}(A) + f_{F'}(B) \leq f_{F'}(A\setminus B) + f_{F'}(B\setminus A)$ if $A \setminus B$ and $B \setminus A$ are nonempty cuts. 

Now suppose that $A \cap B$ and $A \cup B$ are nonempty cuts, and so $\min(k,|\delta_{G}(A \setminus B)|) = \min(k,|\delta_{G}(B \setminus A)|) = k$. Each edge in $\delta_{G}(A \cap B)$ is in exactly one of $S_3$, $S_5$, and $S_6$, and each edge in $\delta_{G}(A \cup B)$ is in exactly one of $S_1$, $S_4$, and $S_6$, so we have that $|\delta_{G}(A \cap B) \cap F'| = s_3 + s_5 + s_6$ and $|\delta_{G}(A \cup B) \cap F'| = s_1 + s_4 + s_6$. Putting this all together, we get the following for $f_{F'}(A \cap B)$ and $f_{F'}(A \cup B)$:

\begin{align*}
    &f_{F'}(A \cap B) = \min(k,|\delta_{G}(A \cap B)|) - |\delta_{G}(A \cap B) \cap F'| = k - s_3 - s_5 - s_6 \\
    &f_{F'}(A \cup B) = \min(k,|\delta_{G}(A \cup B)|) - |\delta_{G}(A \cup B) \cap F'| = k - s_1 - s_4 - s_6 \\
    \implies &f_{F'}(A \cap B) + f_{F'}(A \cup B) = 2k - s_1 - s_3 - s_4 - s_5 - 2s_6.
\end{align*}
This and~\eqref{eq:AB} imply that $f_{F'}(A) + f_{F'}(B) \leq f_{F'}(A\cap B) + f_{F'}(B\cup A)$ if $A \cap B$ and $A \cup B$ are nonempty cuts.
\end{proof} 

\begin{proof}[Proof of Lemma~\ref{lem:tight-dimension}]
Let $\mathcal L$ be a maximal laminar family of tight sets.  Lemma~\ref{lem:laminar-span} implies that $\Span(\mathcal L) = \Span(\mathcal T)$, so it suffices to upper bound the number of sets in $\mathcal L$.  And since we care about the span, if there are two sets $S, S'$ with $\mathcal A_G(S) = \mathcal A_G(S')$ then we can remove one of them from $\mathcal L$ arbitrarily, so no two sets in $\mathcal L$ have identical rows in the constraint matrix.  

Any set that consists of exclusively low degree nodes cannot be tight, since the set has no corresponding row in the constraint matrix. Thus, all sets in $\mathcal L$ must contain at least one high degree node, and hence all minimal sets in $\mathcal L$ have at least one high degree node. 

Let $S \in \mathcal L$, and let $S' \supset S$ so that every node in $S' \setminus S$ is a low-degree node.  Then every edge edge in $(\delta_G(S) \setminus \delta_G(S')) \cup (\delta_G(S') \setminus \delta_G(S))$ must be incident on at least one low-degree node and hence is in $F$.  Thus $\mathcal A_G(S) = \mathcal A_G(S')$, and hence $S'$ is not in $\mathcal L$.  Therefore, any superset $S'$ in the laminar family of some other set $S$ in the laminar family must have at least one more high degree node than $S$.  

Since any minimal set in $\mathcal L$ has at least one high degree node, and every set in $\mathcal L$ contains at least one more high degree node than any set in $\mathcal L$ that it contains, if we restrict each set in $\mathcal L$ to the high-degree nodes then we have a laminar family on the high-degree nodes.  Thus $|\mathcal L| \leq 2n_h - 1$.
\end{proof}

\section{Reduction to RSND on 2-Connected Graphs} \label{app:reduction}
In this section, we give a reduction from the Relative Survivable Network Design (RSND) problem on general graphs to the RSND problem on $2$-connected graphs. We then use this reduction to give a $2$-approximation algorithm for the special case of RSND in which all demands are at most $2$.

\subsection{Definitions}
Let $G'=(V,E')$ be the subgraph of $G$ obtained by removing all edges in cuts of size 1 from $G$.  We now construct the \emph{component graph} $G_C$ as follows.

\begin{definition}
Let $G_C = (V_C, E_C)$ be a component graph, where each connected component $C \in G'$ is represented by a vertex $v_C \in V_C$. Let $C_i$ and $C_j$ be connected components in $G'$. The edge $(v_{C_i}, v_{C_j})$ is in $E_C$ if and only if there exists vertices $i \in C_i$ and $j \in C_j$, such that $(i,j) \in E$.
\end{definition}

It is easy to see that $G_C$ is a tree and that every connected component of $G'$ is $2$-edge connected. 

\begin{definition}
A vertex $t \in V$ is a \emph{terminal vertex} if $t$ is adjacent to at least one edge in $E \setminus E'$. For each terminal vertex $t$, let $P_t$ be the set of vertex pairs, $(u,v)$, such that $u$ and $v$ are in different connected components in $G'$, and such that every $u-v$ path uses an edge in $E \setminus E'$ that has $t$ as an endpoint.
\end{definition}

\subsection{Reduction}
We are now able to give a reduction to RSND on 2-connected graphs. Going forward, it will be easier to refer to RSND demands using a demand function. We say that an RSND instance on graph $G$ with demands $\{(s_i, t_i, k_i)\}_{i \in [\ell]}$ has a corresponding demand function, $r : V \times V \rightarrow \mathbb{Z}$, such that $r(s_i,t_i) = k_i$ for all $i$ and $r(u,v) = 0$ for all other pairs.

\paragraph{Reduction:}  We reduce from an RSND instance on input graph $G = (V,E)$ with demand function $r(u,v)$ on vertex pairs $u,v \in V$ and edge weights $w : E \rightarrow \mathbb{R}_{\geq 0}$ to a new instance of RSND. The new instance is on graph $G_R$, has edge weight function $w_R$, and a demand function $r_R(u,v)$.  The input graph and edge weight function are unchanged: We set $G_R = G$ and $w_R = w$. Now we define $r_R(u,v)$. For each connected connected component $C \in G'$, the reduction is as follows:
\begin{enumerate}
    \item For each vertex pair $u,v \in C$ such that $u$ and $v$ are not terminal vertices, set $\displaystyle r_R(u,v) = r(u,v)$
    \item For each vertex pair $u,t \in C$ such that $u$ is not a terminal vertex and $t$ is a terminal vertex, set  $\displaystyle r_R(u,t) =  \max\left\{ r(u,t), \max_{v : (u,v) \in P_t }r(u,v)\right\}$
    \item For each vertex pair $t_1, t_2 \in C$ such that $t_1$ and $t_2$ are terminal vertices, set \\ $\displaystyle r_R(t_1,t_2) = \max \left\{r(t_1,t_2), \max_{(v,w) \in P_{t_1} \cap P_{t_2} } r(v,w) \right\}$.
\end{enumerate}
For all vertex pairs $u,v$ such that $u$ and $v$ are in different connected components in $G'$, if $r(u,v) > 0$ then we set $r_R(u,v) = 1$.

\begin{lemma}
Any feasible solution to the RSND problem on input graph $G$ with edge weight function $w(e)$ and demand function $r(u,v)$ is also a feasible solution to the RSND problem on input graph $G_R$ with edge weight function $w_R(e)$ and demand function $r_R(u,v)$.
\end{lemma}
\begin{proof}
We show that given a feasible subgraph $H_A$ to the original instance, $H_A$ is also a feasible solution to the reduction instance (and thus has the same cost). In particular, we will show that for each vertex pair $u,v \in V$, for any edge fault set $F$ with $|F| < r_R(u,v)$, $u$ and $v$ are connected in $G \setminus F$ if and only if they are connected in $H_A \setminus F$. When this property holds for a fixed vertex pair $u,v$ in $H_A$, we say that $u$ and $v$ are relative fault tolerant with respect to $G$ under the reduction instance (that is, with demand function $r_R(u,v)$). We will show that $u$ and $v$ are relative fault tolerant with respect to $G$ under the reduction instance. We have the following cases:
\begin{enumerate}
    \item Vertices $u$ and $v$ are in the same connected component $C$ in $G'$, and neither $u$ nor $v$ is a terminal vertex. Then, $r_R(u,v) = r(u,v)$. Both RSND instances have the same demand for the vertex pair. The subgraph $H_A$ is a feasible solution to the original instance, so $u$ and $v$ are relative fault tolerant with respect to $G$ under the original instance. Therefore, $u$ and $v$ in must still be relative fault tolerant with respect to $G_R$ under the reduction instance in this case.
    
    \item Vertices $u$ and $v$ are in the same connected component $C$ in $G'$, and exactly one of $u$ and $v$ is a terminal vertex. Suppose without loss of generality that $v$ is the terminal vertex. First, suppose that $r_R(u,v) = r(u,v)$. Both instances have the same demand for this vertex pair, so the argument is identical to that given in Case 1. Now suppose that $r_R(u,v) = \max_{x : (u,x) \in P_v }r(u,x)$. Let $x = \argmax_{x : (u,x) \in P_v }r(u,x)$.
    \begin{itemize}
        \item Suppose $u$ and $x$ are connected in $G \setminus F$, where $|F| < r(u,x) = r_R(u,v)$. Vertices $u$ and $x$ are in different connected components in $G'$, and every $u-x$ path must use an edge in $E \setminus E'$ that has $v$ as an endpoint. Therefore, $u$ and $v$ must also be connected in $G \setminus F$. Since $H_A$ is a feasible solution to the original instance, we have that if $u$ and $x$ are connected in $G \setminus F$ for some fault set $F$ with $|F| < r(u,x)$, then $u$ and $x$ are also connected in $H_A \setminus F$. Combining this with the fact that a path from $u$ to $x$ implies a path from $u$ to $v$ gives us the following:  If $u$ and $x$ are connected in $G \setminus F$ for some fault set $F$ with $|F| < r(u,x) = r_R(u,v)$, then $u$ and $v$ are connected in both $G \setminus F$ and in $H_A \setminus F$. Therefore, since $u$ and $x$ are connected in $G \setminus F$ (and in $H_A \setminus F$), $u$ and $v$ must be connected in $H_A \setminus F$. This means that $u$ and $v$ are relative fault tolerant with respect to $G$ under the reduction instance in this case. 
        
        \item Now suppose $u$ and $x$ are not connected in $G \setminus F$, but $u$ and $v$ are still connected in $G \setminus F$, for some fault set $F$ with $|F| < r(u,x) = r_R(u,v)$. Since $G_C$ is a tree and $u$ and $v$ are in the same connected component, $C \in G'$, vertices $u$ and $v$ can only be separated in $H_A$ by edges in $E(C)$. Therefore we only consider the edges in $F$ that are in $E(C)$. Let $F_C = E(C) \cap F$, and note that $u$ and $v$ are connected in $G \setminus F_C$. We will show that $u$ and $v$ must also be connected in $H_A \setminus F_C$ (and therefore in $H_A \setminus F$). Vertices $u$ and $v$ are connected in $G \setminus F_C$, and $F_C \subseteq E(C)$; therefore, $u$ and $x$ are also connected in $G \setminus F_C$. Additionally, since $|F_C| < r(u,x)$, we have the following: If $u$ and $x$ are connected in $G \setminus F_C$, then $u$ and $x$ are also connected in $H_A \setminus F_C$. This implies that $u$ and $v$ are connected in $H_A \setminus F_C$, and therefore in $H_A \setminus F$, meaning that $u$ and $v$ are relative fault tolerant with respect to $G$ under the reduction instance in this case. 
    \end{itemize}
    
    \item Vertices $u$ and $v$ are in the same connected component $C$ in $G'$, and both $u$ and $v$ are terminal vertices. First, suppose that $r_R(u,v) = r(u,v)$. Both instances have the same demand for the vertex pair, and so the argument is identical to that given in Case 1. Now suppose that $r_R(u,v) = \max_{(x,y) \in P_{u} \cap P_{v} } r(x,y)$. Let $(x,y) =  \argmax_{(x,y) \in P_{u} \cap P_{v} } r(x,y)$.
    \begin{itemize}
        \item Suppose $x$ and $y$ are connected in $G \setminus F$, where $|F| < r(x,y) = r_R(u,v)$. Vertices $x$ and $y$ are in different connected components in $G'$, and every $x-y$ path must use an edge that has $u$ as an endpoint and an edge that has $v$ as an endpoint. Therefore, $u$ and $v$ must also be connected in $G \setminus F$. Additionally, $H_A$ is a feasible solution to the original instance, so we have that if $x$ and $y$ are connected in $G \setminus F$ for some fault set $F$ with $|F| < r(x,y)$, then $x$ and $y$ are also connected in $H_A \setminus F$. Combining this with the fact that a path from $x$ to $y$ implies a path from $u$ to $v$ gives us the following:  If $x$ and $y$ are connected in $G \setminus F$ for some fault set $F$ with $|F| < r(x,y) = r_R(u,v)$, then we have that $u$ and $v$ are also connected in both $G \setminus F$ and in $H_A \setminus F$. Therefore, $u$ and $v$ must be connected in $H_A \setminus F$, and so $u$ and $v$ are relative fault tolerant under the reduction instance in this case.
        
        \item Now consider the case when $x$ and $y$ are not connected in $G \setminus F$, but $u$ and $v$ are connected in $G \setminus F$, for some fault set $F$ with $|F| < r(x,y) = r_R(u,v)$. Since $G_C$ is a tree, $u$ and $v$ can only be separated in $H_A$ by edges in $E(C)$. Therefore, we only consider the edges in $F$ that are in $E(C)$. Let $F_C = E(C) \cap F$, and note that $u$ and $v$ are connected in $G \setminus F_C$. We will show that $u$ and $v$ must also be connected in $H_A \setminus F_C$ (and therefore in $H_A \setminus F$). Vertices $u$ and $v$ are connected in $G \setminus F_C$, and $F_C \subseteq E(C)$; therefore, $x$ and $y$ are also connected in $G \setminus F_C$. Additionally, because $|F_C| < r(x,y)$, we have that if $x$ and $y$ are connected in $G \setminus F_C$, then $x$ and $y$ are also connected in $H_A \setminus F_C$. This implies that $u$ and $v$ are connected in $H_A \setminus F_C$, and therefore in $H_A \setminus F$. Thus, $u$ and $v$ are relative fault tolerant with respect to $G$ under the reduction instance in this case.
    \end{itemize}
    
    \item Vertices $u$ and $v$ are in different connected components in $G'$. There is only one edge-disjoint path from $u$ to $v$ in $G$. In the reduction instance, if $r_R(u,v) = 1$ then $r(u,v) > 0$.  Since $H_A$ is feasible, if $r(u,v) > 0$, then there is a single path from $u$ to $v$ in $H_A$ if there is a path from $u$ to $v$ in $G$. Therefore, the demand $r_R(u,v) = 1$ is always satisfied in $H_A$, and so $u$ and $v$ are relative fault tolerant with respect to $G$ under the reduction instance.
\end{enumerate}
\end{proof}

We now show that a feasible solution to the reduction RSND instance is a feasible solution to the original instance.

\begin{lemma}
\label{lem:reduction}
Any feasible solution to the RSND problem on input graph $G_R$ with edge weight function $w_R(e)$ and demand function $r_R(u,v)$ is also a feasible solution to the original RSND problem on input graph $G$ with edge weight function $w(e)$ and demand function $r(u,v)$.
\end{lemma}
\begin{proof}
We show that given a feasible solution subgraph $H_B$ to the reduction instance, $H_B$ is also a feasible solution to the original RSND instance (and thus has the same cost). If vertices $u$ and $v$ are in the same connected component in $G'$, then $r_R(u,v) \geq r(u,v)$. As a result, $u$ and $v$ are relative fault tolerant with respect to $G$ under the original instance. 

Suppose instead that $u$ and $v$ are in different connected components. Let $C_u$ and $C_v$ be different connected components in $G'$, and let $u \in C_u$ and $v \in C_v$ be vertices in these components. We will show that if $u$ and $v$ are connected in $G \setminus F$, where $F$ is an edge fault set with $|F| < r(u,v)$, then $u$ and $v$ are connected in $H_B \setminus F$.

The component subgraph $G_C$ is a tree and $u$ and $v$ are in different components, so there is a size 1 cut that separates $u$ and $v$ in $G$. Therefore, if $u$ and $v$ are connected in $G \setminus F$ for some edge fault set $F$, then $F$ cannot have an edge from any of the size 1 cuts that separate $u$ and $v$. Note that all other size 1 cuts are not on any $u$-$v$ path. As a result, we only need to consider fault sets $F$ such that $|F| > 1$ and $F$ does not contain size 1 cuts. Any such $F$ must have all edges within the connected components of $G'$. We can assume without loss of generality that all edges in $F$ are in the same connected component in $G'$. We will now show that if $u$ and $v$ are connected in $G \setminus F$, where $|F| < r(u,v)$, then $F$ cannot separate $u$ or $v$ from any of its terminal vertices in $H_B \setminus F$ (and therefore $u$ and $v$ are connected in $H_B \setminus F$).

Suppose $F$, with $|F| < r(u,v)$, is one of these fault sets, and that without loss of generality that $F \subseteq E(C_u)$ (the argument is identical when $F \subseteq E(C_v)$; we will later handle the case when $F \subseteq E(C_i)$ where $i \neq u,v$). Let $t_u \in C_u$ be the terminal vertex such that $(u,v) \in P_{t_u}$. Since $r(u,v) \leq r_R(u,t_u)$, we have that if $u$ and $t_u$ ($v$ and $t_v$) are connected in $G \setminus F$, then $u$ and $t_u$ ($v$ and $t_v$) are connected in $H_B \setminus F$.

Now suppose that $u$ and $v$ are connected through some other connected component, $C_i$, such that $u,v \notin C_i$. Also, let $t_1$ and $t_2$ be terminal vertices in $C_i$ such that $(u,v) \in P_{t_1} \cap P_{t_2}$.  Suppose in addition that $F \in C_i$, and that there is a path from $u$ to $t_1$ and a path from $v$ to $t_2$ in $G \setminus F$. If $t_1 \neq t_2$, then because $r(u,v) \leq r_R(t_1,t_2)$, we have that if $t_1$ and $t_2$ are connected in $G \setminus F$, then they are connected in $H_B \setminus F$. We have shown that if $u$ and $v$ are connected in $G \setminus F$, where $|F| < r(u,v)$, then in $H_B$, $F$ cannot separate $u$ or $v$ from any of their terminal vertices.

Finally, we consider the empty fault set. That is, we want to show that if $u$ and $v$ are connected in $G$, then they are connected in $H_B$. For every vertex pair $u,v$, if $r(u,v) > 0$, then $r_R(u,v) = 1$. Subgraph $H_B$ is feasible, so if $r_R(u,v) = 1$ then there is a path from $u$ to $v$ in $H_B$ if there is a path from $u$ to $v$ in $G$. This means that if $r(u,v) > 0$, there is a path from $u$ to $v$ in $H_B$ if there is a path from $u$ to $v$ in $G$. Since there is only 1 edge-disjoint path from $u$ to $v$ in $G$, we have that $u$ and $v$ are relative fault tolerant with respect $G$ under the original instance.
\end{proof}

We have shown that any feasible solution to one instance is also a feasible solution to the other. This also implies that the optimal solution to both the original and reduction instances is the same, and has the same value. We now give a corollary that allows us to assume that any input graph of an RSND instance is 2-edge connected.

\begin{theorem}
\label{thm:2connected-app}
If there exists an $\alpha$-approximation algorithm for RSND on 2-edge connected graphs, then there is an $\alpha$-approximation algorithm for RSND on general graphs. 
\end{theorem}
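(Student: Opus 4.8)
The plan is to give the natural "solve each piece separately'' algorithm and then verify feasibility and cost, leaning entirely on the two reduction lemmas already proved in this section. Given an $\alpha$-approximation $\mathcal{A}$ for RSND on $2$-edge-connected graphs, the algorithm for a general instance $(G,w,r)$ proceeds as follows: build $G'$ and its connected components $C_1,\dots,C_\ell$, form the component tree $G_C$, and compute the reduced demand function $r_R$. Each $C_i$ is $2$-edge-connected, so run $\mathcal{A}$ on the RSND instance $(G[C_i],\, w|_{E(C_i)},\, r_R|_{C_i})$ to obtain a subgraph $H_i$, where $r_R|_{C_i}$ denotes $r_R$ restricted to pairs with both endpoints in $C_i$. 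Then add to the solution every bridge $e\in E\setminus E'$ that lies on the unique $C_i$--$C_j$ path of $G_C$ for some pair of components containing vertices $v_i,v_j$ with $r_R(v_i,v_j)>0$; call this edge set $\tilde E$. Output $H=\tilde E\cup\bigcup_i H_i$. (Note each bridge of $G$ is the unique edge of $G$ between its two component sides, since two parallel inter-component edges would both survive in $G'$; hence $G_C$ is genuinely a tree and this selection is well defined.)

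For feasibility I would first show $H$ is feasible for the reduced instance on $G_R=G$ with demands $r_R$. The cross-component demands are all equal to $1$, and they are met because the single bridge on the relevant $G_C$-path is forced into $\tilde E$. For an intra-component demand $r_R(u,v)=d$ with $u,v\in C_i$ and a fault set $F$, $|F|<d$, with $u,v$ connected in $G\setminus F$: since $G_C$ is a tree, any bridge in $F$ would disconnect $u$ from $v$, so $F\subseteq E(C_i)$, and then $u,v$ are connected in $G[C_i]\setminus F$; by feasibility of $H_i$ they are connected in $H_i\setminus F\subseteq H\setminus F$. Thus $H$ is feasible for the reduced instance, and Lemma~\ref{lem:reduction} then gives feasibility for the original instance.

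For the cost bound, let $H^*$ be an optimal solution to the original instance and $H_i^*=H^*[C_i]$. The key point, again using that $G_C$ is a tree, is that each $H_i^*$ is itself feasible for the sub-instance on $C_i$: take a pair $u,v\in C_i$ with $r_R(u,v)=d$ and $F\subseteq E(C_i)$, $|F|<d$, with $u,v$ connected in $G[C_i]\setminus F$; since $H^*$ is feasible for the reduced instance (by the first reduction lemma applied to $H^*$) and $u,v$ are connected in $G\setminus F$, they are connected in $H^*\setminus F$, and a $u$--$v$ walk there can be short-cut to stay inside $C_i$ because leaving $C_i$ would require re-traversing a bridge of $C_i$. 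Hence $H_i^*$ is feasible for the sub-instance, so $w(H_i)\le\alpha\cdot w(H_i^*)$. Moreover $\tilde E$ consists of bridges, each of which must lie in any feasible solution and in particular in $H^*$, and the sets $\tilde E, E(H_1^*),\dots,E(H_\ell^*)$ are pairwise disjoint subsets of $E(H^*)$; since weights are nonnegative, $w(\tilde E)+\sum_i w(H_i^*)\le w(H^*)$. Combining, and using $\alpha\ge 1$,
\[
w(H)=w(\tilde E)+\sum_{i=1}^{\ell} w(H_i)\le w(\tilde E)+\alpha\sum_{i=1}^{\ell} w(H_i^*)\le\alpha\Bigl(w(\tilde E)+\sum_{i=1}^{\ell} w(H_i^*)\Bigr)\le\alpha\cdot w(H^*).
\]

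The main obstacle is the tree-structure bookkeeping that appears twice above: making precise that for intra-component pairs it suffices to consider fault sets contained in that single component, and that paths realizing such connectivity can be taken to stay inside the component. Both rely on $G_C$ being a tree with exactly one edge between adjacent components, together with the defining property of $r_R$ that the intra-component requirements implied by cross-component demands of the original instance are carried over into the reduced demands; everything else is routine summation.
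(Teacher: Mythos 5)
Your proposal follows the same route as the paper: decompose on bridges, run $\mathcal{A}$ on each $2$-edge-connected component of $G'$ with the restricted demands $r_R|_{C_i}$, add the forced bridges, and then use the two reduction lemmas for feasibility and disjoint summation for the cost bound. In fact you fill in a point the paper leaves implicit, namely why $H_i^* = H^*[C_i]$ is itself feasible for the sub-instance on $C_i$ (via the shortcut argument), which is what really justifies $w(H_i)\le\alpha\,w(H_i^*)$.

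One small misstep in your feasibility paragraph: the claim ``any bridge in $F$ would disconnect $u$ from $v$, so $F\subseteq E(C_i)$'' is backwards. For $u,v\in C_i$, removing a bridge never disconnects them (they lie in the same $2$-edge-connected block), so the inference to $F\subseteq E(C_i)$ does not hold and need not hold. The right reduction is the one you yourself use in the cost argument: set $F_C = F\cap E(C_i)$; any $u$--$v$ walk in $G\setminus F$ that leaves $C_i$ must return through the same bridge and can be short-cut, so $u,v$ connected in $G\setminus F$ implies $u,v$ connected in $G[C_i]\setminus F_C$, and $H_i\setminus F = H_i\setminus F_C$ since $H_i\subseteq E(C_i)$. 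With that correction, the proof is sound and aligns with the paper's argument (the paper also states $\tilde E = \tilde E^*$, where your weaker $\tilde E\subseteq E(H^*)$ is actually the cleaner thing to claim, since a non-minimal optimum could contain extra bridges).
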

\begin{proof}
Suppose we have an $\alpha$-approximation algorithm for RSND on 2-edge connected graphs. An $\alpha$-approximation algorithm for RSND on general graphs is as follows: Perform the reduction described above, and run the $\alpha$-approximation algorithm for RSND on 2-edge connected graphs on each connected component in $G'$ (recall that each component is 2-edge connected). Then, for each edge $e \in E \setminus E'$, we include $e$ in the solution subgraph $H$ if there exists a pair of connected components $C_i, C_j \in G'$ such that $e$ is on the path from $C_i$ and $C_j$ and there exists vertices $v_i \in C_i$ and $v_j \in C_j$ such that $r_R(v_i,v_j) > 0$.

The algorithm returns a subgraph that is a feasible solution to the reduction instance, so the subgraph is a feasible solution to the original RSND instance by Lemma \ref{lem:reduction}. Now we will show that the algorithm gives an $\alpha$-approximation of the RSND problem. Let $H$ be the solution subgraph returned by the algorithm, and let $H^*$ be the optimal solution to the RSND problem instance. Let $C_1, C_2, \dots, C_\ell$ be the connected components of $G'$. For a fixed connected component $C_i$, let $H_i = H[C_i]$ be the subgraph of $H$ induced by component $C_i$, and let $H_i^* = H^*[C_i]$ be the subgraph of $H^*$ induced by $C_i$. Let $c(S)$ denote the total weight of a subgraph $S$, and $c(T)$ denote the sum of the weights of each edge in the edge set $T$. 

Each connected component under the reduction instance is an instance of the RSND problem on 2-edge connected graphs, and the algorithm runs the $\alpha$-approximation for 2-edge connected RSND on each instance. Hence $c(H_i) \leq \alpha \cdot c(H_i^*)$ for all $i$.  Summing over all connected components, we get that
\begin{align*}
     \sum_{i=1}^{\ell} c(H_i) &\leq \alpha \cdot \sum_{i=1}^{\ell} c(H_i^*).
\end{align*}

Finally, let $\tilde{E}$ be the set of edges from size 1 cuts in $G$ that are included in the algorithm solution $H$. Additionally, let $\tilde{E}^*$ be the set of edges from size 1 cuts in $G$ that are included in the optimal solution. Any edge $e$ from a size 1 cut must be included in any feasible solution if $e$ connects a vertex pair with positive demand. The algorithm only selects the edges from size 1 cuts that must be included in any feasible solution, so $\tilde{E} = \tilde{E^*}$. Putting everything together, we have the following:
\begin{align*}
    c(ALG) &= \sum_{i=1}^{\ell} c(H_i) + c(\tilde{E}) \leq \alpha  \sum_{i=1}^{\ell} c(H_i^*) +  c(\tilde{E}) \leq \alpha \left( \sum_{i=1}^{\ell} c(H_i^*) + \cdot c(\tilde{E}^*) \right) = \alpha \cdot c(OPT). \qedhere
\end{align*}
\end{proof}

\subsection{Special Case: 2-RSND}
\label{app:2-RSND}
The $k$-RSND problem is a special case of the RSND problem. In $k$-RSND, the input is still a graph $G=(V,E)$ and a demand for each vertex pair; however, all demands are at most $k$. 

\begin{theorem} \label{thm:2RSND-app}
There is a $2$-approximation algorithm for $2$-RSND.
\end{theorem}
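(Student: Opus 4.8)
The plan is to reduce $2$-RSND to ordinary Survivable Network Design on $2$-edge-connected graphs, where Jain's $2$-approximation already applies. First I would invoke Theorem~\ref{thm:2connected}: it suffices to give a $2$-approximation for RSND restricted to $2$-edge-connected input graphs, provided the instance that the reduction produces for each $2$-edge-connected component of $G'$ is still a $2$-RSND instance. This is immediate from inspecting the reduction: every new demand $r_R(u,v)$ is either $1$ (for pairs separated across components of $G'$) or a maximum of original demand values, so the bound $k_i \leq 2$ is preserved. Hence each component instance is a $2$-RSND instance on a $2$-edge-connected graph, and an $\alpha$-approximation there yields an $\alpha$-approximation in general.

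So it remains to handle $2$-RSND when $G$ itself is $2$-edge-connected. The key observation is that on such a graph the relative fault-tolerance condition collapses to the absolute one. Fix a demand $(s_i,t_i,k_i)$ with $k_i \leq 2$ and any $F \subseteq E$ with $|F| < k_i$, so $|F| \leq 1$. Since $G$ is $2$-edge-connected, $G \setminus F$ is still connected, so $s_i$ and $t_i$ are connected in $G \setminus F$; thus the RSND requirement for this demand is exactly that $s_i$ and $t_i$ are connected in $H \setminus F$ for every such $F$, which by Menger's theorem is equivalent to $H$ containing $k_i$ edge-disjoint $s_i$--$t_i$ paths. That is precisely the SND constraint. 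Consequently, the feasible solutions of the $2$-RSND instance on $G$ are exactly the feasible solutions of the SND instance with the same demands, and in particular the two optima coincide. Running Jain's iterative-rounding $2$-approximation for SND~\cite{Jain01} then returns a subgraph that is feasible for $2$-RSND and has cost at most $2\cdot\mathrm{OPT}$; combining this with Theorem~\ref{thm:2connected} applied with $\alpha=2$ proves the theorem.

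There is not much that is genuinely difficult here; the two points requiring care are exactly the two observations above. First, one must check that the reduction of Theorem~\ref{thm:2connected} does not inflate any demand above $2$, so that every component subproblem is still a $2$-RSND instance rather than a general RSND instance (this is where the approach would fail for demand bound $k \geq 3$, since a $2$-edge-connected component need not be $k$-edge-connected and the relative condition would not degenerate). Second, one must verify that $2$-edge-connectivity of the component guarantees $G \setminus F$ remains connected whenever $|F| \leq 1$, so the relative demand truly coincides with the SND demand. Once these are in place, the result is immediate.
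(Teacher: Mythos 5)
Your proof is correct and follows essentially the same route as the paper: reduce via Theorem~\ref{thm:2connected} to the $2$-edge-connected case, observe that there the relative requirement for a demand $k_i\le 2$ coincides with the absolute SND requirement, and apply Jain's $2$-approximation. Your additional check that the reduction never inflates a demand above $2$ is a worthwhile sanity check that the paper leaves implicit, but it does not change the substance of the argument.
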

\begin{proof}
For every vertex pair $u,v$ in a two-connected graph, the number of edge disjoint paths from $u$ to $v$ is at least $2$. Therefore, an instance of 2-connected 2-RSND is an instance of SND, and hence Jain's $2$-approximation for SND~\cite{Jain01} is 2-approximation for 2-RSND on 2-connected graphs. By Theorem~\ref{thm:2connected}, this gives a 2-approximation algorithm for 2-RSND on general graphs. 
\end{proof}

Unfortunately, this algorithm does not directly extend to larger demand upper bounds. If we require the connected components in $G'$ to be $k$-edge connected, where $k$ is the demand upper bound, then the edges in $E \setminus E'$ may not form a tree on the components of $G'$. This would require some other means for selecting edges between these connected components. If we instead carry out the reduction from Theorem~\ref{thm:2connected}, then each connected component would still be an instance of general RSND, since there may be demands that are larger than 2, while each connected component is only guaranteed to be 2-edge connected.

\section{Proof of Lemma~\ref{lem:partition_characterize} (Structure Lemma)} \label{app:structure} 
Let $i \leq j$, let $H$ be a subgraph of $G$, and let $H_i = H[R_i]$. We will say that an edge $e$ is between subgraphs $H_i$ and $H_j$, or that $e$ is within the subchain that starts at $H_i$ and ends at $H_j$, if the following is true: Either edge $e$ is in $E(H_k)$ such that $i \leq k \leq j$, or $i \neq j$ and $e \in S_k$ such that $i \leq k \leq j-1$. Before we begin the proof, we will need a lemma that describes the connectivity of vertices in $V_{(i,\ell)}$ and $V_{(j,r)}$ in $G$ when there are no edge faults between components $R_i$ and $R_j$.

\begin{lemma}
\label{lem:no_faults_G}
In the $s-t$ 2-chain of $G$, consider the subchain that starts at $G_i$ and ends at $G_j$, inclusive, where $i \leq j$. Then for every $u \in V_{(i,\ell)}$, there is a path from $u$ to $V_{(j,r)}$.  Similarly, for each $u \in V_{(j,r)}$, there is a path from $u$ to $V_{(i,\ell)}$.  These paths only use edges within the subchain that starts at $G_i$ and ends at $G_j$.
\end{lemma}
\begin{proof}
First we will show that for any subgraph $G_k$ within the subchain, where $G_k = G[R_k]$, there is a path in $G_k$ from $V_{(k,\ell)}$ to each vertex in $V_{(k,r)}$. Fix $G_k$ with $i \leq k \leq j$. Suppose for the sake of contradiction that at least one vertex $v_{kr_1} \in V_{(k,r)}$ is not reachable from $V_{(k,\ell)}$ in $G_k$. If $|V_{(k,r)}|= 1$, then there is no path from $V_{(k,\ell)}$ to $V_{(k,r)}$ in $G_k$. This means there is a size 0 cut in $G_k$, and therefore a size 0 cut in $G$.  But $G$ is 2-connected, so this gives a contradiction. If $|V_{(k,r)}|= 2$, let $V_{(k,r)} = \{ v_{kr_1}, v_{kr_2} \}$. Then, the edge in $S_k$ that is incident on $v_{kr_2}$ is a $(V_{(k,\ell)},t)$-separator in $G \setminus \cup_{j=0}^{k-1} R_j$ with size 1, meaning that $S_k$ is not minimal. This contradicts the assumption that $S_k$ is an important separator. Therefore, in $G_k$, each vertex in $V_{(k,r)}$ must be reachable from $V_{(k,\ell)}$ in $G_k$. 

Now we show via a similar argument that in $G_k$, there is a path from each vertex in $V_{(k,\ell)}$ to $V_{(k,r)}$. Fix $G_k$ with $i \leq k \leq j$. Suppose for the sake of contradiction that at least one vertex $v_{k\ell_1} \in V_{(k,\ell)}$ is not reachable from $V_{(k,r)}$ in $G_k$. If $|V_{(k,\ell)}|= 1$, then there is no path from $V_{(k,\ell)}$ to $V_{(k,r)}$ in $G_k$, meaning there is a size 0 cut in $G$. This contradicts the assumption that $G$ is 2-connected. If $|V_{(k,\ell)}|= 2$, let $V_{(k,\ell)} = \{ v_{k\ell_1}, v_{k\ell_2} \}$. Then, the edge in $S_{k-1}$ that is incident on $v_{k\ell_2}$ is a $(V_{(k-1,\ell)},t)$-separator in $G \setminus \cup_{j=0}^{k-2} R_j$ with size 1, meaning that $S_{k-1}$ is not minimal. This contradicts the assumption that $S_{k-1}$ is an important separator.

We have shown that for each subgraph $G_k$ in the subchain starting at $G_i$ and ending at $G_j$, each vertex in $V_{(k,r)}$ is reachable from $V_{(k,\ell)}$, and $V_{(k,r)}$ is reachable from each vertex in $V_{(k,\ell)}$ in $G_k$. This implies that each vertex in $V_{(j,r)}$ is reachable from $V_{(i,\ell)}$, and that $V_{(j,r)}$ is reachable from each vertex in $V_{(i,\ell)}$, using only edges in the subchain from $G_i$ to $G_j$. This can been seen via a proof by induction on the number of components into the $s-t$ 2-chain. 
\end{proof}

\subsection{Only if}
We are now ready to prove that the properties in Lemma~\ref{lem:partition_characterize} are necessary. Suppose subgraph $H$ is a feasible solution, and suppose for the sake of contradiction that for some important separator $S_i$ in the chain, there is an edge $e_1 \in S_i$ that is not in $H$. Let $e_2$ be the other edge in $S_i$ (note that $e_2$ must exist since $|S_i| = 2$ for all $i$). First, suppose that $e_2$ is in $H$. We will show that $s$ and $t$ are connected in $G \setminus \{e_2\}$ but that they are not connected in $H \setminus \{e_2\}$, giving a contradiction. Separator $S_i$ is a size 2 $s-t$ cut, so $s$ and $t$ are not connected in $H \setminus \{e_2\}$. Now we just need to show that they are connected in $G \setminus \{e_2\}$. There are no edge faults between $G_0$ and $G_i$, so by Lemma \ref{lem:no_faults_G} there is a path from $V_{(0,\ell)} = s$ to each vertex in $V_{(i,r)}$ that only uses edges between $G_0$ and $G_i$. In $G \setminus \{e_2\}$, one vertex in $V_{(i,r)}$ is adjacent to a vertex in $V_{(i+1,\ell)}$; they share the edge $e_1$. Therefore there is a path from a vertex in $V_{(i,r)}$ to a vertex in $V_{(i+1,\ell)}$ that only uses the edge $e_1$. Finally, there are no edge faults between $G_{i+1}$ and $G_p$, so again by Lemma \ref{lem:no_faults_G} there is a path from each vertex in $V_{(i+1,\ell)}$ to $V_{(p,r)} = t$, using only edges between $G_{i+1}$ and $G_p$. Putting everything together, we have that $s$ and $t$ are connected in $G \setminus \{e_2\}$ but are not connected in $H \setminus \{e_2\}$, contradicting the assumption that $H$ is feasible. Now suppose that $e_2$ is not in $H$. Then, $s$ and $t$ are connected in $G$ but not in $H$. This also contradicts the assumption that $H$ is feasible, and so $H$ must have all edges in $\cup_{j=0}^{p-1} S_j$.

Now suppose that $H$ is feasible, but suppose for the sake of contradiction that Property 1 in the statement of Lemma \ref{lem:partition_characterize} is not satisfied in $H_i$. This means there are fewer than three edge-disjoint paths from $V_{(i,\ell)}$ to $V_{(i,r)}$ in $H_i$, and so by Menger's theorem there is some cut $S_i'$ of size $2$ that separates $V_{(i,\ell)}$ and $V_{(i,r)}$ in $H_i$. This directly implies that $S_i'$ also separates $V_{(i,\ell)}$ from $t$ in $H_i$. Cut $S_i'$ is therefore a $(V_{(i,\ell)}, t)$-separator of size 2 with $R_i' \subset R_i$, where $R_i'$ is the set of vertices reachable from $V_{(i,\ell)}$ in $(G \setminus  \cup_{j=0}^{i-1} R_j) \setminus S_i'$. Separator $S_i$ is therefore not an important $(V_{(i,\ell)},t)$-separator in $G \setminus \cup_{j=0}^{i-1} R_j$, contradicting our decomposition construction.

Suppose again that $H$ is feasible, but now suppose for the sake of contradiction that Property 2 in the statement of Lemma \ref{lem:partition_characterize} is not satisfied in $H_i$. Without loss of generality
\footnote{WLOG because the argument is symmetric: If we choose $v_{\ell_1} \in V_{(i,\ell)}$ instead to be the vertex such that the RSND demand $(V_{(i,r)}, v_{\ell_1}, 2)$ is not satisfied in $H_i$, then the fault sets are $F = \{e\}$ (if $|V_{(i,\ell)}|=1$) and $F = \{e,f\}$ (if $|V_{(i,\ell)}|=2$), where $e$ is an edge such that $v_{\ell_1}$ and $V_{(i,r)}$ are connected in $G_i \setminus \{e\}$ but not in $H_i \setminus \{e\}$, and $f$ is the edge in $S_{i-1}$ incident on $v_{\ell_2}$.}, 
let $v_{r_1} \in V_{(i,r)}$ be a vertex such that the RSND demand $(V_{(i,\ell)}, v_{r_1}, 2)$ is not satisfied in $H_i$. That is, there exists some edge fault $e \in E$ such that there is a path from $V_{(i,\ell)}$ to $v_{r_1}$ in $G_i \setminus \{e\}$, but there is no path in $H_i \setminus \{e\}$. We will now show that there exists a fault set $F$ with $|F| \leq 2$ such that $s$ and $t$ are connected in $G \setminus F$ but are disconnected in $H \setminus F$. This would imply that $H$ is not feasible, giving a contradiction. There are two cases:
\begin{itemize}
    \item $|V_{(i,r)}| = 1$. Let $F = \{e\}$. There is no path from $V_{(i,\ell)}$ to $V_{(i,r)}$ in $H_i \setminus F$, and therefore no path from $s$ to $t$ in $H \setminus F$. There are no faults between $G_0$ and $G_{i-1}$, inclusive, so by Lemma \ref{lem:no_faults_G} there is a path from $V_{(0,\ell)} = s$ to each vertex in $V_{(i-1,r)}$ in $G \setminus F$, using only edges between $G_0$ and $G_{i-1}$. There are no faults in $S_{i-1}$, so there is also a path from $s$ to each vertex in $V_{(i,\ell)}$, using only edges between $G_0$ and $G_{i}$. As previously stated, there is a path from at least one vertex in $V_{(i,\ell)}$ to $v_{r_1} \in V_{(i,r)}$ in $G_i \setminus F$. There are no faults in $S_{i}$, so there is a path from $v_{r_1}$ to each vertex in $V_{(i+1,\ell)}$, using only edges in $S_{i}$. Finally, there are no faults between $G_{i+1}$ and $G_{p}$ inclusive, so by Lemma \ref{lem:no_faults_G} there is a path from $V_{(i+1,\ell)}$ to $V_{(p,r)} = t$ in $G \setminus F$, using only edges between $G_{i+1}$ and $G_p$. Putting everything together, there is an $s-t$ path in $G \setminus F$, but there is no $s-t$ path in $H \setminus F$, giving a contradiction to the assumption that $H$ is a feasible solution. 
    
    \item $|V_{(i,r)}| = 2$, with $V_{(i,r)} = \{v_{r_1},v_{r_2}\}$. Let $f$ denote the edge in $S_{i}$ incident on $v_{r_2}$, and set $F = \{e, f\}$. In $H \setminus F$, there is no path from $V_{(i,\ell)}$ to $V_{(i+1,\ell)}$. Therefore there is no path from $s$ to $t$ in $H \setminus F$. There are no faults between $G_0$ and $G_{i-1}$, inclusive, and no faults in $S_{i-1}$, so by Lemma \ref{lem:no_faults_G} there is a path from $s$ to each vertex in $V_{(i,\ell)}$ in $G \setminus F$, using only edges between $G_0$ and $G_i$. As previously stated, there is a path from at least one vertex in $V_{(i,\ell)}$ to $v_{r_1} \in V_{(i,r)}$ in $G_i \setminus F$. Since the single edge fault in $S_i$, $f$, is not incident on $v_{r_1}$, there is a path from $v_{r_1}$ to a vertex in $V_{(i+1,\ell)}$, only using the remaining edge in $S_i$. Finally, by Lemma \ref{lem:no_faults_G}, there is a path from each vertex in $V_{(i+1,\ell)}$ to $t$ in $G \setminus F$, which only use edges between $G_{i+1}$ and $G_p$. Putting everything together, there is an $s-t$ path in $G \setminus F$, but there is no $s-t$ path in $H \setminus F$, giving a contradiction to the assumption that $H$ is a feasible solution.
\end{itemize}
Suppose again that $H$ is a feasible, but now suppose for the sake of contradiction that Property 3 in the statement of Lemma \ref{lem:partition_characterize} is not satisfied in $H_i$. Let $v_{\ell_1} \in V_{(i,\ell)}$ and $v_{r_1} \in V_{(i,r)}$ be vertices such that the RSND demand $(v_{\ell_1}, v_{r_1}, 1)$ is not satisfied in $H_i$. This means there is a path from $v_{\ell_1}$ to $v_{r_1}$ in $G_i$, but there is no path in $H_i$. We will now show that there exists a fault set, $F$, with $|F| \leq 2$, such that $s$ and $t$ are connected in $G \setminus F$ but are disconnected in $H \setminus F$. This would mean that $H$ is not a feasible, giving a contradiction. There are three cases:
\begin{itemize}
    \item $|V_{(i,\ell)}|=|V_{(i,r)}| = 1$. Let $F$ be the empty set. There is no path from $V_{(i,\ell)}$ to $V_{(i,r)}$ in $H_i$, and therefore no $s-t$ path in $H$. $G$ is 2-edge connected, so there is a path from $s$ to $t$ in $G$. This is a contradiction to the assumption that $H$ is a feasible.
    
    \item Exactly one of $V_{(i,\ell)}$ and $V_{(i,r)}$ has size 2. Without loss of generality (since the other case is symmetric), let $V_{(i,r)} = \{v_{r_1},v_{r_2}\}$. Let $f$ denote the edge in $S_{i}$ incident on $v_{r_2}$, and set $F = \{f\}$. In $H \setminus F$, there is no path from $V_{(i,\ell)}$ to $V_{(i+1,\ell)}$. Therefore there is no $s-t$ path in $H \setminus F$. There are no faults between $G_0$ and $G_{i}$, inclusive, so by Lemma \ref{lem:no_faults_G} there is a path from $s$ to each vertex in $V_{(i,r)}$ in $G \setminus F$, using only edges between $G_0$ and $G_i$. There is also a path from $v_{r_1}$ to a vertex in $V_{(i+1,\ell)}$, only using an edge in $S_{i}$. Finally, by Lemma \ref{lem:no_faults_G}, there is a path from each vertex in $V_{(i+1,\ell)}$ to $t$, using only edges between $G_{i+1}$ and $G_p$. Putting everything together, there is an $s-t$ path in $G \setminus F$, but there is no $s-t$ path in $H \setminus F$, giving a contradiction to the assumption that $H$ is a feasible solution.
    
    \item $V_{(i,\ell)}$ and $V_{(i,r)}$ have size 2. Let $V_{(i,\ell)} = \{v_{\ell_1},v_{\ell_2}\}$ and $V_{(i,r)} = \{v_{r_1},v_{r_2}\}$. We also let $f_1$ denote the edge in $S_{i-1}$ incident on $v_{\ell_2}$, and let $f_2$ denote the edge in $S_{i}$ incident on $v_{r_2}$. Set $F = \{f_1, f_2\}$. In $H \setminus F$, there is no path from $V_{(i-1,r)}$ to $V_{(i+1,\ell)}$. Therefore, there no $s-t$ path in $H \setminus F$. There are no faults between $G_0$ and $G_{i-1}$, inclusive, so by Lemma \ref{lem:no_faults_G} there is a path from $s$ to each vertex in $V_{(i-1,r)}$ in $G \setminus F$, using only edges between $G_0$ and $G_{i-1}$. There is only one fault in $S_{i-1}$, so there is a path from one vertex in $V_{(i-1,r)}$ to $v_{\ell_1} \in V_{(i,\ell)}$, only using the remaining edge in $S_{i-1}$. There is also a path in $G_i \setminus F$ from $v_{\ell_1}$ to $v_{r_1}$, by our assumption. Finally, $v_{r_1}$ has a path to a vertex in $V_{(i+1,\ell)}$, which only uses the remaining edge in $S_i$. By Lemma \ref{lem:no_faults_G}, each vertex in $V_{(i+1,\ell)}$ has a path to $t$, using only edges between $G_{i+1}$ and $G_p$. Putting everything together, there is an $s-t$ path in $G \setminus F$, but there is no $s-t$ path in $H \setminus F$, giving a contradiction to the assumption that $H$ is a feasible solution.
\end{itemize}

\subsection{If}
Now we prove that the properties stated in Lemma~\ref{lem:partition_characterize} are sufficient. Suppose all edges in $\cup_{j=0}^{p-1} S_j$ are in $H$, and suppose all 3 properties in the statement of Lemma \ref{lem:partition_characterize} are met for all $H_i$. For all possible fault sets $F$, with $|F| \leq 2$, we will show that if $s$ and $t$ are connected in $G \setminus F$, they must also be connected in $H \setminus F$, and therefore $H$ is feasible. Let $F = \{f_1, f_2\}$ be the fault set, and suppose $s$ and $t$ are connected in $G \setminus F$. We will first show that if there are no faults between $H_i$ and $H_j$, then in $H$, each vertex in $V_{(j,r)}$ is reachable from $V_{(i,\ell)}$, and each vertex in $V_{(i,\ell)}$ has a path to $V_{(j,r)}$. 

\begin{lemma}
\label{lem:no_faults_H}
Let $H$ be a subgraph of $G$, and suppose that all properties in the statement of Lemma \ref{lem:partition_characterize} are met by $H_i$ for all $i$. Consider the subchain that starts at $H_i$ and ends at $H_j$, where $i \leq j$. Then, there is a path from each vertex in $V_{(i,\ell)}$ to the vertex set $V_{(j,r)}$, and a path from the vertex set $V_{(i,\ell)}$ to each vertex in $V_{(j,r)}$ in $H$. These paths only use edges within the subchain that starts at $H_i$ and ends at $H_j$.
\end{lemma}
\begin{proof}
We have shown in Lemma \ref{lem:no_faults_G} that if subgraph $G_k$ is in the subchain that begins at $G_i$ and ends at $G_j$, then each vertex in $V_{(k,r)}$ is reachable from $V_{(k,\ell)}$, and $V_{(k,r)}$ is reachable from each vertex in $V_{(k,\ell)}$. For each subgraph $H_k$ in the subchain from $H_i$ to $H_j$, the RSND demands $\{(u,v,1) : (u,v) \in V_{(i,\ell)} \times V_{(i,r)}\}$ are satisfied (Property 3 of Lemma \ref{lem:partition_characterize}). That is, if $v_{k\ell} \in V_{(k,\ell)}$ and $v_{kr} \in V_{(k,r)}$ are connected in $G$, then they are connected in $H$. Therefore, we also have that in $H_k$, each vertex in $V_{(k,r)}$ is reachable from $V_{(k,\ell)}$, and $V_{(k,r)}$ is reachable from each vertex in $V_{(k,\ell)}$. This also implies that in $H$, each vertex in $V_{(j,r)}$ is reachable from $V_{(i,\ell)}$, and each vertex in $V_{(i,\ell)}$ has a path to $V_{(j,r)}$, using only edges within the subchain from $H_i$ to $H_j$. This can been seen via a proof by induction on the number of components into the chain.
\end{proof}

For all $i$, let component $R_i$ with separator $S_{i-1}$ (if it exists) together be the \textit{$i$th section} of the chain. Section $i$ is considered earlier in the chain than section $j$ if $i < j$. We divide the $s-t$ 2-chain into subchains as follows. Suppose edge fault $f_1$ is in section $i$, and $f_2$ is in section $j$, where $i \leq j$. Then the first subchain, $L_1$, begins at $R_0$ and ends at $R_i$, inclusive. The second subchain, $L_2$, begins at $R_{0}$ and ends at $R_k$, inclusive. If $i = k$, then $L_1$ and $L_2$ are the same subchain. Let $L_\alpha^G$ denote the subchain of $G$ induced by $L_\alpha$, and let $L_\alpha^H$ denote the subchain of $H$ induced by $L_\alpha$. 

We now prove a series of lemmas. Lemma \ref{lem:1_fault} states which vertices at the end of $L_1$ are reachable in $L_1^H \setminus F$ given what is reachable in $L_1^G \setminus F$. Lemma \ref{lem:2nd_1_fault} uses Lemma \ref{lem:1_fault} to prove that when the two edge faults are in different sections of the chain and $s$ and $t$ are connected in $G \setminus F$, then $s$ and $t$ are connected in $H \setminus F$.  Lemma \ref{lem:2_fault} shows that when both edge faults are in the same section, $s$ and $t$ are connected in $H \setminus F$ if they are connected in $G \setminus F$

\begin{lemma}
\label{lem:1_fault}
Let $F = \{f_1, f_2\}$ be the fault set, where $f_1$ and $f_2$ are in sections $i$ and $j$, respectively, with $i < j$. Suppose there is an $s-t$ path in $G \setminus F$. Consider subchain $L_1$, as defined above. Then, every vertex in $V_{(i,r)}$ that is reachable from $s$ in $L_1^G \setminus F$ is also reachable from $s$ in $L_1^H \setminus F$.
\end{lemma}

\begin{proof}
There are no faults from $R_0$ to $R_{i-1}$, so by Lemmas \ref{lem:no_faults_G} and \ref{lem:no_faults_H}, in $G$ and in $H$, there is a path from $s$ to each vertex in $V_{(i-1,r)}$, using only edges between $G_0$ and $G_{i-1}$, and between $H_0$ and $H_{i-1}$, respectively. There are two cases: $f_1 \in R_i$ and $f_1 \in S_{i-1}$. 

\begin{itemize}
    \item Suppose first that $f_1 \in R_i$. There is a path from $s$ to $t$ in $G \setminus F$, so there must also be a path from $V_{(i,\ell)}$ to at least one vertex in $V_{(i,r)}$ in $G_i \setminus F$. Suppose there is a path in $G_i \setminus F$ from $V_{(i,\ell)}$ to $v_{ir} \in V_{(i,r)}$; that is, there is a path from $V_{(i,\ell)}$ to $v_{ir}$ after the removal of $f_1$. Property 2 from Lemma \ref{lem:partition_characterize} is satisfied in $H_i$. Therefore, in $H_i$, $V_{(i,\ell)}$ and $v_{ir}$ must also be connected after the removal of $f_1$. Thus, if a vertex in $V_{(i,r)}$ is reachable from $V_{(i,\ell)}$ in $G_i \setminus F$, then that same vertex is reachable from $V_{(i,\ell)}$ in $H_i \setminus F$. Note that since there are no faults in $S_{i-1}$, we can also say that if a vertex $u \in V_{(i,r)}$ is reachable from the vertex set $V_{(i-1,r)}$ in $L^G_1 \setminus F$, then $u$ is reachable from $V_{(i-1,r)}$ in $L^H_1 \setminus F$. 
    
    \item Now suppose that $f_1 \in S_{i-1}$. Without loss of generality, let $f_1$ be incident on vertex $v_{i\ell_1} \in V_{(i,\ell)}$. 
    If $|V_{(i,\ell)}| = 1$, then in $G$ and in $H$, $v_{i\ell_1}$ is adjacent to both vertices in $V_{(i-1,r)}$. Therefore, after the removal of $f_1$, $v_{i\ell_1}$ is still adjacent to a vertex in $V_{(i-1,r)}$ in $G \setminus F$ and in $H \setminus F$. Additionally, there are no faults in $R_i$, so by Lemmas \ref{lem:no_faults_G} and \ref{lem:no_faults_H}, there is a path from $V_{(i,\ell)} = \{v_{i\ell_1}\}$ to each vertex in $V_{(i,r)}$ in $G_i \setminus F$ and in $H_i \setminus F$. Putting it all together, in $L^G_1 \setminus F$ and in $L^H_1 \setminus F$, there is a path from $V_{(i-1,r)}$ to each vertex in $V_{(i,r)}$ that uses only edges in $S_{i-1}$ and in $R_i$.
    If $|V_{(i,\ell)}| = 2$, then let $V_{(i,\ell)} = \{v_{i\ell_1}, v_{i\ell_2} \} $. Recall that $f_1 \in S_{i-1}$ is incident on $v_{i\ell_1} \in V_{(i,\ell)}$. We therefore have that any path into $V_{(i,\ell)}$ from $V_{(i-1,r)}$ must visit $v_{i\ell_2}$. Since there is an $s-t$ path in $G \setminus F$, there must also be a path from $v_{i\ell_2}$ to $V_{(i,r)}$ in $G_i \setminus F$. Property 3 from Lemma \ref{lem:partition_characterize} is satisfied in $H_i$. Therefore, if there is a path from $v_{i\ell_2}$ to a vertex $v_{ir_1} \in V_{(i,r)}$ in $G_i$, then there is a path from $v_{i\ell_2}$ to $v_{ir_1}$ in $H_i$. Putting everything together, we have the following: If there is a path from $V_{(i-1,r)}$ to a vertex $u \in V_{(i,r)}$ in $L^G_1 \setminus F$, then there is a path from $V_{(i-1,r)}$ to $u$ in $L^H_1 \setminus F$. 
\end{itemize}

We have shown that if section $i$ has exactly one fault, then every vertex in $V_{(i,r)}$ that is reachable from $V_{(i-1,r)}$ in $L_1^G \setminus F$ is also reachable from $V_{(i-1,r)}$ in $L_1^H \setminus F$. Recall that in $G$ and in $H$, there is a path from $s$ to each vertex in $V_{(i-1,r)}$ that only uses edges between $G_0$ and $G_{i-1}$, and between $H_0$ and $H_{i-1}$, respectively. We therefore have that every vertex in $V_{(i,r)}$ that is reachable from $s$ in $L^G_1 \setminus F$ is also reachable from $s$ in $L^H_1 \setminus F$.
\end{proof}

In the following lemma, we will use Lemma \ref{lem:1_fault} to prove that if the edge faults in $F$ are in different sections of the $s-t$ 2-chain, then there is an $s-t$ path in $G \setminus F$ if and only if there is an $s-t$ path in $H \setminus F$.

\begin{lemma}
\label{lem:2nd_1_fault}
 Let $F = \{f_1, f_2\}$ be the fault set, where $f_1$ and $f_2$ are in sections $i$ and $j$, respectively, with $i<j$. Suppose there is an $s-t$ path in $G \setminus F$. Consider subchain $L_2$, as defined above. Then, at least one vertex in $V_{(j,r)}$ is reachable from $s$ in $L^H_2 \setminus F$, and there is an $s-t$ path in $H \setminus F$.
\end{lemma}

\begin{proof}
There is an $s-t$ path in $G \setminus F$, so at least one vertex in $V_{(i,r)}$ must be reachable from $s$ in $L^G_1 \setminus F$. Let $v_{ir_1}$ be this vertex. By Lemma \ref{lem:1_fault}, we also have that $v_{ir_1}$ is reachable from $s$ in $L^H_1 \setminus F$.  There are no faults between sections $i$ and $j$, \textit{not} inclusive, so using Lemmas \ref{lem:no_faults_G} and \ref{lem:no_faults_H}, we can say that in both $G \setminus F$ and in $H \setminus F$, there is a path from $v_{ir_1}$ to $V_{(j-1, r)}$, using only edges in $S_i$ and in the subchain from $G_{i+1}$ to $G_{j-1}$, or in the subchain from $H_{i+1}$ to $H_{j-1}$, respectively. Additionally, Property 3 of Lemma \ref{lem:partition_characterize} is met for all $H_i$. Therefore, if $G$ has a path from $v_{ir_1}$ to a particular vertex $u \in V_{(j-1, r)}$ that only uses edges in $S_i$ and in the subchain from $G_{i+1}$ to $G_{j-1}$, then $H$ also has a path from $v_{ir_1}$ to $u$ that only uses edges in $S_i$ and in the subchain from $H_{i+1}$ to $H_{j-1}$. We therefore have that every vertex in $V_{(j-1, r)}$ that is reachable from $s$ using only edges between $G_0$ and $G_{j-1}$ is also reachable from $s$ using only edges between $H_0$ and $H_{j-1}$. Now, we have two cases: $f_2$ is in $R_j$ or $f_2$ is in $S_{j-1}$.
\begin{itemize}
    \item We first consider the case with $f_2 \in R_j$. There is an $s-t$ path in $G \setminus F$, and let $P$ be such a path. There must be at least one vertex in $V_{(j-1, r)}$ that is in $P$ in $G \setminus F$ (otherwise, $P$ would not be an $s-t$ path in $G \setminus F$). Let $v_{j-1,r}$ be such a vertex. As proved in the first paragraph of this proof, we also have that there is a path from $s$ to $v_{j-1,r}$ in $H \setminus F$, using only edges between $H_0$ and $H_{j-1}$. Let $v_{j\ell}$ be the vertex in $V_{(j, \ell)}$ that is in $P$ and adjacent to $v_{j-1,r}$. There are no faults in $S_{j-1}$, so there is also a path from $s$ to $v_{j\ell}$ that only uses edges between $G_0$ and $G_{j-1}$, or between $H_0$ and $H_{j-1}$, and an edge in $S_{j-1}$. Since $P$ is an $s-t$ path in $G \setminus F$ that uses $v_{j\ell}$, there is a path from $v_{j\ell}$ to $t$ in $G \setminus F$ that only uses edges between $G_{j}$ to $G_{p}$. This also means there is a path from $v_{j\ell}$ to $V_{(j,r)}$ in $G_j \setminus F$. Since Property 2 from Lemma \ref{lem:partition_characterize} is met on subgraph $H_j$, there must also be a path from $v_{j\ell}$ to $V_{(j,r)}$ in $H_j \setminus F$. There are no faults in $S_j$, implying that in $H \setminus F$, there is also a path from $v_{j\ell}$ to a vertex in $V_{(j+1,\ell)}$, using only edges in $H_j$ and an edge in $S_j$.
    
    \item We now consider the case with $f_2 \in S_{j-1}$. At least one vertex in $V_{(j-1,r)}$ is reachable from $s$ in $G \setminus F$, using only edges between $G_0$ and $G_{j-1}$ (otherwise there would be no $s-t$ path in $G \setminus F$). Suppose first that all vertices in $V_{(j-1,r)}$ are reachable from $s$ in $G \setminus F$, using only edges between $G_0$ and $G_{j-1}$. Then, each vertex in $V_{(j-1,r)}$ is reachable from $s$ in $H \setminus F$ as well, using only edges between $H_0$ and $H_{j-1}$ (proved in paragraph 1 of this proof). There is one remaining edge in $S_{j-1}$, so in $G \setminus F$ and in $H \setminus F$, there must be a path from $V_{(j-1,r)}$ to one vertex $v_{j\ell} \in V_{(j, \ell)}$ that only uses the remaining edge in $S_{j-1}$. 
    Now suppose exactly one vertex in $V_{(j-1,r)}$ is reachable from $s$ in $G \setminus F$, using only edges between $G_0$ and $G_{j-1}$.  Let $v_{j-1,r}$ be this vertex. Therefore (proved in paragraph 1 of this proof), $v_{j-1,r}$ is also reachable from $s$ in $H \setminus F$, using only edges between $H_0$ and $H_{j-1}$. We can assume that $|V_{(j-1,r)}| = 2$, since the $|V_{(j-1,r)}| = 1$ case is covered by the previous argument. If $f_2$ is the edge in $S_{j-1}$ that is incident on $v_{j-1,r}$, then there is no $s-t$ path in $G \setminus F$. This contradicts the assumption that there is an $s-t$ path in $G \setminus F$. Therefore, there must be a path in $G \setminus F$, and in $H \setminus F$, from $v_{j-1,r}$ to a vertex $v_{j\ell} \in V_{(j, \ell)}$, using only the non-fault edge in $S_{j-1}$.  
\end{itemize}
We have shown that in $H \setminus F$, there is a path from $s$ to at least one vertex $v_{j\ell}$ in $V_{(j, \ell)}$ that only uses edges between $H_0$ and $H_{j-1}$ and in $S_{j-1}$. Additionally, there are no faults between $H_{j}$ and $H_p$, so by Lemma \ref{lem:no_faults_H}, $v_{j\ell}$ has a path to $V_{(p,r)} = \{t\}$ that only uses edges between $H_{j}$ and $H_p$. Putting it all together, in $H \setminus F$, there is a path from $s$ to $t$. \qedhere
\end{proof}

Now we will show that if the edge faults in $F$ are in the same section of the $s-t$ 2-chain, then there is an $s-t$ path in $G \setminus F$ if and only if there is an $s-t$ path in $H \setminus F$

\begin{lemma}
\label{lem:2_fault}
Let $F = \{f_1, f_2\}$ be the fault set, where $f_1$ and $f_2$ are both in section $i$. Suppose there is an $s-t$ path in $G \setminus F$. Consider subchain $L_1$, as defined above. Then, at least one vertex in $V_{(i,r)}$ is reachable from $s$ in $L^H_1 \setminus F$, and there is an $s-t$ path in $H \setminus F$.
\end{lemma}
\begin{proof}
There are no faults between $R_0$ and $R_{i-1}$, so by Lemmas \ref{lem:no_faults_G} and \ref{lem:no_faults_H}, in $G \setminus F$ and in $H \setminus F$, each vertex in $V_{(i-1,r)}$ is reachable from $s$, using only edges between $G_0$ and $G_{i-1}$, or between $H_0$ and $H_{i-1}$, respectively. We have two cases: Either both $f_1$ and $f_2$ are in $R_i$ or, without loss of generality, $f_1 \in S_{i-1}$ and $f_2 \in R_i$. Note that $f_1$ and $f_2$ cannot both belong in $S_{i-1}$ because this would contradict the assumption that there is an $s-t$ path in $G \setminus F$. 
\begin{itemize}
    \item First consider the case with $f_1$ and $f_2$ in $R_i$. In $G_i$, there are at least 3 edge-disjoint paths from $V_{(i,\ell)}$ to $V_{(i,r)}$; otherwise, by Menger's theorem, there is a cut of size at most two that separates $V_{(i,\ell)}$ from $V_{(i,r)}$. This would mean that $S_i$ cannot be an important $(V_{(i,\ell)},t)$-separator of $G \setminus \cup_{j=0}^{i-1} R_j $. Since Property 1 in Lemma \ref{lem:partition_characterize} is met in $H_i$, there must also be 3 or more edge-disjoint paths from $V_{(i,\ell)}$ to $V_{(i,r)}$ in $H_i$. Therefore, there is a path from $V_{(i,\ell)}$ to $V_{(i,r)}$ in $H_i \setminus F$. There are no faults in $S_{i-1}$, so we can also say there is a path from $V_{(i-1,r)}$ to $V_{(i,r)}$ in $L_1^H \setminus F$. 
    
    \item Next, consider the case with $f_1 \in S_{i-1}$ and $f_2 \in R_i$. Suppose without loss of generality that $f_1$ is incident on vertex $v_{i\ell_1} \in V_{(i, \ell)}$. We first consider the case with $|V_{(i, \ell)}| = 1$. Since $v_{i\ell_1}$ is adjacent to both vertices in $V_{(i-1, r)}$, there is still a path from $V_{(i-1,r)}$ to $v_{i\ell_1}$ in $H\setminus F$ using only the remaining edge in $S_{i-1}$. Additionally, there are at least 3 edge-disjoint paths from $v_{i\ell_1}$ to $V_{(i,r)}$ in $H_i$. There is only one fault, $f_2$, in $R_i$. Thus, in $H_i \setminus F$, there must be a path from $v_{i\ell_1}$ to $V_{(i,r)}$. 
    Now we consider the case with $|V_{(i, \ell)}| = 2$. Let $V_{(i, \ell)} = \{ v_{i\ell_1}, v_{i\ell_2} \} $. Since $f_1 \in S_{i-1}$ is incident on $v_{i\ell_1}$, any path from $V_{(i-1,r)}$ to $V_{(i, \ell)}$ in $G \setminus F$ and in $H \setminus F$ must use the edge incident on $v_{i\ell_2}$, and must visit $v_{i\ell_2}$. Since there is a path from $s$ to $t$ in $G \setminus F$, there must also be a path from $v_{i\ell_2}$ to $V_{(i,r)}$ in $G_i \setminus F$. Property 2 in Lemma \ref{lem:partition_characterize} is satisfied in $H_i$. Therefore, if there is a path from $v_{i\ell_2}$ to $V_{(i,r)}$ in $G_i \setminus F$, then there must also be a path from $v_{i\ell_2}$ to $V_{(i,r)}$ in $H_i \setminus F$. Therefore, we have a path from $V_{(i,\ell)}$ (and from $V_{(i-1,r)}$)) to $V_{(i,r)}$ in $H \setminus F$.
\end{itemize}
We have shown that there is a path from $V_{(i-1,r)}$ to $V_{(i,r)}$ in $L_1^H \setminus F$. Additionally, there are no faults in $S_i$, so in $H \setminus F$ there is also a path from each vertex in $V_{(i,r)}$ to $V_{(i+1,\ell)}$ that only uses an edge in $S_i$. Finally, there are no faults between $H_{i+1}$ and $H_p$, inclusive, so by Lemma \ref{lem:no_faults_H}, each vertex in $V_{(i+1,\ell)}$ has a path to $V_{(p,r)} = \{t\}$, using only edges between $H_{i+1}$ and $H_p$. Putting everything together, in $H \setminus F$, there is a path from $s$ to $t$.
\end{proof}

Lemmas~\ref{lem:2nd_1_fault} and \ref{lem:2_fault} together clearly imply the ``if'' direction of Lemma~\ref{lem:partition_characterize}.

\section{Proofs from Section~\ref{sec:3RSND-alg}} \label{app:3RSND-alg}
\begin{proof}[Proof of Lemma~\ref{lem:3RSND-feasible}]
For each $i$, let $H_{i}$ denote the subgraph of $H$ induced by $R_i$ and let $G_i$ denote the subgraph of $G$ induced by $R_i$.  We will show that $H$ satisfies the conditions of Lemma~\ref{lem:partition_characterize}, and hence is feasible.  By construction, $H$ contains all edges $S$ in the important separators.  

To show property 1 of Lemma~\ref{lem:partition_characterize}, recall that in each $H_i$ we included the edges selected via a min-cost flow algorithm from $V_{(i,\ell)}$ to $V_{(i,r)}$ with flow $3$.  Since there are at least three edge-disjoint paths from $V_{(i,\ell)}$ to $V_{(i,r)}$ in $G_i$ (by Lemma~\ref{lem:partition_characterize} since $G$ itself is feasible), this will return three edge-disjoint paths from $V_{(i,\ell)}$ to $V_{(i,r)}$.  Hence $H$ satisfies the first property.

Property 2 of Lemma~\ref{lem:partition_characterize} is direct from the algorithm, since $H_i$ includes the output of the 2-RSND algorithm from Theorem~\ref{thm:2RSND} when run on demands $\left\{(V_{(i,\ell)},v_r,2) : v_r \in V_{(i,r)}\right\} \cup \left\{(V_{(i,r)},v_\ell,2) : v_\ell \in V_{(i,\ell)}\right\}$.  Similarly, within each component $H_{i}$ in the $s-t$ 2-chain, the edges selected by the Steiner Forest algorithm form a path from vertex $v_{\ell} \in V_{(i,\ell)}$ to vertex $v_r \in V_{(i,r)}$ if $v_{\ell}$ and $v_r$ are connected in $G$. This satisfies Property 3 in Lemma \ref{lem:partition_characterize}. 
\end{proof}

\begin{proof}[Proof of Lemma~\ref{lem:3RSND-cost}]
Let $H_{i} = H[R_i]$ be the subgraph of $H$ induced by $R_i$, and let $H_{i}^* = H^*[R_i]$ be the subgraph of the optimal solution induced by $R_i$. We also let $H_{i}^M$ denote the subgraph of $H_i$ returned by the min-cost flow algorithm run on $R_i$ (i.e., the set of edges with non-zero flow), let $H_{i}^{N^1}$ and $H_{i}^{N^2}$ denote the subgraphs returned by the first and second 2-approximation 2-RSND algorithms run on $R_i$, respectively, and we let $H_{i}^{F}$ denote the subgraph of $H_i$ returned by the Steiner Forest algorithm on $R_i$. We also let $M_{i}^{*}$ be the optimal solution to the Minimum-Cost Flow instance on $R_i$, let $N_{i}^{1^*}$ and $N_{i}^{2^*}$ be the optimal solutions to the first and second 2-RSND instances on $R_i$, respectively, and let $F_{i}^{*}$ be the optimal solution to the Steiner Forest instance on $R_i$. Subgraph $H_{i}^M$ is given by an exact algorithm, subgraphs $H_{i}^{N^1}$ and $H_{i}^{N^2}$ are given by a 2-approximation algorithm, and subgraph $H_{i}^{F}$ is given by a $\left(2-\frac{1}{k}\right)$-approximation algorithm. Note that there are at most $4$ terminal pairs in the Steiner Forest instance, so $k \leq 4$ and the algorithm gives a $\frac{7}{4}$-approximation. Hence we have the following for each component $R_i$:
\begin{align*}
    w(H_{i}^M) &= w(M_{i}^{*}) &  w(H_{i}^{N^1}) &\leq 2 w(N_{i}^{1^*}) \\
    w(H_{i}^{N^2}) &\leq 2 w(N_{i}^{2^*}) & w(H_{i}^{F}) &\leq \frac{7}{4} w(F_{i}^{*}). 
\end{align*}
Summing over all components in the chain, we get the following:
\begin{align*}
    \sum_{i=0}^{p} w(H_{i}^M) &= \sum_{i=0}^{p} w(M_{i}^{*}) &
    \sum_{i=0}^{p} w(H_{i}^{N^1}) &\leq 2 \cdot \sum_{i=0}^{p} w(N_{i}^{1^*}) \\
    \sum_{i=0}^{p} w(H_{i}^{N^2}) &\leq 2 \cdot \sum_{i=0}^{p} w(N_{i}^{2^*}) &
    \sum_{i=0}^{p} w(H_{i}^{F}) &\leq \frac{7}{4}  \cdot \sum_{i=0}^{p} w(F_{i}^{*}). 
\end{align*}
We also have that
\begin{align*}
    w(H_{i}) &\leq w(H_{i}^M) + w(H_{i}^{N^1}) + w(H_{i}^{N^2})+ w(H_{i}^{F}).
\end{align*}
Summing over all components in the chain and then substituting the above, we get the following:
\begin{align*}
    \sum_{i=0}^{p} w(H_{i}) &\leq \sum_{i=0}^{p} w(H_{i}^M) + \sum_{i=0}^{p} w(H_{i}^{N^1}) + \sum_{i=0}^{p} w(H_{i}^{N^2}) + \sum_{i=0}^{p} w(H_{i}^{F}) \\
    &\leq \sum_{i=0}^{p} w(M_{i}^{*}) + 2 \cdot \sum_{i=0}^{p} w(N_{i}^{1^*}) + 2 \cdot \sum_{i=0}^{p} w(N_{i}^{2^*}) + \frac{7}{4}  \cdot \sum_{i=0}^{p} w(F_{i}^{*}). 
\end{align*}
The optimal subgraph $H^*$ is a feasible solution, so by Lemma \ref{lem:partition_characterize}, each property in the lemma statement must be met on subgraph $H^*_i$ for all $i$. For all properties in the lemma to be satisfied on $H^*_i$, the set of edges $E(H^*_i)$ must be a feasible solution to each of the Minimum-Cost Flow, 2-RSND, and Steiner Forest instances on $R_i$. Therefore, the cost of $H^*_i$ must be at least the cost of the optimal solution to each of the Minimum-Cost Flow, 2-RSND, and Steiner Forest instances. We therefore have the following:
\begin{align*}
    \sum_{i=0}^{p} w(H_{i}) &\leq \sum_{i=0}^{p} w(H^*_{i}) + 2 \cdot \sum_{i=0}^{p} w(H^*_{i}) + 2 \cdot \sum_{i=0}^{p} w(H^*_{i}) + \frac{7}{4}  \cdot \sum_{i=0}^{p} w(H^*_{i}) \leq \frac{27}{4} \cdot \sum_{i=0}^{p} w(H_{i}^*).
\end{align*}
Finally, we must account for the edges between components in the $s-t$ 2-chain. Let $S$ be the set of edges between components in the chain that are included in the algorithm solution, and let $S^*$ be the set of edges between components included in the optimal solution.  By Lemma~\ref{lem:partition_characterize}, any feasible solution must include all edges between the components of the chain. We therefore have that $S = S^*$ and we get the following:
\begin{align*}
    w(H) = \sum_{i=0}^{p} w(H_{i}) + w(S) &\leq \frac{27}{4}  \sum_{i=0}^{p} w(H_{i}^*) + w(S) \leq \frac{27}{4} \left( \sum_{i=0}^{p} w(H_{i}^*) + w(S^*)\right) \leq \frac{27}{4} w(H^*). \qedhere
\end{align*}
\end{proof}

\fi

\end{document}